\newcommand{\ignore}[1]{}
\renewcommand{\abstract}{\section*{Abstract}}
\newcommand{\E}{\mathrm{E}}
\newtheorem{theorem}{Theorem}[section]
\newtheorem{lemma}[theorem]{Lemma}
\newtheorem{claim}[theorem]{Claim}
\newtheorem{definition}[theorem]{Definition}
\newtheorem{corollary}[theorem]{Corollary}
\newcommand{\bd}{{\bf d}}
\newcommand{\bI}{{\bf I}}
\newcommand{\cC}{{\cal C}}
\newcommand{\cD}{{\cal D}}
\newcommand{\cE}{{\cal E}}
\newcommand{\cF}{{\cal F}}
\newcommand{\cU}{{\cal U}}
\newcommand{\cW}{{\cal W}}
\newcommand{\cZ}{{\cal Z}}
\newcommand{\EX}{\hbox{\bf E}}
\newcommand{\Sec}[1]{\hyperref[sec:#1]{\S\ref*{sec:#1}}} 
\newcommand{\Eqn}[1]{\hyperref[eq:#1]{(\ref*{eq:#1})}} 
\newcommand{\Fig}[1]{\hyperref[fig:#1]{Fig.\,\ref*{fig:#1}}} 
\newcommand{\Tab}[1]{\hyperref[tab:#1]{Tab.\,\ref*{tab:#1}}} 
\newcommand{\Thm}[1]{\hyperref[thm:#1]{Thm.\,\ref*{thm:#1}}} 
\newcommand{\Lem}[1]{\hyperref[lem:#1]{Lem.\,\ref*{lem:#1}}} 
\newcommand{\Prop}[1]{\hyperref[prop:#1]{Prop.~\ref*{prop:#1}}} 
\newcommand{\Cor}[1]{\hyperref[cor:#1]{Cor.~\ref*{cor:#1}}} 
\newcommand{\Def}[1]{\hyperref[def:#1]{Defn.~\ref*{def:#1}}} 
\newcommand{\Alg}[1]{\hyperref[alg:#1]{Alg.~\ref*{alg:#1}}} 
\newcommand{\Ex}[1]{\hyperref[ex:#1]{Ex.~\ref*{ex:#1}}} 
\newcommand{\Clm}[1]{\hyperref[clm:#1]{Claim~\ref*{clm:#1}}} 
\newcommand{\algo}{{{\sc MinBucket}}}
\newcommand{\cm}{ECM}
\title{Why do simple algorithms for triangle enumeration work in the real world?}
\author{Jonathan W. Berry \thanks{Sandia National Laboratories, Albuquerque. \{\texttt{jberry},\texttt{caphill}\}\texttt{@sandia.gov}.
This manuscript has been authored by Sandia Corporation under Contract No. DE-AC04-94AL85000 with the U.S. Department of Energy.  The United States Government retains and the publisher, by accepting the article for publication, acknowledges that the United States Government retains a non-exclusive, paid-up, irrevocable, world-wide license to publish or reproduce the published form of this manuscript, or allow others to do so, for United States Government purposes.} \and Luke A. Fostvedt \thanks{Iowa State University, Ames, Iowa 50011. \{\texttt{fostvedt},\texttt{dnordman},\texttt{agw}\}\texttt{@iastate.edu}} \and
Daniel J. Nordman \footnotemark[2]  \and Cynthia A. Phillips \footnotemark[1] \and C. Seshadhri \thanks{Sandia National Laboratories, Livermore. \texttt{scomand@sandia.gov}} \and Alyson G. Wilson \thanks{North Carolina State University. \{\texttt{alyson\_wilson@ncsu.edu}\}}
}
\date{}
\begin{document}
\maketitle
\thispagestyle{empty}

\abstract{Listing all triangles is a fundamental graph operation.  Triangles can have important interpretations
in real-world graphs, especially social and other interaction networks.
Despite the lack of provably efficient (linear, or slightly super-linear) worst-case algorithms for this problem, practitioners
run simple, efficient heuristics to find all triangles in graphs with millions of vertices.
How are these heuristics exploiting
the structure of these special graphs to provide major speedups in running time?

We study one of the most prevalent algorithms used by practitioners. A trivial algorithm
enumerates all paths of length $2$, and checks if each such path is incident to a triangle.
A good heuristic is to enumerate only those paths of length $2$
where the middle vertex has the lowest degree.
It is easily implemented and is empirically known to give remarkable speedups over the trivial algorithm.

We study the behavior of this algorithm over graphs with heavy-tailed
degree distributions, a defining feature of real-world graphs. The
erased configuration model (ECM) efficiently generates a graph
with asymptotically (almost) any desired degree sequence.
We show that the expected running time of this algorithm
over the distribution of graphs created by the ECM is controlled by the $\ell_{4/3}$-norm of the degree
sequence.
Norms of the degree sequence are a measure of the heaviness of the tail, and it is
precisely this feature that allows non-trivial speedups of simple
triangle enumeration algorithms.  As a corollary of our main theorem,
we prove expected linear-time performance for degree sequences
following a power law with exponent $\alpha \geq 7/3$, and non-trivial
speedup whenever $\alpha \in (2,3)$.  }

\newpage

\section{Introduction} Finding triangles in graphs is a classic theoretical
problem with numerous practical applications. The recent explosion of
work on social networks has led to a great interest in fast algorithms
to find triangles in graphs. The social sciences and physics communities often
study triangles in real networks and use them to reason about underlying social
processes~\cite{Co88,Po98,WaSt98, Burt04, Burt07, FoDeCo10}.
Much of the information about triangles in the last four papers is determined by a complete enumeration of all triangles in a (small) graph.
Triangle enumeration is also a fundamental subroutine for other more complex algorithmic tasks \cite{BerryHLP11, FudosH97}.

From a theoretical perspective, Itai and Rodeh \cite{ItRo78} gave algorithms for triangle finding
in $O(n^\omega)$ time (where $n$ is the number of vertices and $\omega$ is the matrix
multiplication constant) using fast matrix multiplication.
Vassilevska Williams and Williams \cite{VaWi10} show deep connections between matrix multiplication and
(edge-weighted) triangle enumeration.  But much of this
work is focused on dense graphs. Practitioners usually deal with massive sparse graphs with large variance in degrees,
where sub-quadratic time algorithms can be trivially obtained, but are still too slow to run.

Practioners enumerate triangles on massive graphs (with millions of vertices) using fairly
simple heuristics, which are often easily parallelizable.
This work is motivated by the following question: \emph{can we theoretically explain why simple algorithms for triangle enumeration work in the real world?}

Consider a trivial algorithm. Take an undirected graph with $n$ vertices, $m$ edges, and
degree sequence $d_1, d_2, \ldots, d_n$ (so the degree of vertex $v$ is
$d_v$). Call a path of length $2$ ($P_2$)
\emph{closed} if it participates in a triangle and \emph{open} otherwise. Simply enumerate all
$P_2$s and output the closed ones. The total running time is $\Theta(\sum_v d^2_v)$
(assume that checking if a $P_2$ is closed can be done in constant time), since every $P_2$ involves
a pair of neighbors for the middle vertex.
We will henceforth refer to this as \emph{the} {trivial algorithm}.
A simple heuristic is to only enumerate paths where the middle vertex has the lowest
degree among of the $3$ vertices in the path.
We denote this algorithm by \algo.

\begin{asparaenum}
\item Create $n$ empty buckets $B_1, B_2, \ldots, B_n$.
\item For each edge $(u,v)$: if $d_u \leq d_v$, place it in $B_u$, otherwise place it in $B_v$.  Break ties consistently.
\item For each bucket $B_v$: iterate over all $P_2$s formed by edges in $B_v$ and output closed ones.
\end{asparaenum}
\algo{} is quite common in practice (sometimes taking the somewhat
strange name \emph{nodeIterator++})
and has clean parallel implementations with no
load balancing issues \cite{ScWa05,Co09,SuVa11}. For such simple algorithms, the total work pretty much determines the parallel runtime.
For example, it would take $n$ processors with perfect
speed up running a $\Theta(n^2)$-work algorithm to compete with
a single processor running a $\Theta(n)$-work algorithm.

\algo{} is often the algorithm of choice for
triangle enumeration because of its simplicity
and because it beats the trivial algorithm by orders of magnitude,
as shown in the previous citations. (A quick check shows at least 60 citations to \cite{Co09}, mostly
involving papers that deal with massive scale graph algorithms.)
The algorithm itself has been discovered and rediscovered in various forms over the past decades. The earliest reference the authors could find was from the mid-80s where Chiba and Nishizeki
\cite{ChNi85} devised a sequential version of the above algorithm. We provide a more detailed history later.

Nonetheless, \algo{} has a poor worst-case behavior.
It would perform terribly on a high degree regular bipartite graph. If the input sparse graph (with high variance
in degree) simply consisted of many such bipartite graphs of varying
sizes, \algo{} would perform no better than its trivial cousin. Then why is it good in practice?

\subsection{Results} \label{sec:results}

Since the seminal results of Barab\'{a}si and Albert~\cite{BarabasiAlbert99}, Faloutsos et al~\cite{FFF99}, Broder et al~\cite{BrKu+00},
researchers have assumed that massive graphs obtained from the real world have \emph{heavy-tailed degree distributions} (often approximated as a power law). The average degree
is thought to be a constant (or very slowly growing), but the variance is quite large.
The usual approximation is to think of the number of vertices of degree $d$
as decaying roughly as $1/d^\alpha$ for some small constant $\alpha$.

This seems to have connections with \algo. If edges tend to connect vertices of fairly disparate degrees
(quite likely in a graph with large variance in degrees), \algo{} might provably
give good running times. This is exactly what we set out to prove, for a natural
distribution on heavy-tailed graphs.

Consider any list of positive integers $\bd = (d_1, d_2, \ldots, d_n)$, which
we think of as a ``desired" degree sequence. In other words, we wish to construct
a graph on $n$ vertices where vertex $v \in [n]$ has degree $d_v$.
The \emph{configuration model} (CM) \cite{bc1978, b1980, MoRe98, n2003} creates a random graph that almost
achieves this. Imagine vertex $v$ being incident to $d_v$ ``stubs", which can be thought
of as half-edges. We take a random perfect matching between the stubs, so pairs of stubs
are matched to each other. Each such pair creates an edge, and we end up with a multigraph
with the desired degree sequence. Usually, this is converted to a simple graph
by removing parallel edges and self-loops\cite{bdm2006}. We refer to this graph
distribution as $\cm(\bd)$, for input degree sequence $\bd$.
This model has a fairly long history (which
we relegate to a later section) and is a standard method to construct a graph with a desired degree sequence. It is closely connected to models given
by Chung and Lu \cite{ChLu02, ChLuVu03} and Mihail and Papadimitriou~\cite{MiPa02},
in the context of eigenvalues of graphs with a given degree sequence. These models
simply connect vertices $u$ and $v$ independently with probability $d_u d_v/2m$, similarly to the Erd\H{o}s-R\'{e}nyi construction.

Our main theorem gives a bound on the expected running time of \algo{} for $\cm(\bd)$.  We set $m = (\sum_v d_v)/2$.
We will henceforth assume that $0 < d_1 \leq d_2 \ldots \leq d_n$ and that $d_n < \sqrt{m}/2$.
This ``truncation" is a standard assumption for analysis of the configuration model \cite{MoRe98, ChLu02, MiPa02, ChLuVu03, n2003, bdm2006}.
We use $\sum_v$ as a shorthand for $\sum_{i=1}^n$, since
it is a sum over all vertices. The run time bottleneck for \algo{} is in $P_2$ enumeration,
and checking whether a $P_2$ is closed is often assumed to be a constant time operation.
Henceforth, when we say ``running time," we mean the number of $P_2$s enumerated.

\begin{theorem} \label{thm:main} Consider a degree sequence $\bd = (d_1, d_2, \ldots, d_n)$, where
$m = (\sum_v d_v)/2$ and $d_n < \sqrt{m}/2$. The expected (over $\cm(\bd)$) number of $P_2$s enumerated by \algo{} is
$O(n + m^{-2}(\sum_v d_v^{4/3})^3)$.
\end{theorem}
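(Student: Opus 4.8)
The plan is to bound the expected number of enumerated $P_2$s by summing, over each vertex $v$ and each unordered pair of neighbors $\{u,w\}$ of $v$ in the random multigraph, the probability that $v$ becomes the ``owner'' of that $P_2$ — i.e., that $v$ has the (tie-broken) minimum degree among $u,v,w$ — and then simplifying the resulting deterministic sum over $\bd$. Concretely, for a fixed vertex $v$ and fixed other vertices $u,w$, the $P_2$ $u\!-\!v\!-\!w$ is charged to $v$ only when $d_v \le d_u$ and $d_v \le d_w$ (up to tie-breaking), so the key quantity is the expected number of such ``dominated-by-$v$'' paths through $v$ in $\cm(\bd)$. I would write this as
\begin{equation}
\E[\#P_2\text{ enumerated}] = \sum_v \sum_{\substack{u,w:\ d_u,d_w\ge d_v}} \Pr[\,u\sim v\ \text{and}\ w\sim v\,],
\end{equation}
where $\sim$ denotes adjacency in the (multi)graph produced by the configuration model, and the inner sum runs over unordered pairs (with the appropriate convention when $u=w$ handled separately, contributing a lower-order term).

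The next step is to control $\Pr[u\sim v \text{ and } w\sim v]$ in the configuration model. Using the standard stub-matching estimates (which is why the truncation $d_n < \sqrt m/2$ is imposed), the probability that a given stub of $v$ matches a given stub of $u$ is $\Theta(1/m)$, and by a union bound over the $d_u$ stubs of $u$ and $d_w$ stubs of $w$ (and the $d_v$ stubs of $v$), one gets $\Pr[u\sim v \text{ and } w\sim v] = O(d_v^2 d_u d_w / m^2)$; more carefully one should note that a single stub of $v$ can be responsible for at most one of the two edges, so the bound $O(d_v^2 d_u d_w/m^2)$ is the honest one (and for the $u=w$ case it degrades gracefully). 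Substituting this in,
\begin{equation}
\E[\#P_2\text{ enumerated}] \;=\; O\!\left(n + \frac{1}{m^2}\sum_v d_v^2 \Big(\sum_{u:\, d_u \ge d_v} d_u\Big)^2\right),
\end{equation}
where the additive $n$ absorbs the low-order contributions (e.g.\ degenerate paths, and the $O(1)$-type terms). It now remains to show that the deterministic quantity $\sum_v d_v^2 (\sum_{u: d_u \ge d_v} d_u)^2$ is $O((\sum_v d_v^{4/3})^3)$, which is where the $\ell_{4/3}$-norm enters.

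This last, purely arithmetic inequality is the step I expect to be the real obstacle, and the place where the exponent $4/3$ is pinned down. Sorting so that $d_1 \le \cdots \le d_n$, the inner sum is a tail sum $T_v = \sum_{u \ge v} d_u$, so we must bound $\sum_v d_v^2 T_v^2$. One natural route is to split the outer sum dyadically by the value of $d_v$ (or by the size of the tail $T_v$) and apply Hölder's inequality with exponents matched so that $d_v^2 T_v^2$ is controlled by powers of $\sum d_u^{4/3}$; the heuristic is that the ``worst case'' is a pure power law, where both $\sum_v d_v^{4/3}$ and $\sum_v d_v^2 T_v^2$ can be evaluated explicitly and the claimed relation is tight, so one reverse-engineers the right grouping from that extremal example. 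An alternative is to bound $T_v \le (\sum_u d_u^{4/3})^{3/4}$ crudely for the head and use $d_v^2 \le d_v^{4/3}\cdot d_v^{2/3}$ with a tail bound on the remaining sum — but this likely loses a factor and one must instead interleave the two sums. I would first verify tightness on the power-law example to fix constants, then carry out the dyadic/Hölder argument; everything before this reduction (the configuration-model probability bound and the charging argument) is routine given the truncation hypothesis.
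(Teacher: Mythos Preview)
Your high-level plan matches the paper's, but there is a genuine gap in your first displayed equation. In the erased configuration model the algorithm buckets edges by the \emph{actual} degree $D_v$ in the final simple graph, not by the target degree $d_v$; these can differ because erasing multi-edges and self-loops reduces degrees. Your identity restricts the inner sum to $d_u, d_w \ge d_v$, but a wedge $u\!-\!v\!-\!w$ with $d_w < d_v$ can still land in $v$'s bucket whenever $D_v$ happens to fall to at most $D_w$. So your sum is not an upper bound on the running time, and the restriction to $d_u,d_w\ge d_v$ is precisely the step that needs justification. The paper closes this with a concentration argument: a lower-tail bound $\Pr[D_v < \beta' d_v] < \exp(-\beta d_v)$ for the ECM (proved via a coupling/martingale-type lemma for the dependent stub-matching process), which yields that for $d_w < \delta d_v$ the term $\E[Y_{v,w}Y_{v,w'}]$ carries an extra factor $\exp(-\Theta(d_v))$. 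Summed over all such $w,w'$ this contributes the additive $O(n)$, and the surviving sum runs over $d_w, d_{w'} \ge \delta d_v$ (a constant-factor slack, not $\ge d_v$). Without this concentration step your charging argument is incomplete.

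On the deterministic inequality you flag as the ``real obstacle'': it is in fact a one-line pointwise bound, not a dyadic or H\"older argument. Once the sum is restricted to $d_w, d_{w'} \ge \delta d_v$, you have $d_v^{2/3} \le \delta^{-2/3} d_w^{1/3} d_{w'}^{1/3}$, hence
\[
d_v^{2}\, d_w\, d_{w'} \;=\; d_v^{4/3}\cdot d_v^{2/3}\cdot d_w\, d_{w'} \;\le\; \delta^{-2/3}\,(d_v d_w d_{w'})^{4/3},
\]
and after dropping the constraints on $w,w'$ the triple sum factors as $(\sum_v d_v^{4/3})^3$. The exponent $4/3$ thus falls out directly from redistributing the excess $d_v^{2/3}$ onto $d_w$ and $d_{w'}$; no tail splitting or reverse-engineering from the power-law extremal is needed.
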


(Our main theorem applies to Chung-Lu graphs as well. Details are given in Appendix~\ref{app:cl}.)
Before we actually make sense of this theorem, let us look at a corollary of this theorem.
It has been repeatedly observed that
degree sequences in real graphs have heavy tails, often approximated as a \emph{power law} \cite{BarabasiAlbert99}.
Power laws say something about the moments of the degree distribution (equivalently, norms of the degree sequence).
Since it does not affect our main theorem or corollary, we
choose a fairly loose definition of power law. This is a binned version of the usual definition, which states the number
of vertices of degree $d$ is proportional to $n/d^\alpha$. (Even up to constants, this is never precisely true because
there are many gaps in real degree sequences.)

\begin{definition} \label{def:power} A degree sequence $\bd$ satisfies
a power law of exponent $\alpha > 1$ if the following holds
for all $k \leq \log_2 d_n - 1$: for $d = 2^k$, the number of sequence terms in $[d,2d]$ is
$\Theta(n/d^{\alpha-1})$.
\end{definition}

The following shows an application of our theorem for common values of $\alpha$.  This bound is tight as we show in Section~\ref{sec:tight}.
(When $\alpha \geq 3$, the trivial algorithm runs in linear time because $\sum_v d^2_v = O(n)$.)

\begin{corollary} \label{cor:power} Suppose a degree sequence $\bd$ (with largest term $< \sqrt{m}/2$) satisfies a power law with exponent $\alpha \in (2,3)$.
Then the expected running time of \algo{} of $\cm(\bd)$ is asymptotically better than the trivial algorithm,
and is \emph{linear} when $\alpha > 7/3$.
\end{corollary}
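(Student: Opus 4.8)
The plan is to substitute power-law moment estimates into Theorem~\ref{thm:main} and compare the resulting bound with the trivial algorithm's cost $\Theta(\sum_v d_v^2)$. Everything rests on one elementary computation: for a fixed real $p>0$, Definition~\ref{def:power} gives
\[
\sum_v d_v^p \;=\; \sum_{k=0}^{\lfloor\log_2 d_n\rfloor}\Theta\!\Big(\tfrac{n}{2^{k(\alpha-1)}}\Big)\,\Theta\big(2^{kp}\big) \;=\; \Theta(n)\sum_{k=0}^{\lfloor\log_2 d_n\rfloor}\big(2^{\,p-\alpha+1}\big)^k ,
\]
a geometric series whose value is $\Theta(n)$ when $p<\alpha-1$, is $\Theta(n\log d_n)$ when $p=\alpha-1$, and is dominated by the top bin and equal to $\Theta(n\, d_n^{\,p-\alpha+1})$ when $p>\alpha-1$. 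Since the $\Theta$'s of Definition~\ref{def:power} are uniform across the $O(\log d_n)$ bins, this is rigorous. Throughout I work in the interesting regime $d_n\to\infty$ as $n\to\infty$; otherwise $\sum_v d_v^2=\Theta(n)$ already and there is nothing to improve.

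First apply this with $p=1$: since $\alpha>2$ we have $1<\alpha-1$, so $\sum_v d_v=\Theta(n)$ and hence $m=\Theta(n)$. Next take $p=2$: since $\alpha<3$ we have $2>\alpha-1$, so the trivial algorithm costs $\Theta(\sum_v d_v^2)=\Theta(n\,d_n^{\,3-\alpha})=\omega(n)$. Finally take $p=4/3$: the sign of $4/3-\alpha+1=7/3-\alpha$ drives the behaviour, giving $\sum_v d_v^{4/3}=\Theta(n)$ when $\alpha>7/3$, $\Theta(n\log d_n)$ when $\alpha=7/3$, and $\Theta(n\,d_n^{\,7/3-\alpha})$ when $2<\alpha<7/3$. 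The truncation $d_n<\sqrt m/2=O(\sqrt n)$ keeps all of these polynomially bounded in $n$.

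Now substitute into Theorem~\ref{thm:main}. Using $m=\Theta(n)$, the expected number of enumerated $P_2$s is $O\big(n + n^{-2}(\sum_v d_v^{4/3})^3\big)$. When $\alpha>7/3$ this is $O\big(n + n^{-2}\cdot\Theta(n)^3\big)=O(n)$, i.e.\ linear (the matching $\Omega(n)$ being immediate, as all $\Theta(m)=\Theta(n)$ edges are touched); this is the second claim. When $2<\alpha\le 7/3$, $(\sum_v d_v^{4/3})^3=\Theta(n^3\, d_n^{\,7-3\alpha})$, with an extra $\Theta(\log^3 d_n)$ factor only at the endpoint $\alpha=7/3$, so the bound becomes $\Theta(n\,d_n^{\,7-3\alpha})$ (times that polylog at the endpoint); note $7-3\alpha\in[0,1)$ here, so the $n$ term is absorbed. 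Dividing by the trivial cost $\Theta(n\,d_n^{\,3-\alpha})$ yields the ratio $\Theta\big(d_n^{\,(7-3\alpha)-(3-\alpha)}\big)=\Theta\big(d_n^{\,4-2\alpha}\big)=\Theta\big(d_n^{-2(\alpha-2)}\big)\to 0$, since $\alpha>2$ and $d_n\to\infty$ (and $\log^3 d_n$ at the endpoint is absorbed, as $7/3$ corresponds to exponent $0$ versus trivial exponent $2/3$). Hence \algo{} is asymptotically faster than the trivial algorithm for every $\alpha\in(2,3)$, which is the first claim.

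There is no genuinely hard step: all the content lives in Theorem~\ref{thm:main}, and the corollary is a direct substitution. The only care required is the bookkeeping in the geometric-sum estimate — correctly identifying the dominant bin from the sign of $p-\alpha+1$, handling the boundary exponent $\alpha=7/3$ where a harmless logarithmic factor appears, and confirming that the hidden constants of Definition~\ref{def:power}, being uniform over bins, survive the $O(\log d_n)$-fold summation. I would also state explicitly the standing assumption $d_n\to\infty$ (as $n\to\infty$), without which neither the power-law hypothesis nor the phrase ``asymptotically better'' is meaningful.
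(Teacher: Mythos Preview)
Your proposal is correct and follows essentially the same route as the paper: compute $\sum_v d_v^p$ via the dyadic bins of Definition~\ref{def:power} as a geometric series, substitute into Theorem~\ref{thm:main}, and compare the exponent $7-3\alpha$ against the trivial exponent $3-\alpha$. The one point you gloss over that the paper makes explicit is why the trivial algorithm's expected cost on $\cm(\bd)$ is $\Omega(\sum_v d_v^2)$ --- since erasure of parallel edges and self-loops can shrink the realized degrees $D_v$, this needs a lower-tail bound (the paper invokes Claim~\ref{clm:trivial}, which rests on Lemma~\ref{lem:lower-tail}).
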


\subsection{Making sense of \Thm{main}} \label{sec:sense}

First, as a sanity check, let us actually show that \Thm{main} beats the trivial bound, $\sum_v d^2_v$. This
is a direct application of H\"{o}lder's inequality for conjugates $p=3$ and $q=3/2$.
$$ (\sum_{v} d_v^{4/3})^3 = (\sum_{v} d_v^{2/3}\cdot d_v^{2/3})^3
\leq \Big(\sum_v d_v^{\frac{2}{3} \cdot 3}\Big)^{3 \cdot \frac{1}{3}} \Big(\sum_v d_v^{\frac{2}{3} \cdot \frac{3}{2}}\Big)^{3 \cdot \frac{2}{3}}
= (2m)^2(\sum_v d^2_v) $$
Rearranging, we get $m^{-2} (\sum_{v} d_v^{4/3})^3 = O(\sum_v d^2_v)$, showing that our bound
at least holds the promise of being non-trivial.

Consider the uniform distribution on the vertices.
Assuming $m > n$, we can write our running time bound as $n(\EX[d^{4/3}_v])^3$, as opposed to the trivial
bound of $\sum_v d^2_v = n\EX[d^2_v]$. If the degree ``distribution" (think of the random
variable given by the degree of a uniform random vertex) has a small $4/3$-moment, the running time is small. This
can happen even though the second moment is large, and this is where
\algo beats the trivial algorithm. In other words, if the tail of the degree sequence
is heavy but not too heavy, \algo{} will perform well.

And this is exactly what happens when $\alpha > 2$ for power law degree sequences.
When $\alpha > 7/3$, the $4/3$-moment becomes constant and the running time is linear.
(It is known that for ECM graphs over power law degree sequences with $\alpha > 7/3$, the clustering
coefficient (ratio of triangles to $P_2$s) converges to zero~\cite{n2003}.)
We show in \Sec{tight} that the running time bound achieved in the following corollary for power laws with $\alpha > 2$ is tight.
When $\alpha \leq 2$, \algo{} gets no asymptotic improvement over the trivial algorithm.
For convenience, we will drop the big-Oh notation, and replace it by $\lessdot$. So $A \lessdot B$ means $A = O(B)$.

\begin{proof} (of \Cor{power}) First, let us understand the trivial bound.
Remember than $d_n$ is the maximum degree.
$$ \sum_v d^2_v \lessdot \sum_{k=1}^{\log_2n - 1} (n/2^{k(\alpha-1)}) 2^{2k}
= n \sum_{k=1}^{\log_2n - 1} 2^{k(3-\alpha)} \lessdot n + n d^{3-\alpha}_n $$
We can argue that the expected number of wedges enumerated by the trivial algorithm is $\Omega(\sum_v d^2_v)$ (\Clm{trivial}).
Now for the bound of \Thm{main}.
$$ m^{-2}(\sum_v d^{4/3}_v)^3 \lessdot n^{-2}\Big(\sum_{k=1}^{\log_2n - 1} (n/2^{k(\alpha-1)}) 2^{4k/3}\Big)^3
= n \Big(\sum_{k=1}^{\log_2n - 1} 2^{k(7/3-\alpha)}\Big)^3 \lessdot n + n d^{7-3\alpha}_n $$
Regardless of $d_n$, if $\alpha > 7/3$, then the running time of \algo{} is linear. Whenever
$\alpha \in (2,3)$, $7-3\alpha < 3-\alpha$, and \algo{} is asymptotically faster than a trivial enumeration.
\end{proof}

\subsection{Significance of \Thm{main}} \label{sec:sig}

\Thm{main} connects the running time of a commonly used algorithm to the norms of the degree sequences,
a well-studied property of real graphs.
So this important property of heavy-tails in real graphs allows for the algorithmic benefit of \algo{}.
We have discovered that for a fairly standard graph
model inspired by real degree distributions, \algo{} is very efficient.

We think of this theorem as a proof of concept: theoretically showing that a common property
of real world inputs allows for the efficient performance of a simple heuristic.
Because of our distributional assumptions as well as bounds on $\alpha$, we agree with
the (skeptical) reader that this does not
fully explain why \algo{} works in the real world\footnote{As the astute reader would have noticed,
our title is a question, not a statement.}. Nonetheless, we feel that this makes progress towards
that, especially for a question that is quite hard to formalize. After all, there is hardly
any consensus in the social networks community on what real graphs look like.

But the notion that distinctive properties of real world graphs can be used to prove
efficiency of simple algorithms is a useful way of thinking.
This is
one direction to follow for going beyond worst-case analysis.
Our aim here is not to
design better algorithms for triangle enumeration, but to give a theoretical argument for why
current algorithms do well.

The proof is obtained (as expected) through various probabilistic arguments bounding
the sizes of the different buckets. The erased configuration model, while easy to implement
and clean to define, creates some niggling problems for analysis of algorithms. The edges
are not truly independent of each other, and we have to take care of these weak dependencies.

Why the $4/3$-norm? Indeed, that is one of the most surprising features of this result (especially
since the bound is tight for power laws of $\alpha > 2$). As we bound the buckets sizes and make
sense of the various expressions, the total running time is expressed as a sum of various degree terms.
Using appropriate approximations, it tends to 
rearrange into norms of the degree sequence. Our proof goes over two sections.
We give various probabilistic calculations for the degree behavior in \Sec{prob}, which
set the stage for the run-time accounting.
In \Sec{cohen}, we start bounding bucket sizes and finally get to the $4/3$-moment.
In \Sec{tight}, we show that bounds achieved in the proof of \Cor{power} are tight. This is mostly
a matter of using the tools of the previous sections.  In \Sec{careful}, we give a tighter analysis that
gives an explicit expression for strong upper bounds on running time and in \Sec{experiments} we experimentally show these
more careful bounds closely approximate the expected runtime of ECM graphs, with runtime constants under $1$ for graphs up to 80M nodes.

\section{Related Work} \label{sec:related}

The idea of using some sort of degree binning, orienting edges, or thresholding for finding and enumerating triangles
has been used in many results. Chiba and Nishizeki~\cite{ChNi85} give bounds for a sequential
version of \algo{} using the degeneracy of a graph. This does not give bounds for \algo{}, although their algorithm
is similar in spirit. Alon, Yuster, and Zwick~\cite{AlYuZw97}
find triangles in $O(m^{1.41})$ using degree thresholding and matrix multiplication
ideas from Itai and Rodeh~\cite{ItRo78}.
Chrobak and Eppstien \cite{ChrobakE91} use
acyclic orientations for linear time triangle enumeration in planar graphs.
Vassilevska Williams and Williams \cite{VaWi10} show that fast algorithms
for weighted triangle enumeration leads to remarkable consequences, like faster all-pairs shortest paths.
In the work most closely to ours, Latapy \cite{latapy08} discusses various triangle
finding algorithms, and also focuses on power-law graphs.
He shows the trivial bound of $O(mn^{1/\alpha})$ when the power law exponent is $\alpha$.
Essentially, the maximum degree is $n^{1/\alpha}$ and that directly gives a bound
on the number of $P_2$s.

\algo{} has received attention from various experimental studies.
Schank and Wagner \cite{sw2005} perform an
experimental study of many algorithms, including a sequential version of \algo{}
which they show to be quite efficient.
Cohen~\cite{Co09} specifically describes \algo{} in the context of Map-Reduce.
Suri and Vassilvitskii~\cite{SuVa11} do many experiments on real graphs in Map-Reduce
and show major speedups (a few orders of magnitude) for \algo{} over the trivial
enumeration. Tsourakakis~\cite{t2008} gives a good survey of various methods used in practice
for triangle counting and estimation.

Explicit triangle enumerations have been used for various applications
on large graphs. Fudos and Hoffman~\cite{FudosH97} use triangle enumeration for a graph-based approach
for solving systems of geometric constraints.
Berry et al~\cite{BerryHLP11} touch every triangle as part of their community detection algorithm for
large graphs.

Configuration models for generating random graphs with given degree sequences
have a long history. Bender and Canfield~\cite{bc1978} study this model for counting
graphs with a given degree sequence. Wormald~\cite{Wo81} looks at the connectivity
of these graphs. Molloy and Reed~\cite{MoRe95, MoRe98} study various
properties like the largest connected component of this graph distribution.
Physicists studying complex networks have also paid attention to this model \cite{NeStWa01}.
Britton, Deijfen, and Martin-L\"{o}f \cite{bdm2006} show that the simple
graph generated by the ECM
asymptotically matches the desired degree sequence.
Aiello, Chung, and Lu \cite{AiChLu01} give a model for power-law graphs, where edge $(u,v)$
are independent inserted with probability $d_u d_v/2m$. This was studied
for more general degree sequences in subsequent work by Chung, Lu, and Vu \cite{ChLu02,ChLuVu03}.
Mihail and Papadimitriou \cite{MiPa02} independently discuss this model. Most
of this work focused on eigenvalues and average distances in these graphs.
Newman~\cite{n2003} gives an excellent survey of these models, their similarities, and applications.

\section{Degree behavior of $\cm(\bd)$} \label{sec:prob}

We fix a degree sequence $\bd$ and focus on the distribution $\cm(\bd)$. All expectations
and probabilities are over this distribution.
Because of dependencies in the erased configuration model,
we will need to formalize our arguments carefully.
We first state a general lemma giving a one-sided tail bound for dependent random variables
with special conditional properties. The proof is in the appendix.

\begin{lemma} \label{lem:mart} Let $Y_1, Y_2, \ldots, Y_k$ be independent random variables,
and $X_i = f_i(Y_1, Y_2, \ldots, Y_i)$ be $0$-$1$ random variables. Let $\alpha \in [0,1]$.
Suppose $\Pr[X_1] \geq \alpha$ and $\Pr[X_i=1 | Y_1, Y_2, \ldots, Y_{i-1}] \geq \alpha$ for all $i$.
Then, $\Pr[\sum_{i=1}^k X_i < \alpha k \delta] < \exp(-\alpha k (1-\delta)^2/2)$ for any $\delta\in(0,1)$.
\end{lemma}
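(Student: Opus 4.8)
The plan is to prove \Lem{mart} by a standard supermartingale/Azuma-type argument, but executed with care because the $X_i$ are not independent and the conditional lower bound $\Pr[X_i = 1 \mid Y_1,\dots,Y_{i-1}] \geq \alpha$ is only one-sided. The key idea is to \emph{couple} the sequence $(X_i)$ with an i.i.d.\ sequence of Bernoulli($\alpha$) random variables $(Z_i)$ that is stochastically dominated by $(X_i)$ in a strong, coordinate-wise conditional sense, and then apply a classical multiplicative Chernoff lower-tail bound to $\sum_i Z_i$. Concretely, for each $i$, conditioned on $(Y_1,\dots,Y_{i-1})$ we have a $0$-$1$ variable $X_i$ with success probability $p_i := \Pr[X_i = 1 \mid Y_1,\dots,Y_{i-1}] \geq \alpha$; using fresh independent uniform randomness we can define $Z_i \leq X_i$ pointwise with $Z_i \sim \mathrm{Bernoulli}(\alpha)$ and, crucially, with $Z_i$ independent of everything that came before (since $\alpha$ is a fixed constant, not depending on the history). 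Then $\sum_i X_i \geq \sum_i Z_i$ always, so $\Pr[\sum_i X_i < \alpha k \delta] \leq \Pr[\sum_i Z_i < \alpha k \delta]$.

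First I would set up this coupling precisely: enlarge the probability space with i.i.d.\ uniforms $U_1,\dots,U_k$ independent of the $Y_j$, realize $X_i$ as a threshold function of $U_i$ at level $p_i$ (after conditioning on the history, which is legitimate since $X_i = f_i(Y_1,\dots,Y_i)$ and we can re-randomize the conditional law), and set $Z_i = \mathbf{1}[U_i \leq \alpha]$. Because $\alpha \leq p_i$, we get $Z_i \leq X_i$; and because $\alpha$ is deterministic and the $U_i$ are independent of each other and of the $Y_j$, the $Z_i$ are genuinely i.i.d.\ $\mathrm{Bernoulli}(\alpha)$. Second, I would apply the standard Chernoff bound for the lower tail of a Binomial$(k,\alpha)$: for $S = \sum_i Z_i$ with $\E[S] = \alpha k$ and any $\delta \in (0,1)$, $\Pr[S < (1-\delta)\alpha k] < \exp(-\alpha k \delta^2/2)$. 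Rewriting with $\delta' = 1-\delta$ (so that the threshold $\alpha k \delta'$ in the statement corresponds to $(1-\delta)\alpha k$ here) gives $\Pr[S < \alpha k \delta] < \exp(-\alpha k (1-\delta)^2/2)$, exactly as claimed. Combining with the coupling inequality finishes the proof.

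The main obstacle — really the only subtle point — is justifying the coupling in the presence of the functional dependence $X_i = f_i(Y_1,\dots,Y_i)$: one must be careful that re-expressing $X_i$ via an auxiliary uniform $U_i$ does not disturb the joint law of $(X_1,\dots,X_k)$ while simultaneously making the $Z_i$ independent. The clean way to handle this is to build the space sequentially: given $(Y_1,\dots,Y_{i-1})$, the conditional distribution of $Y_i$ (hence of $X_i$) is determined; draw $U_i \sim \mathrm{Unif}[0,1]$ independent of the past and use the quantile/inverse-CDF construction so that $\mathbf{1}[U_i \leq p_i(Y_1,\dots,Y_{i-1})]$ has the same conditional law as $X_i$, then define $Z_i := \mathbf{1}[U_i \leq \alpha]$. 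This monotone coupling guarantees $Z_i \le X_i$ pathwise, and since $\alpha$ is constant the events $\{U_i \le \alpha\}$ are mutually independent. An alternative avoiding coupling altogether is a direct moment-generating-function computation: bound $\E[e^{-tX_i} \mid Y_1,\dots,Y_{i-1}] \le \E[e^{-tZ_i}] = (1-\alpha) + \alpha e^{-t}$ for $t > 0$ (valid because $x \mapsto e^{-tx}$ is decreasing and $p_i \ge \alpha$), multiply these conditional bounds telescopically along the filtration, and optimize $t$ — this reproduces the same exponent and may be the more self-contained route to include in the appendix. Either way, the calculation is routine once the one-sided conditional domination is phrased correctly; I would present the MGF version for brevity.
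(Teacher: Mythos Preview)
Your overall strategy --- reduce to i.i.d.\ Bernoulli$(\alpha)$ variables and then invoke the multiplicative Chernoff lower-tail bound --- is exactly what the paper does. But the specific coupling you describe has a real gap, precisely at the point you yourself flag as ``the only subtle point.'' Replacing $X_i$ by $\tilde X_i := \mathbf{1}[U_i \le p_i(Y_1,\dots,Y_{i-1})]$ with a \emph{fresh} $U_i$ independent of $Y_i$ does \emph{not} preserve the joint law of $(X_1,\dots,X_k)$: the original $X_i$ depends on $Y_i$, and that dependence feeds forward into $p_{i+1},p_{i+2},\dots$; your $\tilde X_i$ severs that link. Concretely, take $Y_1\sim\mathrm{Unif}[0,1]$, $X_1 = \mathbf{1}[Y_1\le 3/4]$, and let $X_2$ be Bernoulli with parameter $1$ when $Y_1\le 3/4$ and $1/2$ otherwise (so $\alpha=1/2$ is admissible). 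Then $\Pr[X_1+X_2=0]=1/8$, but under your construction $\tilde X_1$ is independent of $Y_1$ and one computes $\Pr[\tilde X_1+\tilde X_2=0]=1/32$. The lower tail of $\sum\tilde X_i$ is strictly smaller than that of $\sum X_i$, so a bound on the former says nothing about the latter. The easy repair is to thin the \emph{original} $X_i$ rather than reconstruct it: keep $X_i=f_i(Y_1,\dots,Y_i)$ as is and set $Z_i = X_i\cdot\mathbf{1}[U_i\le \alpha/p_i]$; then $Z_i\le X_i$ pointwise and the $Z_i$ are genuinely i.i.d.\ Bernoulli$(\alpha)$.

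Your MGF alternative, on the other hand, is correct as stated and is the cleanest route: the conditional bound $\E[e^{-tX_i}\mid Y_1,\dots,Y_{i-1}] = (1-p_i)+p_i e^{-t} \le (1-\alpha)+\alpha e^{-t}$ telescopes through the filtration to $\E[e^{-t\sum X_i}] \le ((1-\alpha)+\alpha e^{-t})^k$, and Markov plus optimization in $t$ reproduces the Chernoff exponent. The paper takes yet a third route: it proves the stochastic domination $\Pr[\sum_{i\le j} X_i \ge t] \ge \Pr[\sum_{i\le j} X'_i \ge t]$ directly by induction on $j$, splitting on whether $\sum_{i\le j} X_i$ already reaches $t$ and using $\Pr[X_{j+1}=1\mid Y_1,\dots,Y_j]\ge\alpha$ for the borderline case. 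All three arguments (corrected coupling, MGF, inductive CDF comparison) land at the same Chernoff bound; your MGF version is arguably the shortest, while the paper's inductive argument avoids both explicit coupling and moment generating functions.
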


We now prove a tail bound on degrees of vertices; the probability the degree of vertex $v$ deviates by a constant factor of $d_v$
is $\exp(-\Omega(d_v))$. Let $\beta, \beta', \delta,\delta'$ denote sufficiently small constants.

Before we proceed with our tail bounds, we
describe a process to construct the random matching of stubs. We are interested in
a particular vertex $v$.
Order the stubs such that the $d_v$ $v$-stubs are in the beginning; the remaining
stubs are ordered arbitrarily. We start with the first stub, and match to a uniform random
stub (other than itself). We then take the next unmatched stub, according
to the order, and match to a uniform random unmatched stub. And so on and so forth.
The final connections are clearly dependent, though the choice among
unmatched stubs is done independently. This is formalized as follows.
Let $Y_i$ be an independent uniform random integer in $[1,2m-2(i-1)-1]$. This
represents the choice at the $i$th step, since in the $i$th step, we have
exactly $2m - 2(i-1)-1$ choices. Imagine that we first draw these independent $Y_i$'s.
Then we deterministically construct the matching on the basis of these numbers. (So the first
stub is connected to the $Y_1$st stub, the second unmatched stub is connected to the
$Y_2$nd unmatched stub, etc.)

\begin{lemma} \label{lem:lower-tail} Assume $d_n < \sqrt{m}/2$. Let $D_v$ be the random variable
denoting the degree of $v$ in the resulting graph.  There exist sufficiently small constants $\beta, \beta' \in (0,1)$,
such that $ \Pr[D_v < \beta' d_v] < \exp(-\beta d_v)$.
\end{lemma}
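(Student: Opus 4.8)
\textbf{Proof plan for \Lem{lower-tail}.}
The plan is to run the exact stub-by-stub process defined just above the statement, with the $d_v$ stubs of $v$ placed first in the order, and to produce a lower bound on $D_v$ of the form $\sum_{i=1}^{k} X_i$ where the $X_i$ are $0$--$1$ variables fitting the hypotheses of \Lem{mart}. Take $k = \lceil d_v/2\rceil$. Because $v$'s stubs come first and each of steps $1,\dots,i-1$ consumes exactly two stubs, for every $i \le k$ at most $2(k-1) < d_v$ stubs have been used, so at least one $v$-stub is still unmatched and the active stub at step $i$ is a $v$-stub. Define $X_i = 1$ iff at step $i$ this active $v$-stub is matched to a stub of some vertex $u \neq v$ that has not already appeared as a neighbour of $v$ during steps $1,\dots,i-1$. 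Each step with $X_i=1$ creates a \emph{new} distinct neighbour of $v$ in the erased graph, and distinct such steps create distinct neighbours, so $D_v \ge \sum_{i=1}^{k} X_i$. Also $X_i = f_i(Y_1,\dots,Y_i)$: the numbers $Y_1,\dots,Y_{i-1}$ determine the first $i-1$ matchings, hence the current neighbour set and the identity of the active stub, and $Y_i$ then determines the partner chosen at step $i$.

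The crux is the per-step bound $\Pr[X_i = 1 \mid Y_1,\dots,Y_{i-1}] \ge \alpha$ for an absolute constant $\alpha$ (which for $i=1$ also gives $\Pr[X_1=1]\ge\alpha$). Conditioned on $Y_1,\dots,Y_{i-1}$, the first $i-1$ matchings are fixed, and at step $i$ there are $2m - 2(i-1) - 1$ equally likely partners for the active stub. A partner is ``bad'' only if it is another unmatched $v$-stub (at most $d_v-1$ of these) or a stub of a vertex already a neighbour of $v$ (there are fewer than $d_v$ such neighbours so far, each of degree at most $d_n$, so at most $d_v d_n$ such stubs). Hence
$$ \Pr[X_i = 1 \mid Y_1,\dots,Y_{i-1}] \ \ge\ \frac{2m - 2(i-1) - 1 - d_v - d_v d_n}{2m - 2(i-1) - 1} \ \ge\ 1 - \frac{d_v + d_v d_n}{2m - 2k}. $$
The truncation hypothesis now enters decisively: $d_v \le d_n < \sqrt{m}/2$, so $d_v d_n < m/4$ and $d_v < \sqrt m/2$, and the right-hand side is at least $1 - (m/4 + \sqrt m/2)/(2m - \sqrt m/2) \to 7/8$; thus it exceeds $3/4 =: \alpha$ once $m$ is above an absolute constant.

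Applying \Lem{mart} with this $\alpha$, with $\delta = 1/2$, and with $k = \lceil d_v/2\rceil \ge d_v/2$ gives
$$ \Pr\!\Big[\sum_{i=1}^{k} X_i < \tfrac{3}{16} d_v\Big] \ \le\ \Pr\!\Big[\sum_{i=1}^{k} X_i < \alpha k \delta\Big] \ <\ \exp\!\big(-\alpha k (1-\delta)^2/2\big) \ \le\ \exp\!\big(-\tfrac{3}{64} d_v\big). $$
Since $D_v \ge \sum_i X_i$, this yields $\Pr[D_v < \beta' d_v] < \exp(-\beta d_v)$ with, say, $\beta' = 3/16$ and $\beta = 3/64$; any sufficiently small constants work and these can be sharpened. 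The finitely many cases where $m$ is below the absolute constant of the previous paragraph are handled directly — there $d_v \le d_n < \sqrt m/2$ is bounded, so $\Pr[D_v < \beta' d_v]$ is a fixed number strictly below $1$ and shrinking $\beta$ makes the bound hold (indeed it is $0$ unless $v$ can form a self-loop).

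The step I expect to be the main obstacle is setting up the indicators so that \Lem{mart} applies cleanly: one must \emph{guarantee} that the active stub at each of the first $k$ steps is a $v$-stub, which forces the somewhat lossy choice $k = \lceil d_v/2\rceil$ and costs a constant factor, and one must bound the bad-partner count uniformly over all these steps — this is exactly where $d_n < \sqrt m/2$ is needed, to keep the term $d_v d_n$ below a constant fraction of $m$ so that the per-step success probability stays bounded away from $0$.
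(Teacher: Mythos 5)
Your proposal is correct and follows essentially the same route as the paper: order the $v$-stubs first, define $0$--$1$ indicators $X_i$ for the $i$th matching creating a new neighbour of $v$ over the first $\approx d_v/2$ steps, use the truncation $d_n<\sqrt{m}/2$ to bound the per-step conditional success probability below by a constant, and conclude via \Lem{mart}. The only differences are cosmetic (your explicit bad-stub count $d_v + d_v d_n$ versus the paper's $\sum_i d_{v_i}$, slightly different constants, and your separate treatment of bounded $m$, which the paper's denominator bound $2m-2d_v>m$ renders unnecessary).
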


\begin{proof} Suppose $d_v>1$. We again order the stubs so that the $d_v$ $v$-stubs are in the beginning.
Let $X_j$ be the indicator random variable for the $j$th matching
forming a new edge with $v$. Note that $\sum_{j=1}^{\lfloor d_v/2 \rfloor} X_j \leq D_v$.
Observe that $X_j$ is a function of $Y_1, Y_2, \ldots, Y_j$.
Consider any $Y_1, Y_2, \ldots, Y_{j-1}$ and suppose the matchings created
by these variables link to vertices $v_0=v,v_1, v_2, \ldots, v_{j-1}$ (distinct)
such that there are $n_j$ links to vertex $v_j$ such that $\sum_{i=0}^{j-1} n_i=(j-1)$.
Then, for $j=1,\ldots,\lfloor d_v/2 \rfloor$,
\begin{eqnarray*}
\EX[X_j | Y_1, Y_2, \ldots, Y_{j-1}]  &\geq  &1 - \frac{(d_v-j-n_0) + \sum_{1\leq i \leq j-1; n_i \neq 0} (d_{v_i} - n_i)}{2m-2(j-1)-1}\\
&\geq  &1 - \frac{ -2(j-1)-1 + \sum_{i=0}^{j-1}d_{v_i} }{2m-2(j-1)-1} \\
&\geq  &1 - \frac{\sum_{i=0}^{j-1}d_{v_i}}{2m-2d_v}.
\end{eqnarray*}
Note that $\sum_{i=0}^{j-1}d_{v_i} \leq (\sqrt{m}/2)^2 = m/4$, by the bound on the maximum degree.
We also get $2m - 2d_v > m$, so we bound $\EX[X_j | Y_1, Y_2, \ldots, Y_{j-1}] \geq 3/4$.
By \Lem{mart} (setting $\delta = 2/3$ and bounding $\alpha k > d_v/4$), \[
\Pr[D_v < d_v/8] \leq \Pr\left[\sum_{j=1}^{\lfloor d_v/2 \rfloor} X_j < d_v/8\right]
\leq \Pr\left[\sum_{j=1}^{\lfloor d_v/2 \rfloor} X_j < \lfloor d_v/2 \rfloor/2 \right]
< \exp(-d_v(1/3)^2/8).\]
\end{proof}

This suffices to prove the trivial bound for the trivial algorithm.

\begin{claim} \label{clm:trivial} The expected number of wedges enumerated by the trivial algorithm
is $\Omega(\sum_v d^2_v)$.
\end{claim}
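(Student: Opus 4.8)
The plan is to observe that the trivial algorithm enumerates exactly $\sum_v \binom{D_v}{2}$ paths $P_2$, where $D_v$ denotes the realized (simple-graph) degree of $v$ in the graph drawn from $\cm(\bd)$, since every $P_2$ is a pair of neighbors of its midpoint. So it suffices to show $\sum_v \EX[\binom{D_v}{2}] = \Omega(\sum_v d_v^2)$, and the only tool I would use is the lower-tail estimate \Lem{lower-tail}, applied one vertex at a time.

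First I would fix an absolute constant $C$ large enough that $\exp(-\beta C) \le 1/2$ and $\beta' C \ge 2$, where $\beta,\beta'$ are the constants supplied by \Lem{lower-tail}. For any vertex $v$ with $d_v \ge C$, that lemma gives $\Pr[D_v \ge \beta' d_v] \ge 1 - \exp(-\beta d_v) \ge 1/2$; on this event $\binom{D_v}{2} \ge \binom{\lceil \beta' d_v\rceil}{2} = \Omega((\beta')^2 d_v^2)$, using $\beta' d_v \ge \beta' C \ge 2$. Taking expectations, this yields $\EX[\binom{D_v}{2}] = \Omega(d_v^2)$ for every vertex of degree at least $C$, with an absolute hidden constant.

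Next I would sum this over all $v$ with $d_v \ge C$ and compare with $\sum_v d_v^2$. The vertices of degree below $C$ contribute at most $C\sum_v d_v = 2Cm$ to $\sum_v d_v^2$, so $\sum_{v:\,d_v\ge C} d_v^2 \ge \sum_v d_v^2 - 2Cm$. Since $C$ is a constant, the subtracted $O(m)$ term is negligible compared with $\sum_v d_v^2$ in every regime where the trivial algorithm is super-linear --- in particular the power-law regime $\alpha\in(2,3)$ in which \Clm{trivial} is invoked, where (as in the proof of \Cor{power}) $\sum_v d_v^2 = \Theta(n + n d_n^{3-\alpha}) = \omega(m)$. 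In that case $\sum_{v:\,d_v\ge C} d_v^2 \ge \frac{1}{2}\sum_v d_v^2$, and combining with the per-vertex bound gives $\sum_v \EX[\binom{D_v}{2}] = \Omega(\sum_v d_v^2)$.

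The only real subtlety --- and the place I would be most careful --- is the bookkeeping for low-degree vertices: \Lem{lower-tail} becomes useful only once $d_v$ exceeds a constant (both the $1/2$ probability floor and the requirement $\beta' d_v \ge 2$ fail for tiny $d_v$), and degree-one vertices genuinely contribute to $\sum_v d_v^2$ without ever sitting at the center of a $P_2$. The resolution is exactly the estimate above: their aggregate contribution is only $O(m)$, which is absorbed precisely in the super-linear regime where the statement has any content. No concentration beyond \Lem{lower-tail}, and no argument about the (weak) dependencies among edges, is needed for this direction.
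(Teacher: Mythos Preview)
Your approach is essentially the paper's: both use \Lem{lower-tail} to get $\EX[D_v^2]=\Omega(d_v^2)$ per vertex and then sum. The paper's entire proof is two lines --- ``the expected number of wedges is $\Omega(\sum_v D_v^2)$, and by \Lem{lower-tail}, $\EX[D_v^2]=\Omega(d_v^2)$'' --- and it glosses over exactly the low-degree bookkeeping you flag: the number of wedges is $\sum_v\binom{D_v}{2}$, not $\sum_v D_v^2$, and degree-one vertices contribute to $\sum_v d_v^2$ without ever centering a $P_2$. Your handling of this (the sub-$C$ contribution to $\sum_v d_v^2$ is $O(m)$, absorbed in the super-linear regime where the claim is actually invoked) is more careful than the paper's and is the right fix; the paper is simply being informal at this step.
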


\begin{proof} The expected number of wedges enumerated is $\Omega(\sum_v D^2_v)$, where $D_v$
is the actual degree of $v$. Using \Lem{lower-tail}, $\EX[D^2_v] = \Omega(d^2_v)$.
\end{proof}

We will need the following basic claim about the joint probability of two edges.

\begin{claim} \label{clm:pair} Let $v, w, w'$ be three distinct vertices.
The probability that edges $(v,w)$ and $(v,w')$ are present in the final graph is at most $d^2_v d_w d_{w'}/m^2$.
\end{claim}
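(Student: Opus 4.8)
Looking at Claim~\ref{clm:pair}, I need to bound the probability that both edges $(v,w)$ and $(v,w')$ appear in the erased configuration model graph, where $v, w, w'$ are distinct.

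The plan is to work in the multigraph before erasure, since the presence of an edge $(v,w)$ in the final simple graph is implied by (in fact equivalent to) the presence of at least one $v$-$w$ pairing among the stubs. So it suffices to bound the probability that at least one of the $d_v$ $v$-stubs is matched to one of the $d_w$ $w$-stubs \emph{and} at least one of the $v$-stubs is matched to one of the $d_{w'}$ $w'$-stubs. I would union-bound over the choice of which $v$-stub goes to $w$ and which (distinct) $v$-stub goes to $w'$: there are at most $d_v^2$ ordered pairs of $v$-stubs to consider. For a fixed ordered pair of $v$-stubs, say the $i$th and $j$th, I want to bound the probability that the $i$th $v$-stub is matched into $w$'s stubs and the $j$th $v$-stub is matched into $w'$'s stubs.

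The cleanest way to carry this out is to reveal the matching in a convenient order: first expose the partner of the $i$th $v$-stub, then the partner of the $j$th $v$-stub. The probability the $i$th $v$-stub lands on one of the $d_w$ $w$-stubs is exactly $d_w/(2m-1) \le d_w/m$ (using $m \ge 1$, or more carefully $2m - 1 \ge m$). Conditioned on that, there are $2m-3$ remaining free stubs, at least $d_{w'}$ of which belong to $w'$ (note $w' \ne w$ and $w' \ne v$, so none of $w'$'s stubs have been consumed), so the conditional probability the $j$th $v$-stub is matched into $w'$ is at most $d_{w'}/(2m-3) \le d_{w'}/m$ (again for $m$ large enough, or one can be slightly more careful with constants). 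Multiplying, each ordered pair contributes at most $d_w d_{w'}/m^2$, and summing over the at most $d_v^2$ ordered pairs gives the claimed bound $d_v^2 d_w d_{w'}/m^2$.

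The main thing to be careful about is the bookkeeping of how many free stubs remain and ensuring the denominators are genuinely at least $m$ so that the $2m-O(1)$ terms can be absorbed — this is where the truncation assumption $d_n < \sqrt{m}/2$ (hence $m$ is not tiny) is implicitly used, though here it is really just a matter of $m$ being large enough that $2m - 3 \ge m$. A secondary subtlety is making sure the union bound is set up over \emph{ordered} pairs of distinct $v$-stubs (so the two events "$(v,w)$ present" and "$(v,w')$ present" are each witnessed by a distinct stub), and that conditioning on the first stub's partner does not remove too many of $w'$'s stubs — it removes none, since $w' \notin \{v, w\}$. Everything else is routine; no concentration inequality or appeal to Lemma~\ref{lem:mart} is needed here, just a two-step reveal of the random matching.
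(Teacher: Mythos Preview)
Your proposal is correct and takes essentially the same approach as the paper: a union bound over ordered pairs of distinct $v$-stubs, followed by a two-step reveal of the matching to get factors $d_w/(2m-1)$ and $d_{w'}/(2m-3)$, each bounded by the corresponding term over $m$ (the paper makes the condition $m\ge 3$ explicit for $(2m-1)(2m-3)\ge m^2$). The only cosmetic difference is that the paper union-bounds over all four stub indices $(i,j,k,\ell)$, yielding $d_v(d_v-1)d_w d_{w'}$ terms each of probability $1/[(2m-1)(2m-3)]$, whereas you aggregate the target-stub choices into the per-pair probability and use the looser count $d_v^2$ in place of $d_v(d_v-1)$.
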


\begin{proof} Assume $d_v>1$. Let
$C_{v,w}$ be the indicator random variable for edge $(v,w)$ being
present (likewise define $C_{v,w'}$).  Label the stubs of each vertex as
$s_1^v,\ldots,s_{d_v}^v$;   $s_1^w,\ldots,s_{d_w}^w$; and $s_1^{w'},\ldots,s_{d_{w'}}^{w'}$.
Let $C_{s_i^v,s_j^w}$ be the indicator random variable for edge being
present between stubs $s_i^v$ and $s_j^w$ (likewise define $C_{s_i^v,s_j^{w'}}$).  Then the event
$\{C_{v,w}C_{v,w'}=1\}$ that edges $(v,w)$ and $(v,w')$ are present    is a subset of the event $
\cup_{1 \leq i\neq j \leq d_v}\cup_{k=1}^{d_w} \cup_{\ell=1}^{d_{w'}} \{C_{s_i^v,s_k^w}C_{s_j^v,s_\ell^{w'}}=1\}$.
Hence,
\[\Pr[C_{v,w}C_{v,w'} = 1] \leq \sum_{1 \leq i\neq j \leq d_v}\sum_{k=1}^{d_w} \sum_{\ell=1}^{d_{w'}} \Pr[C_{s_i^v,s_k^w}C_{s_j^v,s_\ell^{w'}}=1].\]

Fix $1 \leq i \neq j \leq d_v$, $1\leq k \leq d_w$ and $1 \leq \ell \leq d_{w'}$ and order stubs $s_i^v,s_j^v$ first in
the ECM wiring.  Then, $\Pr[C_{s_i^v,s_k^w}C_{s_j^v,s_\ell^{w'}}=1] =  \Pr[C_{s_i^v,s_k^w} = 1] \Pr[C_{s_j^v,s_\ell^{w'}} = 1 | C_{s_i^v,s_k^w} = 1]$ where  $\Pr[C_{s_i^v,s_k^w} = 1] = [2m-1]^{-1}$ and  $\Pr[C_{s_j^v,s_\ell^{w'}} = 1 | C_{s_i^v,s_k^w} = 1]=[2m-3]^{-1}$.  Hence,
\[
\Pr[C_{v,w}C_{v,w'} = 1] \leq d_v(d_v-1)d_w d_{w'}/m^2
\]
using $(2m-1)(2m-3)\geq m^2$ when $m \geq 3$.
\end{proof}

\section{Getting the $4/3$ moment} \label{sec:cohen}

We will use a series of claims to express the running time of \algo{}
in a convenient form. For vertex $v$, let $X_v$ be the random variable denoting
the number of edges in $v$'s bin. The expected running time is at most $\EX[\sum_v X_v(X_v-1)]$. This is because number of wedges in each
bin is ${X_v\choose 2} \leq X^2_v -X_v$.

We further break $X_v$ into the sum $\sum_w Y_{v,w}$, where $Y_{v,w}$ is
the indicator for edge $(v,w)$ being in $v$'s bin.
As mentioned earlier, $C_{v,w}$ is the indicator for edge $(v,w)$ being present.
Note that $Y_{v,w} \leq C_{v,w}$, since $(v,w)$ can only
be in $v$'s bin if it actually appears as an edge.

We list out some bounds on expectations. Only the second one really
uses the binning of \algo{}.

\begin{claim} \label{clm:arms} Consider vertices $v,w,w'$ ($w \neq w'$).
\begin{itemize}
\item $\EX[Y_{v,w}Y_{v,w'}] \leq d^2_vd_wd_{w'}/m^2$.
\item There exist sufficient small constants $\delta, \delta' \in (0,1)$ such that: if $d_w < \delta d_v$ then $\EX[Y_{v,w}Y_{v,w'}] \leq 2\exp(-\delta' d_v) d^2_vd_wd_{w'}/m^2$.
\end{itemize}
\end{claim}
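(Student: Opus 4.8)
Here is the plan.

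The first bound is immediate. Since the edge $(v,w)$ can be placed in $v$'s bin only if it is actually present, $Y_{v,w}Y_{v,w'}\le C_{v,w}C_{v,w'}$, and hence $\EX[Y_{v,w}Y_{v,w'}]\le \Pr[C_{v,w}C_{v,w'}=1]\le d_v(d_v-1)d_wd_{w'}/m^2\le d_v^2d_wd_{w'}/m^2$ by \Clm{pair}.

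For the second bound the plan is to use the binning rule together with the degree lower-tail of \Lem{lower-tail}. Write $D_u$ for the degree of $u$ in the sampled (erased) graph. The edge $(v,w)$ is put in $v$'s bin only when $D_v\le D_w$; and since erasing parallel edges and loops only decreases degrees, $D_w\le d_w<\delta d_v$. Hence $Y_{v,w}=1$ already forces $D_v<\delta d_v$, a deep deviation of $D_v$ below $d_v$. Thus
\[
\EX[Y_{v,w}Y_{v,w'}]\;\le\;\Pr\!\big[C_{v,w}=1,\ C_{v,w'}=1,\ D_v<\delta d_v\big],
\]
and it suffices to show this joint probability is at most $2\exp(-\delta' d_v)\,d_v^2d_wd_{w'}/m^2$ --- i.e.\ we need a version of \Lem{lower-tail} that also ``pays'' for two prescribed edges at $v$.

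I would get this by splicing the union bound from the proof of \Clm{pair} into the martingale argument of \Lem{lower-tail}. As in \Clm{pair}, the event above is contained in $\bigcup_{i\ne j}\bigcup_{k}\bigcup_{\ell}\{C_{s_i^v,s_k^w}=1,\ C_{s_j^v,s_\ell^{w'}}=1,\ D_v<\delta d_v\}$, a union over which pair of $v$-stubs realizes the two edges. Fix such $i\ne j$, $k$, $\ell$ and wire $s_i^v,s_j^v$ first: the probability that the first two matchings are exactly $s_i^v\leftrightarrow s_k^w$ and $s_j^v\leftrightarrow s_\ell^{w'}$ is $[(2m-1)(2m-3)]^{-1}\le m^{-2}$. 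Conditioned on that, there remain $d_v-2$ unmatched $v$-stubs; I continue the wiring process on them exactly as in \Lem{lower-tail}, letting $X_j$ indicate that the $j$-th subsequent step creates a new edge at $v$, for $j=1,\dots,\lfloor (d_v-2)/2\rfloor$, so $\sum_jX_j\le D_v$. The two pre-fixed edges only perturb the counts in the conditional-expectation computation by $O(1)$ and add the terms $d_w,d_{w'}<\sqrt m/2$ to $\sum_i d_{v_i}$, which therefore still stays below $m/4$; so the same estimate as in \Lem{lower-tail} gives $\EX[X_j\mid\text{history}]\ge 3/4$. Then \Lem{mart} (with $\alpha=3/4$, $k\approx d_v/2$) yields $\Pr[D_v<\delta d_v\mid\cdot]\le\exp(-\delta'd_v)$, provided $\delta$ is chosen small enough (depending only on the universal constant $3/4$) that $\delta d_v$ lies below the threshold of \Lem{mart}; this is exactly where the constants $\delta,\delta'$ come from. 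Summing the resulting bound $\exp(-\delta'd_v)/m^2$ over the at most $d_v(d_v-1)d_wd_{w'}\le d_v^2d_wd_{w'}$ terms gives the claim, with the spare factor $2$ absorbing the $(2m-1)(2m-3)\ge m^2$ slack and the switch from $\lfloor (d_v-2)/2\rfloor$ to $\lfloor d_v/2\rfloor$. (If $d_v\le 1/\delta$ the hypothesis $d_w<\delta d_v$ is vacuous since $d_w\ge1$, and if $v$ has fewer than two stubs then $Y_{v,w}Y_{v,w'}=0$; so we may assume $d_v$ large.)

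The main obstacle is the last ingredient: checking carefully that conditioning on two prescribed stub matchings does not spoil the estimate $\EX[X_j\mid\text{history}]\ge 3/4$ used in \Lem{lower-tail}, and that one fixed pair $(\delta,\delta')$ works uniformly over all $v,w,w'$ and all admissible degree sequences $\bd$ --- in particular that $\delta$ may be taken no larger than the $\beta'$ of \Lem{lower-tail}, so that ``$D_v<\delta d_v$'' really is a deep deviation with an $\exp(-\Omega(d_v))$ probability. Apart from that, it is a routine reprise of \Clm{pair} and \Lem{lower-tail}.
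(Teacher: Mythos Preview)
Your proposal is correct and follows essentially the same approach as the paper: bound $Y_{v,w}Y_{v,w'}$ by the joint event $\{C_{v,w}=C_{v,w'}=1,\ D_v<\delta d_v\}$, decompose via a stub-level union bound as in \Clm{pair}, condition on two fixed stub pairings, and then apply a lower-tail bound on $D_v$. The only cosmetic difference is that the paper observes that, after conditioning on $C_{s_i^v,s_k^w}C_{s_j^v,s_\ell^{w'}}=1$, the remaining process is exactly an ECM on the reduced degree sequence $(d_v-2,d_w-1,d_{w'}-1,\dots)$ and invokes \Lem{lower-tail} directly on that, rather than re-running its martingale estimate inline as you do.
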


\begin{proof}  We use the trivial bound of $Y_{v,w}Y_{v,w'} \leq C_{v,w}C_{v,w'}$.
By \Clm{pair}, $\EX[Y_{v,w}Y_{v,w'}] \leq \EX[C_{v,w}C_{v,w'}] \leq d^2_vd_wd_{w'}/m^2$.

Now for the interesting bound.
The quantity $\EX[Y_{v,w}Y_{v,w'}]$
is the probability that both $Y_{v,w}$ and $Y_{v,w'}$ are $1$. For this to
happen, we definitely require both $(v,w)$ and $(v,w')$ to be present as edges.
Call this event $\cE$. We also require (at the very least) the degree of $v$
to be at most the degree of $w$ (otherwise the edge $(v,w)$ will not
be put in $v$'s bin.) Call this event $\cF$.  If $D_v, D_w$ denote
the degrees of $v$ and $w$, note that $D_w \leq d_w < \delta d_v$, implying  event $\cF$ is contained in
the event $\{D_v < \delta d_v\}$ when $d_w < \delta d_v$.  Hence, the event $Y_{v,w}Y_{v,w'} = 1$
is contained in $\cE \cap \{D_v < \delta d_v\}$.   Assume $d_v>2,d_w>0,d_{w'}>0$ or else $\EX[Y_{v,w}Y_{v,w'}]=0$ trivially when $\delta<1/2$.

As in the proof of \Clm{pair}, let $C_{s_i^v,s_j^w}$ be the indicator random variable for edge being
present between stubs $s_i^v$ and $s_j^w$ of vertices $v,w$ (and analogously  define $C_{s_i^v,s_j^{w'}}$).
Then $\cE$ is contained in $
\cup_{1 \leq i\neq j \leq d_v}\cup_{k=1}^{d_w} \cup_{\ell=1}^{d_{w'}} \{C_{s_i^v,s_k^w}C_{s_j^v,s_\ell^{w'}}=1\}$
so that
\begin{eqnarray*}
\Pr[Y_{v,w}Y_{v,w'} = 1] &\leq &\Pr[\cE,  D_v  < \delta d_v ] \\
&\leq &\sum_{1 \leq i\neq j \leq d_v}\sum_{k=1}^{d_w} \sum_{\ell=1}^{d_{w'}} \Pr[C_{s_i^v,s_k^w}C_{s_j^v,s_\ell^{w'}}=1, D_v  < \delta d_v]\\
&=& \sum_{1 \leq i\neq j \leq d_v}\sum_{k=1}^{d_w} \sum_{\ell=1}^{d_{w'}} \Pr[C_{s_i^v,s_k^w}C_{s_j^v,s_\ell^{w'}}=1] \Pr[D_v  < \delta d_v | C_{s_i^v,s_k^w}C_{s_j^v,s_\ell^{w'}}=1].
\end{eqnarray*}
Given fixed values of $i,j,k,\ell$ and  order stubs $s_i^v,s_j^v$ first in
the ECM wiring. Then, $\Pr[C_{s_i^v,s_k^w}C_{s_j^v,s_\ell^{w'}}=1] \leq m^{-2}$ as  in the proof of \Clm{pair}. Additionally, conditioned on $C_{s_i^v,s_k^w}C_{s_j^v,s_\ell^{w'}}=1$, the remaining
stubs form an ECM with respect to a new degree sequence formed by
replacing $2m,d_v,d_w,d_{w'}$ in the original degree sequence
by  $2\tilde{m}=2m-4,d_v-2,d_w-1,d_{w'}-1$.  Let $\tilde{D}_v $ denote the degree of $v$
in the final graph from the new degree sequence. Then, conditioned on $C_{s_i^v,s_k^w}C_{s_j^v,s_\ell^{w'}}=1$,
$D_v=2+\tilde{D}_v$ so that conditional probability is bounded by
\begin{eqnarray*}
\Pr[D_v  < \delta d_v|C_{s_i^v,s_k^w}C_{s_j^v,s_\ell^{w'}}=1] &=&  \Pr[\tilde{D}_v  < \delta d_v-2|C_{s_i^v,s_k^w}C_{s_j^v,s_\ell^{w'}}=1] \\
&\leq& \Pr[\tilde{D}_v  < \delta (d_v-2)|C_{s_i^v,s_k^w}C_{s_j^v,s_\ell^{w'}}=1]\\
&\leq &2\exp(-\delta' d_v)
\end{eqnarray*}
since $\delta<1$.  That is, \Lem{lower-tail} applies to $\tilde{D}_v$ with respect to the new degree sequence where
$v$ has degree $d_v-2$ and    each degree in this new sequence is less than $\sqrt{\tilde{m}}/2$ by assumption.  The bound  $\Pr[Y_{v,w}Y_{v,w'} = 1] \leq 2\exp(-\delta' d_v) d_v^2d_w d_{w'}/m^2$ then follows.
\end{proof}

Armed with these facts, we can bound the expected number of $P_2$s contained in a single bucket.

\begin{lemma} \label{lem:xv}
$$\EX[X_v(X_v-1)] = O\Big(\exp(-\delta d_v) d^2_v + \Big(m^{-2}{d^2_v} \sum_{\substack{w: \\ d_w \geq \delta d_v}} \sum_{\substack{w \neq w':\\ d_{w'} \geq \delta d_v}} d_w d_{w'}\Big)\Big) $$.
\end{lemma}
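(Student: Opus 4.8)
The plan is to expand $\EX[X_v(X_v-1)]$ using the decomposition $X_v = \sum_w Y_{v,w}$ and then split the resulting double sum over pairs of neighbors $w,w'$ according to whether or not the degrees $d_w, d_{w'}$ are large compared to $d_v$. Concretely, $\EX[X_v(X_v-1)] = \sum_{w \neq w'} \EX[Y_{v,w}Y_{v,w'}]$, and I would partition the index set of ordered pairs $(w,w')$ into three regions: (i) both $d_w \geq \delta d_v$ and $d_{w'} \geq \delta d_v$; (ii) $d_w < \delta d_v$ (the symmetric case $d_{w'} < \delta d_v$ is handled identically, possibly costing a factor of $2$); and the overlap is harmless. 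The first region is bounded termwise using the first bullet of \Clm{arms}, namely $\EX[Y_{v,w}Y_{v,w'}] \leq d_v^2 d_w d_{w'}/m^2$, which immediately produces the second term $m^{-2} d_v^2 \sum_{w: d_w \geq \delta d_v} \sum_{w' \neq w: d_{w'} \geq \delta d_v} d_w d_{w'}$ in the statement.

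For the second region, where at least one endpoint has small degree, I would invoke the second bullet of \Clm{arms}: when $d_w < \delta d_v$, $\EX[Y_{v,w}Y_{v,w'}] \leq 2\exp(-\delta' d_v) d_v^2 d_w d_{w'}/m^2$. Summing this over all valid $w$ with $d_w < \delta d_v$ and all $w' \neq w$ gives at most $2\exp(-\delta' d_v) m^{-2} d_v^2 \big(\sum_{w: d_w < \delta d_v} d_w\big)\big(\sum_{w'} d_{w'}\big) \leq 2\exp(-\delta' d_v) m^{-2} d_v^2 (2m)(2m) = 8\exp(-\delta' d_v) d_v^2$, using the trivial bound $\sum_w d_w = 2m$ for both inner sums. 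The symmetric case $d_{w'} < \delta d_v$ contributes the same, so the small-degree regions together contribute $O(\exp(-\delta' d_v) d_v^2)$, which is the first term (after relabeling $\delta'$ as $\delta$, or simply taking $\delta$ small enough to serve both roles).

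Adding the contributions from the (at most three, but really two distinct) regions yields exactly the claimed bound. The only mild subtlety — and the step most worth stating carefully — is the bookkeeping of the overlapping cases: a pair $(w,w')$ with both degrees small is counted in both symmetric small-degree sums, but since we are proving an upper bound this over-counting only helps, and the case where $d_w \geq \delta d_v$ but $d_{w'} < \delta d_v$ is captured by the $d_{w'} < \delta d_v$ sum. I do not expect any real obstacle here: both \Clm{arms} bullets are already in hand, and the argument is just a partition of the sum plus the crude estimate $\sum_w d_w = 2m$ to collapse the exponentially small region. Care should be taken that the constant $\delta$ in the lemma statement is chosen no larger than the constants $\delta,\delta'$ guaranteed by \Clm{arms}, so that both the thresholding and the exponential decay apply simultaneously.
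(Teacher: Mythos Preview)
Your proposal is correct and follows essentially the same approach as the paper: expand $\EX[X_v(X_v-1)] = \sum_{w\neq w'}\EX[Y_{v,w}Y_{v,w'}]$, split according to whether $d_w,d_{w'}\geq \delta d_v$, apply the first bullet of \Clm{arms} when both are large and the second bullet when one is small, and collapse the small-degree sums via $\sum_w d_w = 2m$. The paper's proof is identical in structure, including the factor of~$2$ from the symmetric small-degree case and the same looseness about identifying $\delta$ and~$\delta'$.
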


\begin{proof} We will write out
\begin{eqnarray*}
X^2_v  = \ (\sum_w Y_{v,w})^2  & = & \sum_w Y^2_{v,w} + \sum_w\sum_{w' \neq w} Y_{v,w} Y_{v,w'}
\end{eqnarray*}
where $\sum_w Y^2_{v,w}=\sum_w Y_{v,w}=X_v$ as each $Y_{v,w}$ is a 0-1 variable. Hence,
\begin{eqnarray*}
\EX[X_v(X_v-1)] = \sum_w\sum_{w' \neq w} \EX[Y_{v,w} Y_{v,w'}] & \leq &
\sum_{\substack{w: \\ d_w \geq \delta d_v}} \sum_{\substack{w \neq w':\\ d_{w'} \geq \delta d_v}} \EX[Y_{v,w} Y_{v,w'}]
+ \sum_{\substack{w:\\ d_w < \delta d_v}} \sum_{w' \neq w} \EX[Y_{v,w} Y_{v,w'}] \\
& & + \sum_{\substack{w':\\ d_{w'} < \delta d_v}} \sum_{w \neq w'} \EX[Y_{v,w} Y_{v,w'}] \\
& = &  \sum_{\substack{w: \\ d_w \geq \delta d_v}} \sum_{\substack{w \neq w':\\ d_{w'} \geq \delta d_v}} \EX[Y_{v,w} Y_{v,w'}]
+ 2\sum_{\substack{w:\\ d_w < \delta d_v}} \sum_{w' \neq w} \EX[Y_{v,w} Y_{v,w'}]\\
&\leq & \frac{d^2_v}{m^2} (\sum_{w : d_w \geq \delta d_v} d_w)^2 + 2\sum_{\substack{w:\\ d_w < \delta d_v}} \sum_{w' \neq w} \EX[Y_{v,w} Y_{v,w'}]
\end{eqnarray*}
by splitting the  sums into cases, $d_w \geq \delta d_v$ and $d_w < \delta d_v$, and  using the trivial bound of \Clm{arms}
for the first quantity.
We satisfy the conditions to use the second part of \Clm{arms} as
w%
\begin{eqnarray*}
\sum_{\substack{w:\\ d_w < \delta d_v}} \sum_{w' \neq w} \EX[Y_{v,w} Y_{v,w'}] & \leq & 2\sum_{\substack{w:\\ d_w < \delta d_v}} \sum_{w' \neq w} \exp(-\delta d_v) d^2_vd_wd_{w'}/m^2 \\
& \leq & \exp(-\delta d_v) d^2_v,
\end{eqnarray*}
where $\sum_{i=1}^n d_i=2m$,
\end{proof}

With this bound for $\EX[X_v(X_v-1)]$, we are ready to prove
\Thm{main}.

\begin{theorem} \label{thm:sum-xv}
$\EX[\sum_v X_v(X_v-1)] = O(n + m^{-2}(\sum_{i=1}^n d_i^{4/3})^3)$.
\end{theorem}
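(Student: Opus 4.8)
The plan is to sum the per-vertex bound of \Lem{xv} over all $v$ and show that each of the two resulting terms matches one of the two summands in the claimed estimate. The first term is immediate: the function $x \mapsto \exp(-\delta x)\, x^2$ is bounded by an absolute constant (depending only on $\delta$), so $\sum_v \exp(-\delta d_v)\, d_v^2 = O(n)$. Everything interesting is in the second term, $m^{-2}\sum_v d_v^2 \sum_{w:\, d_w \ge \delta d_v}\sum_{w'\neq w:\, d_{w'}\ge \delta d_v} d_w d_{w'}$.

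First I would drop the $w \neq w'$ restriction and bound the inner double sum by $\big(\sum_{w:\, d_w \ge \delta d_v} d_w\big)^2$, reducing the task to showing $m^{-2}\sum_v d_v^2 \big(\sum_{w:\, d_w \ge \delta d_v} d_w\big)^2 = O\big(m^{-2}(\sum_v d_v^{4/3})^3\big)$. The key step — and the point where the $4/3$-norm materializes — is the elementary tail inequality: for any threshold $t>0$, every $w$ with $d_w \ge t$ satisfies $d_w = d_w^{4/3}\, d_w^{-1/3} \le t^{-1/3} d_w^{4/3}$, hence $\sum_{w:\, d_w \ge t} d_w \le t^{-1/3}\sum_w d_w^{4/3}$. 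Applying this with $t = \delta d_v$ gives $\sum_{w:\, d_w \ge \delta d_v} d_w \le (\delta d_v)^{-1/3}\sum_w d_w^{4/3}$, so that $d_v^2\big(\sum_{w:\, d_w \ge \delta d_v} d_w\big)^2 \le \delta^{-2/3}\, d_v^{4/3}\big(\sum_w d_w^{4/3}\big)^2$. Summing over $v$ and factoring out the $d_v$-independent piece yields $\delta^{-2/3}\big(\sum_w d_w^{4/3}\big)^2\sum_v d_v^{4/3} = \delta^{-2/3}\big(\sum_v d_v^{4/3}\big)^3$, and restoring the $m^{-2}$ factor gives exactly the claimed $O\big(m^{-2}(\sum_v d_v^{4/3})^3\big)$. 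Combining with the $O(n)$ from the first term finishes the proof.

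I expect the main obstacle to be conceptual rather than computational: one has to recognize that the right way to control the tail mass $\sum_{d_w \ge t} d_w$ is to trade a fractional power $d_w^{-1/3}$ against the threshold $t$, which is precisely what turns the expression into an $\ell_{4/3}$ quantity; the accompanying bookkeeping identity is simply the split $2 = \tfrac{4}{3} + \tfrac{2}{3}$ of the exponent on $d_v$. Once that trade-off is in hand, the remaining work is a one-line substitution. The only minor care needed is to absorb the constants $\delta,\delta'$ supplied by \Lem{xv} cleanly into the $O(\cdot)$, and to note that discarding the $w = w'$ terms and the symmetrization (both already handled inside \Lem{xv}) costs at most a constant factor and therefore does not affect the asymptotics.
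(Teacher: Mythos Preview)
Your proof is correct and follows essentially the same approach as the paper. The only cosmetic difference is the order of operations in the key step: you bound the tail sum $\sum_{w:\,d_w \ge \delta d_v} d_w \le (\delta d_v)^{-1/3}\sum_w d_w^{4/3}$ first and then square, whereas the paper expands the square into a triple sum and bounds each summand via $d_v^{2/3} \le \delta^{-2/3} d_w^{1/3} d_{w'}^{1/3}$ --- but these are the same inequality $d_v^2 d_w d_{w'} \le \delta^{-2/3}(d_v d_w d_{w'})^{4/3}$ packaged two ways.
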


\begin{proof} We use linearity of expectation and sum the bound in \Lem{xv}. Note
that $\exp(-\delta d_v) d^2_v$ is a decreasing function of $d_v$ and is hence $O(1)$.
The double summation of \Lem{xv} can be upper bounded by $(\sum_{w: d_w \geq \delta d_v} d_w)^2$.
$$ \EX[\sum_v X_v(X_v-1)] \lessdot n + m^{-2}\sum_v d^2_v\big(\sum_{w: d_w \geq \delta d_v} d_w\big)^2
= n + m^{-2}\sum_v \sum_{w: d_w \geq \delta d_v} \sum_{w': d_{w'} \geq \delta d_v} d_v^2 d_w  d_{w'}$$
This is the moment where the $4/3$ moment will appear.
Since $d_w \geq \delta d_v$ and $d_{w'} \geq \delta d_v$,
$d^{2/3}_v \leq \delta^{-2/3} d^{1/3}_w d^{1/3}_{w'}$.
Therefore, $d_v^2 d_w d_{w'} = d^{4/3}_v d^{2/3}_v d_w d_{w'}$
$\leq \delta^{-2/3} (d_v d_w d_{w'})^{4/3}$. Wrapping it up,
\begin{eqnarray*}
m^{-2}\sum_v \sum_{w: d_w \geq \delta d_v}
\sum_{w': d_{w'} \geq \delta d_v} d_v^2 d_w  d_{w'}&  \lessdot & m^{-2}\sum_v \sum_{w: d_w \geq \delta d_v}
\sum_{w': d_{w'} \geq \delta d_v}  (d_v d_w d_{w'})^{4/3}\\
& \lessdot &    m^{-2} (\sum_v d_v^{4/3})^3.
\end{eqnarray*}
\end{proof}

\section{Proving tightness} \label{sec:tight}

We show that the bound achieved by \Thm{main} is tight for power laws with $\alpha > 2$.
This shows that the bounds given in the proof of \Cor{power} are tight.
The proof, as expected, goes by reversing most of the inequalities given earlier.
For convenience, we will assume for the lower bound that $d_n < \sqrt{m}/4$, instead
of the $\sqrt{m}/2$ used for the upper bound. This makes for cleaner technical arguments
(we could just as well prove it for $\sqrt{m}/2$, at the cost of more pain). Proofs are
given in Appendix~\ref{app:tight}.

\begin{claim} \label{clm:tight} Let $\bd$ be a power law degree sequence with $\alpha \in (2, 7/3)$
with $d_n < \sqrt{m}/4$.
Then the expected number of $P_2$s enumerated by \algo{} over $\cm(\bd)$ is $\Omega(nd^{7-3\alpha}_n)$.
\end{claim}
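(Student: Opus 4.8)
The goal is to show that the upper bound from \Cor{power} is essentially tight for power laws with $\alpha\in(2,7/3)$: the expected number of enumerated $P_2$'s is $\Omega(nd_n^{7-3\alpha})$. The plan is to reverse the chain of inequalities used in the upper-bound proof. Recall that the expected running time equals $\EX[\sum_v X_v(X_v-1)] = \sum_v\sum_{w\neq w'}\EX[Y_{v,w}Y_{v,w'}]$. To get a lower bound I would restrict this sum to a carefully chosen collection of triples $(v,w,w')$ for which $\EX[Y_{v,w}Y_{v,w'}]$ is provably large, and show that these triples already contribute $\Omega(nd_n^{7-3\alpha})$.

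The natural choice is to let $v$ range over \emph{low-degree} vertices (degree $\Theta(1)$, of which there are $\Theta(n)$ by the power law) and let $w,w'$ range over \emph{high-degree} vertices, say with degree in $[d_n/2, d_n]$, of which there are $\Theta(n/d_n^{\alpha-1})$. For such a triple, $d_v$ is the minimum of the three degrees, so whenever edges $(v,w)$ and $(v,w')$ are both present they are \emph{automatically} placed in $v$'s bucket; hence $Y_{v,w}Y_{v,w'}=C_{v,w}C_{v,w'}$ and $\EX[Y_{v,w}Y_{v,w'}]=\Pr[C_{v,w}C_{v,w'}=1]$. (One should handle the tie-breaking rule so that a constant fraction of these triples genuinely land in $v$'s bucket, or simply pick $d_v$ strictly smaller than the high-degree block, which is fine since there are $\Theta(n)$ vertices of any fixed small degree.) The first key step is therefore a \emph{lower bound} matching \Clm{pair}: show $\Pr[C_{v,w}C_{v,w'}=1] = \Omega(d_v^2 d_w d_{w'}/m^2)$ for $v$ of bounded degree and $w,w'$ of degree $\le d_n < \sqrt m/4$. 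This follows by the same stub-exposure argument: expose two distinct $v$-stubs first; the probability the first matches one of the $d_w$ $w$-stubs is $d_w/(2m-1)$, and conditioned on that, the probability the second matches a $w'$-stub is $d_{w'}/(2m-3)$, and since $d_w,d_{w'}<\sqrt m/4$ there is a constant-probability event that these two matches create genuine simple edges $(v,w),(v,w')$ (not killed by erasure) — this is where the tighter truncation $\sqrt m/4$ buys cleaner constants. Summing over the $\Theta(d_v^2)$ ordered pairs of $v$-stubs gives the claimed $\Omega(d_v^2 d_w d_{w'}/m^2)$, and since $d_v=\Theta(1)$ we just get $\Omega(d_w d_{w'}/m^2)$.

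The second step is pure counting: summing $\Omega(d_w d_{w'}/m^2)$ over $\Theta(n)$ choices of $v$, and over all pairs $w\neq w'$ with $d_w,d_{w'}\in[d_n/2,d_n]$, yields
\[
\Omega\!\left(n\cdot m^{-2}\Big(\sum_{w:\,d_w\ge d_n/2} d_w\Big)^2\right)
=\Omega\!\left(n\cdot m^{-2}\big(\tfrac{n}{d_n^{\alpha-1}}\cdot d_n\big)^2\right)
=\Omega\!\left(n^3 d_n^{4-2\alpha}/m^2\right).
\]
Since for $\alpha\in(2,3)$ the power law gives $m=\Theta(n)$ (the degree sum is dominated by its constant-degree mass; one checks $\sum_k (n/2^{k(\alpha-1)})2^k = O(n)$), this is $\Omega(n d_n^{4-2\alpha})$. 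To finish I need $n d_n^{4-2\alpha} = \Omega(n d_n^{7-3\alpha})$, i.e.\ $d_n^{4-2\alpha}\ge d_n^{7-3\alpha}$, equivalently $4-2\alpha\ge 7-3\alpha$, i.e.\ $\alpha\ge 3$ — which is the \emph{wrong} direction. So the "two high-degree endpoints" family is not enough by itself for $\alpha<7/3$; I would instead take $v$ to range over a \emph{dyadic block} of intermediate degree $2^j$ and $w,w'$ over the top block $[d_n/2,d_n]$, with $v$ still being the minimum so edges go to $v$'s bucket, and optimize over $j$. That contribution is $\Omega\big(\sum_j (n/2^{j(\alpha-1)})\cdot 2^{2j}\cdot m^{-2}(n d_n^{2-\alpha})^2\big)$; the geometric sum over $j$ is dominated by its top term $j=\log_2 d_n$ (since $3-\alpha>0$), giving $\Omega\big(n d_n^{3-\alpha}\cdot m^{-2}(n d_n^{2-\alpha})^2\big)=\Omega(n d_n^{7-3\alpha})$ after using $m=\Theta(n)$. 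The main obstacle — and the step deserving the most care — is precisely this optimization combined with making the stub-exposure lower bound for $\Pr[C_{v,w}C_{v,w'}=1]$ hold uniformly in the block index with the correct constant despite erasure; one must verify that conditioning on the intermediate-degree vertex $v$ having enough surviving stubs (which holds with constant probability by \Lem{lower-tail}) does not destroy the $\Omega(d_v^2 d_w d_{w'}/m^2)$ estimate. Everything else is bookkeeping with dyadic sums of the kind already carried out in the proof of \Cor{power}, run with $\Omega$ in place of $O$.
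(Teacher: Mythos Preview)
Your final approach is essentially the paper's: sum $\EX[Y_{v,w}Y_{v,w'}]$ over $v$ in every dyadic degree block (not just the constant-degree block you first tried), pair each such $v$ with much higher-degree $w,w'$, and observe that the $v$-sum is a geometric series with exponent $7-3\alpha>0$ dominated by its top term $d_v\asymp d_n$. The paper differs from you only in that it lets $w,w'$ range over \emph{all} vertices with $d_w\ge c\,d_v$ rather than just the top block $[d_n/2,d_n]$; this buys the clean estimate $\sum_{w:d_w\ge c d_v} d_w=\Omega(n d_v^{2-\alpha})$ and then $\sum_v d_v^{6-2\alpha}\gg n d_n^{7-3\alpha}$ directly, without tracking the $d_n$ dependence through the $w$-sum.

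Two technical points you glossed over that the paper handles explicitly. First, your sentence ``summing over the $\Theta(d_v^2)$ ordered pairs of $v$-stubs gives $\Omega(d_v^2 d_w d_{w'}/m^2)$'' is an upper-bound maneuver, not a lower-bound one: the event $C_{v,w}C_{v,w'}=1$ is a \emph{union} over stub-pairs, and you cannot simply add to bound a union from below. The paper's \Clm{Y} sidesteps this by \emph{partitioning} the $d_v$ stubs of $v$ into two groups $B_w$ (of size $\lfloor d_v/3\rfloor$) and $B_{w'}$, exposes $B_w$ first to get probability $\Omega(d_vd_w/m)$ of hitting $w$, then exposes $B_{w'}$ for $\Omega(d_vd_{w'}/m)$ conditionally --- no inclusion--exclusion needed. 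Second, your identity $Y_{v,w}Y_{v,w'}=C_{v,w}C_{v,w'}$ whenever $d_v<d_w,d_{w'}$ is false as stated: bucket assignment is by \emph{realized} degrees $D_v,D_w$, not target degrees. You correctly flag this as the main obstacle, and the paper resolves it in \Clm{Y} by conditioning on the full matching of $v$'s stubs, then applying \Lem{lower-tail} to the residual configuration model to get $\Pr[D_w\ge d_v,\ D_{w'}\ge d_v]=\Omega(1)$ (this is where the gap $d_w>c\,d_v$ with large $c$, and the tighter truncation $d_n<\sqrt m/4$, are used). With these two fixes your argument is complete and matches the paper's.
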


\section{Tighter bounds on the running time}
\label{sec:careful}

Under a specific choice of degrees, we can pin down the running time of \algo{} up to lower order terms.
Rather than starting with an arbitrary degree sequence, we draw the degree for each vertex independently at random from a reference degree
distribution $\cD$, given by pdf $f$.  Specifically, $f(d)$ is the probability that a node draws degree value $d$, for $d$ and integer in $[0, \infty)$.
After nodes draw degree values, the rest of the ECM construction proceeds as described in \Sec{results}.

Formally, let $\cD_n$ be the distribution with support $\{1,2,\ldots,\lfloor\sqrt{n}/\log^2n\rfloor\}$,
where the probability of $d$ is proportional to $f(d)$. Note that we do not allow a degree of $0$
and cap the max degree at $\sqrt{n}/\log^2n$ (instead of $\sqrt{n}$). These are mostly
conveniences for a cleaner proof. We pick the degree sequence by taking $n$ i.i.d.
draws from $\cD_n$. So, the degree sequence $\bd$ is distributed according to the product $\cD^n_n$.
Then we generate an ECM with $\bd$.
For convenience, we denote $1 - 1/\sum_{d \leq n} f(d)$ by $\gamma_n$. Note that $\gamma_n \rightarrow 0$, as $n \rightarrow \infty$.
The probability of $d$ under $\cD_n$ is $f(d)(1-\gamma_n)$. We use $m = \sum_v d_v/2$ to denote
the number of edges in the \emph{multigraph} and heavily use $m \geq n/2$.

Our analysis assumes that when an edge joins two vertices of the same degree, the edge is placed in the bucket for both edges.  Thus we slightly
overcount the work for \algo.  Let  $X_{i,n}$ be the size of the bucket for an arbitrary node $i$ in a graph generated by ECM with $n$ nodes.
We wish to bound the expected triangle-searching work $\EX[\sum_{i=1}^n \binom{X_{i,n}}{2}]$ in an ECM graph, as the number of nodes $n\rightarrow \infty$.
We denote $r$th moment, $r>0$, of the reference degree distribution  $f$ as $\EX[d^r] = \sum_{t=1}^\infty t^r\cdot f(t)$.
The main theorem is as follows.

\begin{theorem} \label{thm:limit} Fix any $n$ and a degree distribution $\cD$ such that $\EX[d]$ and $\EX[d^{4/3}]$ are finite.
Then
\[
\lim_{n \rightarrow \infty} \frac{1}{n}\EX\left[\sum_{i=1}^n {X_{i,n} \choose 2}\right]   =   \frac{1}{2 (\EX[d])^2}   \sum_{t_1=1}^\infty \sum_{t_2=t_1}^\infty \sum_{t_3=t_1}^\infty t_1(t_1-1)t_2 t_3 f(t_1)f(t_2)f(t_3) \in (0,\infty).
\]
\end{theorem}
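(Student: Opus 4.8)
\textbf{Proof plan for \Thm{limit}.} The first move is to exploit exchangeability: since the degree sequence is drawn i.i.d.\ from $\cD_n$ and the ECM wiring is symmetric across stubs, $\EX[\sum_{i=1}^n \binom{X_{i,n}}{2}] = n\,\EX[\binom{X_{1,n}}{2}]$, so it suffices to show that $\EX[\binom{X_{1,n}}{2}]$ converges to the claimed constant. Write $Y_{1,w}=C_{1,w}\mathbf{1}[d_1\le d_w]$ for the indicator that edge $(1,w)$ lands in node $1$'s bucket (ties going to both buckets, as assumed). Exactly as in the proof of \Lem{xv}, $\binom{X_{1,n}}{2}=\tfrac12\sum_{w\ne w'}Y_{1,w}Y_{1,w'}$ since each $Y_{1,w}$ is $0$-$1$, so conditioning on the realized degree sequence $\bd$ and using linearity reduces everything to the conditional pair probability $\Pr[Y_{1,w}Y_{1,w'}=1\mid\bd]$ for distinct $w,w'$.

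The technical core is an asymptotically tight two-sided estimate
\[
\Pr\big[Y_{1,w}Y_{1,w'}=1\,\big|\,\bd\big]=\mathbf{1}[d_w\ge d_1]\,\mathbf{1}[d_{w'}\ge d_1]\,\frac{d_1(d_1-1)d_w d_{w'}}{(2m)^2}\,\big(1+o(1)\big),
\]
with $o(1)$ uniform over all vertices given the cap $d_n\le\sqrt n/\log^2 n$ and $m\ge n/2$. The upper bound is the argument of \Clm{pair}, but retaining $(2m-1)(2m-3)=(2m)^2(1+o(1))$ in place of the cruder $\ge m^2$ (which would lose the sharp constant); equivalently, $\Pr[Y_{1,w}Y_{1,w'}=1\mid\bd]\le\EX[M_{1w}M_{1w'}\mid\bd]=d_1(d_1-1)d_wd_{w'}/((2m-1)(2m-3))$ where $M_{1w}$ is the number of $1$-$w$ edges in the multigraph, together with the bucket constraint. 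For the matching lower bound one uses that an edge survives erasure iff it is present at least once, so $Y_{1,w}Y_{1,w'}=\mathbf{1}[M_{1w}\ge1]\mathbf{1}[M_{1w'}\ge1]\cdot\mathbf{1}[d_w\ge d_1]\mathbf{1}[d_{w'}\ge d_1]$, and $M_{1w}M_{1w'}-\mathbf{1}[M_{1w}\ge1]\mathbf{1}[M_{1w'}\ge1]$ is nonzero only on $\{M_{1w}M_{1w'}\ge2\}$, whose contribution to the expectation is $O\big(d_1^3 d_w d_{w'}(d_w+d_{w'})/m^3\big)$; this is $o\big(d_1^2d_wd_{w'}/m^2\big)$ precisely because products of \emph{two} truncated degrees are $o(n)$ under the cap $\sqrt n/\log^2n$. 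One also needs the concentration $2m=\sum_i d_i=n\,\EX_{\cD_n}[d]\,(1+o_P(1))$ (Chebyshev suffices: $\var(2m)=n\,\var_{\cD_n}(d)\le n\,\EX_{\cD_n}[d^2]\le n^2/\log^4 n$) together with $\EX_{\cD_n}[d]\to\EX[d]<\infty$ (monotone convergence, using $\gamma_n\to0$), so that $(2m)^2$ may be replaced by $n^2(\EX[d])^2$ inside expectations after discarding the low-probability event $\{|2m-n\EX[d]|>\epsilon n\}$.

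Granting the pointwise estimate, summing over $w,w'$ and taking expectation over $\bd$ gives, with $d_1=t_1$,
\[
\EX\!\left[\binom{X_{1,n}}{2}\right]=\frac{1+o(1)}{2\,n^2(\EX[d])^2}\;\EX\!\left[t_1(t_1-1)\Big(\textstyle\sum_{w:\,d_w\ge t_1}d_w\Big)^2\right]+o(1),
\]
the diagonal term $\sum_{w:d_w\ge t_1}d_w^2$ being $O(1/\log^4 n)$ after dividing by $n^2$ (again by the cap). For fixed $t_1$, the law of large numbers gives $\sum_{w:\,d_w\ge t_1}d_w=n\sum_{t\ge t_1}t\,f(t)\,(1+o_P(1))$, so the bracket equals $n^2\sum_{t_1}t_1(t_1-1)f(t_1)\big(\sum_{t\ge t_1}t\,f(t)\big)^2(1+o(1))$, which on expanding the square is exactly the stated triple sum times $n^2$. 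Interchanging the limit with the $t_1,t_2,t_3$ summations is justified by dominated convergence, using the Hölder inequality already invoked in \Sec{sense}: for $t_2,t_3\ge t_1$ one has $t_1^2 t_2 t_3\le(t_1t_2t_3)^{4/3}$, so the summand is dominated by $t_1^{4/3}t_2^{4/3}t_3^{4/3}f(t_1)f(t_2)f(t_3)$ with total mass $(\EX[d^{4/3}])^3<\infty$. This finiteness supplies the upper end of $(0,\infty)$; positivity holds whenever $f$ puts mass on some $t\ge 2$ (otherwise the multigraph is a perfect matching and the limit is genuinely $0$).

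The main obstacle is the second paragraph: establishing the lower bound on $\Pr[Y_{1,w}Y_{1,w'}=1\mid\bd]$ with an error that is uniformly $o(1)$ while simultaneously propagating the randomness of $2m$ and the truncation correction $\gamma_n$ through the three nested sums without the accumulated errors overwhelming the $\Theta(1)$ main term. The stricter cap $\sqrt n/\log^2 n$ (rather than $\sqrt n$) is exactly what forces pairwise products of degrees to be $o(n)$, controlling the multi-edge corrections, and the $\EX[d^{4/3}]$ bound is what makes the final interchange of limit and summation rigorous.
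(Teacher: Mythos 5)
Your overall architecture matches the paper's: reduce to a single vertex by exchangeability, estimate the pair probabilities to within a $1+o(1)$ factor, take the expectation over the i.i.d.\ degrees after concentrating $m$, and pass to the limit by dominating the summand with $t_1^{4/3}t_2^{4/3}t_3^{4/3}f(t_1)f(t_2)f(t_3)$. The last two stages correspond closely to \Clm{A}, \Clm{m}, and \Sec{limit}, and your remark that positivity requires $f$ to put mass on some $t\ge 2$ is a correct refinement of the statement.

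The gap is in the core estimate. You define $Y_{1,w}=C_{1,w}\,\mathbf{1}[d_1\le d_w]$ using the \emph{prescribed} degrees, but \algo{} buckets edges by the \emph{realized} degrees $D_v$ of the erased simple graph, which can differ from $d_v$ because of self-loops and multi-edges. Consequently $\mathbf{1}[d_w\ge d_1]\mathbf{1}[d_{w'}\ge d_1]$ does not describe the event $\cZ_{1,w,w'}$ that the wedge lands in bucket $1$: when $d_w<d_1$ the wedge can still land there (if erasure pushes $D_1$ below $D_w$), and when $d_w\ge d_1$ it can fail to (if erasure pushes $D_w$ below $D_1$). Your correction via $M_{1w}M_{1w'}-\mathbf{1}[M_{1w}\ge 1]\mathbf{1}[M_{1w'}\ge 1]$ controls edge survival only, not degree deviation. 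What is missing is exactly the content of \Clm{cond-deg} and \Lem{cond-bound}: a bound of the form $\Pr[D_u\neq d_u\mid \cU_{v_i,w_k}\cap\cU_{v_j,w'_\ell}]\le (\log^3 n)/\sqrt{m}$, proved by observing that after conditioning on two specific stub matchings the remaining stubs form an ECM on a shifted degree sequence, combined with the truncated inclusion--exclusion step (\Thm{bb} via \Clm{wed-upper}) needed to convert these stub-level conditional bounds into bounds on $\Pr[\cW_{1,w,w'}\cap\cE_1]$ and its analogues. Without this, both the lower bound in your main case and the claimed negligibility of the $d_w<d_1$ case are unproved, and the dependencies of the configuration model make them nontrivial --- one cannot condition on the wedge being present and treat the remaining degree of a vertex as unaffected. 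Your closing paragraph correctly flags the uniform lower bound as ``the main obstacle,'' but the proposal as written defines that obstacle away rather than clearing it.
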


Note that the expectation for the running time is over two ``sources" of randomness, the degree sequence, and the actual configuration graph.
Throughout this section, we use $o(1)$ to denote a quantity that goes to zero as $n \rightarrow \infty$.
We use the shorthand $A = B \pm C$ for $A \in [B - C, B+C]$. This is often done with $C = o(1)$.

The overall structure of the proof can be thought of as a three-stage process.
First, we fix the degree sequence $d_1, d_2, \ldots, d_n$ (and number of vertices $n$)
and bound $\sum_{i=1}^n {X_{i,n} \choose 2}$. This is analogous to what was done earlier in \Sec{cohen}.
We require a more careful analysis where we keep track of various constant factors, so that
the final limit can be precisely stated. We then take expectations over the degree sequence.
Finally, we take the limit $n \rightarrow \infty$.

\subsection{Bin sizes for low degree vertices} \label{sec:fixed}

We fix the degree sequence $\bd = (d_1, d_2, \ldots, d_n)$ and consider the probability space of
$\cm(\bd)$. As before, we use $D_v$ to denote the degree of $v$ in the resulting
simple graph. We denote the $u$-stubs (for all vertices $u$)
by $u_1, u_2, \ldots u_{d_u}$. Since $n$ is fixed here, we drop the subscript $n$ from $X_{v,n}$.

The main lemma of this section is a tight bound on the wedge count in $v$'s bin, when $d_v$ is small.
This is a tigher analogue of \Lem{xv}.

\begin{lemma} \label{lem:xv-sharp} Suppose $d_v \leq \log n$. $$ \EX[X_v(X_v-1)] =
(1 \pm o(1)) d_v(d_v-1)\Big( \sum_{w \neq v: d_w \geq d_v} \sum_{w' \notin \{v,w\}: d_{w'} \geq d_v}  d_w d_{w'}\Big)/(2m)^2 \pm o(1) $$
\end{lemma}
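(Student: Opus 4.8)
The plan is to condition on the true degree $D_v$ of $v$ in the erased graph. Since erasing parallel edges and loops only lowers degrees, $D_v\le d_v$ always, and $\{D_v<d_v\}$ forces $v$ to own a loop or a doubled neighbour among its $d_v$ stubs; a union bound over stub pairs gives $\Pr[D_v<d_v]=O\big(d_v^2/m+d_v^2\sum_r d_r^2/m^2\big)$, which under $d_v\le\log n$, $m\ge n/2$ and the degree cap $d_r\le\sqrt n/\log^2 n$ (hence $\sum_r d_r^2/m^2=O(1/(\sqrt n\log^2 n))$) is $O(1/\sqrt n)$. Because $X_v(X_v-1)\le D_v(D_v-1)\le d_v^2$, the event $\{D_v<d_v\}$ contributes only $O(d_v^2/\sqrt n)=o(1)$ to $\EX[X_v(X_v-1)]$, which I absorb into the additive $\pm o(1)$; it therefore suffices to estimate $\EX\big[X_v(X_v-1)\mathbf{1}[D_v=d_v]\big]$.

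On $\{D_v=d_v\}$ the stubs of $v$ reach $d_v$ distinct vertices $w(1),\dots,w(d_v)$, all different from $v$, and edge $(v,w(i))$ lies in $v$'s bucket exactly when $d_v=D_v\le D_{w(i)}$, so $X_v=\sum_i\mathbf{1}[D_{w(i)}\ge d_v]$ on this event. Expanding the product and using symmetry among $v$'s stubs,
\[
\EX\big[X_v(X_v-1)\mathbf{1}[D_v=d_v]\big]=d_v(d_v-1)\,\Pr\big[D_v=d_v,\ D_{w(1)}\ge d_v,\ D_{w(2)}\ge d_v\big].
\]
I then split this probability over the identities $w(1)=u$, $w(2)=u'$, where only distinct $u,u'\ne v$ with $d_u,d_{u'}\ge d_v$ contribute (since $D_u<d_v$ deterministically when $d_u<d_v$). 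A direct count in the uniform stub matching gives $\Pr[w(1)=u,\ w(2)=u']=d_ud_{u'}/[(2m-1)(2m-3)]$, and conditioned on a witnessing pair of stub matchings the rest of the configuration is an $\cm$ on the sequence with $d_v,d_u,d_{u'}$ lowered by $2,1,1$ and $2m$ by $4$. On that reduced model $\{D_v=d_v\}$ becomes the ``no defect at $v$'' event and $\{D_u\ge d_v\}$ becomes $\{\widetilde D_u\ge d_v-1\}$ (likewise $u'$); a union bound together with \Lem{lower-tail} bounds the conditional probability below by $1-O(\eta_{u,u'})-q_u-q_{u'}$, where $\eta_{u,u'}$ is a union bound over the ways $v$ picks up a defect in the reduced model (a loop, a repeated neighbour, or a stub landing on $u$ or $u'$), of order $d_v^2/m+d_v^2\sum_r d_r^2/m^2+d_v(d_u+d_{u'})/m$, and $q_u:=\Pr[\widetilde D_u< d_v-1]$. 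Summing and using $(2m-1)(2m-3)=(1-o(1))(2m)^2$ (uniformly, since $m\ge n/2$), the leading term is precisely $(1\pm o(1))\,d_v(d_v-1)\big(\sum_{w\ne v:\,d_w\ge d_v}\sum_{w'\notin\{v,w\}:\,d_{w'}\ge d_v} d_w d_{w'}\big)/(2m)^2$, the claimed main term.

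The remaining task --- the crux --- is to show the error terms are $o(1)$. The $\eta_{u,u'}$-contributions aggregate to $O\big(d_v^4/m+d_v^4\sum_r d_r^2/m^2+d_v^3\sum_r d_r^2/m^2\big)=o(1)$ under the hypotheses. For the $q_u$-contributions one is left, after extracting a summation, to bound $(d_v^2/m)\sum_{u:\,d_u\ge d_v}d_u q_u$, and I split this sum at $d_u=(\log n)^2$. When $d_v\le d_u\le(\log n)^2$, $\{\widetilde D_u<d_v-1\}\subseteq\{\widetilde D_u<\widetilde d_u\}$ is again a no-defect failure at $u$, so $q_u=O((\log n)^4/m+(\log n)^4\sum_r d_r^2/m^2)=O(\log^2 n/\sqrt n)$ uniformly, contributing $O(d_v^2\log^2 n/\sqrt n)=o(1)$. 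When $d_u>(\log n)^2$ one has $d_v-1\le\beta'\widetilde d_u$ with room to spare, so \Lem{lower-tail} gives $q_u\le\exp(-\beta(d_u-1))\le\exp(-\beta((\log n)^2-1))$, which is super-polynomially small; summed against $d_u\le\sqrt n/\log^2 n$ over $\le n$ vertices and multiplied by $d_v^2/m$, this remains super-polynomially small. The main obstacle is calibrating the split: it must be $\omega(\log n)$ so that \Lem{lower-tail} applies with a wide gap for large $d_u$, yet $o(\sqrt n/\mathrm{polylog})$ so that the no-defect bounds for moderate $d_u$ survive multiplication by $d_v^2\le\log^2 n$ --- this is exactly where $d_v\le\log n$ and the cap $\sqrt n/\log^2 n$ are both used.
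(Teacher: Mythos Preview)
Your argument is correct, but it proceeds along a genuinely different route from the paper's.

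The paper decomposes $\EX[X_v(X_v-1)]$ as $\sum_{w\neq v}\sum_{w'\notin\{v,w\}}\Pr[\cZ_{v,w,w'}]$, where $\cZ_{v,w,w'}$ is the event that the wedge $\{(v,w),(v,w')\}$ lands in $v$'s bin, and then proves sharp two-sided bounds on each $\Pr[\cZ_{v,w,w'}]$ (\Lem{cond-bound}). To do so it writes $\cW_{v,w,w'}$ as a union of overlapping stub-level events and controls this via the Boole--Bonferroni inequalities (\Thm{bb}, \Clm{wed-upper}), together with conditional degree estimates (\Clm{cond-deg}). The ``bad'' case $\min(d_w,d_{w'})<d_v$ is handled by showing $\Pr[\cZ_{v,w,w'}]$ gains an extra $m^{-1/4}$ factor.

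You instead first peel off $\{D_v<d_v\}$ wholesale, using the deterministic bound $X_v(X_v-1)\le d_v^2\le\log^2 n$ together with $\Pr[D_v<d_v]=O(1/\sqrt n)$; then on $\{D_v=d_v\}$ you rewrite $X_v(X_v-1)$ as a sum over ordered stub pairs, which---because $v$'s stubs now hit $d_v$ \emph{distinct} vertices---indexes \emph{disjoint} events, so stub exchangeability reduces everything to one probability and no inclusion--exclusion is needed. The case $d_u<d_v$ becomes trivially zero since $D_u\le d_u<d_v=D_v$. Your error analysis (the $\eta_{u,u'}$ and $q_u$ terms, and the threshold split at $d_u=\log^2 n$) is sound and uses exactly the ingredients available: the defect union bound for moderate $d_u$ and \Lem{lower-tail} for large $d_u$.

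What your route buys is a more elementary argument that avoids Boole--Bonferroni entirely and makes the ``bad'' case vanish for free. What the paper's route buys is a modular standalone estimate on $\Pr[\cZ_{v,w,w'}]$ (\Lem{cond-bound}) for every triple $(v,w,w')$, which is of independent interest and is reused in the later calculations.
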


We first prove numerous intermediate claims about the following events. We use $c$ to denote
a large enough constant.
\smallskip
\begin{asparaitem}
\item $\cE_v$: The event that $D_v \neq d_v$.
\item $\cF_{v,b}$: Let $b$ be a number. This is the event that $D_v < b$.
\item $\cU_{u_i,v_j}$: For stubs $u_i, v_j$, this is the event that the $u_i$ is matched to $v_j$ in the multigraph.
\item $\cW_{v,w,w'}$: The event that wedge $\{(v,w), (v,w')\}$ is formed.
\item $\cZ_{v,w,w'}$: The event that wedge $\{(v,w), (v,w')\}$ is in $v$'s bin.
\end{asparaitem}
Note that
$\cW_{v,w,w'} = \bigcup_{i \neq j \leq d_v, k \leq d_w, \ell \leq d_{w'}} \Big(\cU_{v_i,w_k} \cap \cU_{v_j, w'_\ell}\Big).$
It will be convenient to denote the index set $\{(i,j,k,\ell) | i \neq j \leq d_v, k \leq d_w, \ell \leq d_{w'}\}$
as $\bI$.

We begin by getting handles on $\cE_v, \cF_{v,b}$.

\begin{claim} \label{clm:cond-deg} Fix distinct vertices $v, w, w'$, distinct stubs $v_i, v_j, w_k, w'_\ell$, and an arbitrary vertex $u$.
\begin{asparaitem}
\item For any $u$ such that $d_u < c\log n$, $Pr[\cE_u | \cU_{v_i,w_k} \cap \cU_{v_j,w'_\ell}] < (\log^3 n)/\sqrt{m}$.
\item For any $u$ and $b$ such that $d_u > b > \log n$, then $Pr[\cF_{u,b} | \cU_{v_i,w_k} \cap \cU_{v_j,w'_\ell}] < (\log^3 n)/\sqrt{m}$.
\end{asparaitem}
\end{claim}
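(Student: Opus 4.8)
I would prove both bullets by first stripping the conditioning and then establishing an unconditional degree tail bound. For the first step, copy the device from the proofs of \Clm{pair} and \Clm{arms}: order the stubs $v_i, v_j, w_k, w'_\ell$ first in the ECM wiring, so that conditioned on $\cU_{v_i,w_k}\cap\cU_{v_j,w'_\ell}$ the remaining stubs form a fresh ECM on the degree sequence $\tilde\bd$ obtained from $\bd$ by replacing $d_v, d_w, d_{w'}, 2m$ with $d_v-2, d_w-1, d_{w'}-1, 2m-4$; writing $\tilde m = m-2$, every term of $\tilde\bd$ is still below $\sqrt{\tilde m}/2$. Let $\tilde d_u \in \{d_u, d_u-1, d_u-2\}$ be $u$'s degree in $\tilde\bd$. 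After conditioning, $D_u$ is $u$'s degree $\tilde D_u$ in the fresh ECM plus the contribution ($\le 2$) of the fixed edges touching $u$, up to the $O(\log n/\sqrt m)$ chance that a surviving $u$-stub re-hits $v$, $w$, or $w'$. So it suffices to bound, in the unconditional ECM on $\tilde\bd$: (a) $\Pr[\tilde D_u \neq \tilde d_u]$ when $\tilde d_u < c\log n$; and (b) $\Pr[\tilde D_u < b]$ when $\log n < b < \tilde d_u$.

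\textbf{First bullet.} For (a), note $\tilde D_u \neq \tilde d_u$ forces two $u$-stubs to be matched to each other, or to stubs of a common vertex. Union-bound over the $\binom{\tilde d_u}{2} = O(\log^2 n)$ pairs of $u$-stubs: a pair forms a self-loop with probability $\le 1/(2\tilde m - 1)$, and (ordering that pair first, as in \Clm{pair}) lands on a common vertex $x$ with probability $\le d_x^2/\tilde m^2$; summing over $x$ and using $\sum_x d_x^2 \le d_{\max}\cdot 2\tilde m < \tilde m^{3/2}$ (by the truncation $d_{\max} < \sqrt{\tilde m}/2$) gives at most $\tilde m^{-1/2}$. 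Hence $\Pr[\tilde D_u \neq \tilde d_u] = O(\log^2 n/\sqrt{\tilde m})$, which is below $(\log^3 n)/\sqrt m$ once $n$ is large.

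\textbf{Second bullet.} For (b), run the sequential matching of \Lem{lower-tail} with $u$'s stubs ordered first, and set $X_j = 1$ iff the $j$-th $u$-stub joins $u$ to a vertex not already adjacent to $u$, so $\tilde D_u = \sum_{j=1}^{\tilde d_u} X_j$; at step $j \le \tilde d_u$ there are $\ge \tilde m$ unmatched stubs, of which at most $(j-1)d_{\max}+\tilde d_u \le 2\tilde d_u d_{\max} \le \tilde m/2$ are ``bad'', so $\Pr[X_j = 1 \mid Y_1,\ldots,Y_{j-1}] \ge \tfrac12$. Now split on $\tilde d_u$. If $\tilde d_u$ is polylog-small (say $\le \log^{3/2}n$), then $\Pr[\tilde D_u < b] \le \Pr[\tilde d_u - \tilde D_u \ge 1] \le \EX[\tilde d_u - \tilde D_u] \le \binom{\tilde d_u}{2}d_{\max}/\tilde m + \tilde d_u^2/\tilde m = O(\tilde d_u^2/\sqrt{\tilde m})$, below $(\log^3 n)/\sqrt m$ for $n$ large. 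If $\tilde d_u > \log^{3/2}n$ and $b \le \tilde d_u/4$ (the situation in every application of this claim in \Sec{fixed}, where $b$ is of order $\log n$), apply \Lem{mart} with $\alpha = \tfrac12$, $k = \tilde d_u$, $\delta = 2b/\tilde d_u \le \tfrac12$, obtaining $\Pr[\tilde D_u < b] \le \Pr[\sum_j X_j < b] < \exp(-\tilde d_u(1-\delta)^2/4) \le \exp(-\tilde d_u/16) = o((\log^3 n)/\sqrt m)$ since $\tilde m \ge n/2 - 2$.

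\textbf{Main obstacle.} The fiddly step is the second bullet: the two crude tails --- Markov on the count of ``lost'' stubs when $\tilde d_u$ is polylog-small, \Lem{mart} when $\tilde d_u$ is large --- must be calibrated against the precise target $(\log^3 n)/\sqrt m$, and both hypotheses of the claim are doing real work. The truncation $d_n < \sqrt m/2$ is what keeps the per-step success probability $\ge \tfrac12$ and $\sum_x d_x^2$ below $m^{3/2}$; the condition $b > \log n$ is exactly what pushes $\exp(-\Omega(\tilde d_u))$ comfortably below $1/\sqrt m$. One also has to keep $b$ a constant factor below $\tilde d_u$ wherever \Lem{mart} is invoked, which holds in \Sec{fixed} because the thresholds there are of order $\log n$, but is worth stating explicitly.
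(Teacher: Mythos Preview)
Your approach is essentially the paper's: pass to the residual ECM on the degree sequence with the four stubs removed, union-bound over the $O(\log^2 n)$ pairs of $u$-stubs for the first bullet (self-loops contribute $O(1/m)$ each, parallel edges to a common $x$ contribute $O(d_x^2/m^2)$ each, and $\sum_x d_x^2 \lessdot m^{3/2}$ by the degree cap), and for the second bullet split on the size of $d_u$.

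The one place where the paper is slicker is the small-$d_u$ case of the second bullet. Rather than re-running a Markov bound on the number of lost stubs, the paper simply observes that $\cF_{u,b} = \{D_u < b\} \subset \{D_u < d_u\} = \cE_u$ (since $b < d_u$ and $D_u \le d_u$ always in the ECM), so the first bullet applies verbatim whenever $d_u < c\log n$. This avoids your separate polylog threshold and the fiddly reconciliation between $D_u$, $\tilde D_u$, and the fixed-edge contributions. For the large-$d_u$ case the paper, like you, appeals to \Lem{mart}; your remark that this really needs $b$ bounded by a fixed constant times $d_u$ (which holds in every use in \Sec{fixed}, where $b = d_v \le \log n$ while $d_u \ge c\log n$) is a genuine gap in the paper's one-line justification, and worth stating.
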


\begin{proof} Conditioning on the event $E \equiv \cU_{v_i,w_k} \cap \cU_{v_j,w'_\ell}$   means
that these stubs are matched, with $2m-4$ stubs remaining in the multigraph;  the event $E$ itself creates no self-loops or multi-edges for vertex $u$, even if $u\in\{v,w,w'\}$, as $v,w,w'$ are distinct.
Let $S \equiv \{u_1,\ldots,u_{d_u}\}\setminus \{v_i,w_k, v_j,w'_\ell\}$.  For any stubs $u_i,u_j \in S$ of vertex $u$, $i \neq j$, the event $A_{u_i,u_j} \equiv $``$u_i,u_j$ pair" has
probability $P(A_{u_i,u_j}|E) \leq (2m-5)^{-1}$, and the event $B_{u_i,u_j,z} \equiv$``$u_{i},u_j$ pair to stubs
of vertex $z\neq u$" has  probability  $P(B_{u_i,u_j,z}|E) \leq d_z (d_z-1)/[(2m-5)(2m-7)]$. Then, $Pr[\cE_u | \cU_{v_i,w_k} \cap \cU_{v_j,w'_\ell}]$ is at most
\[
Pr\left(  \left[\bigcup_{u_i \neq u_j \in S}A_{u_i,u_j}\right] \bigcup
\left[\bigcup_{u_i \neq u_j \in S, \atop z \neq u}B_{u_i,u_j,z}\right]
\Big| E\right) \leq \sum_{u_i \neq u_j \in S} Pr(A_{u_i,u_j}|E) + \sum_{u_i \neq u_j \in S, \atop z \neq u}Pr(B_{u_i,u_j,z}|E),
\]
which is bounded by  $(c \log n)^2 [2m-5]^{-1} +  (c \log n)^2 \sum_{z}  d_z^2/[(2m-5)(2m-7)]
<  (\log n)^3/\sqrt{m}$,
using $\sum_{z}  d_z^2 \leq \sqrt{n} 2 m $ and $ \sqrt{n/m} \leq \sqrt{2}$.

If $d_u < c\log n$, the second part is simply a consequence of the first part. If $d_u \geq c\log n$,
we can apply \Lem{mart} to bound the probability by $1/\sqrt{m}$.
\end{proof}

Our next step is to try to remove the conditioning on $\cU_{v_i,w_k} \cap \cU_{v_j,w'_\ell}$. We will
need the \emph{Boole-Bonferroni inequalities} (Prop. C.2 in~\cite{MoRa95}).

\begin{theorem} \label{thm:bb}  For any finite set of events $\cF_1, \cF_2, \ldots, \cF_r$,
$$ \sum_{i \leq r} \Pr[\cF_i] - \sum_{i < j \leq r} \Pr[\cF_i \cap \cF_j] \leq \Pr[\bigcup_{i \leq r} \cF_i] \leq \sum_{i \leq r} \Pr[\cF_i]$$
\end{theorem}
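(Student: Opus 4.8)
The plan is to prove both inequalities at once by reducing each to an elementary pointwise inequality between integer-valued functions on the sample space, and then integrating against the probability measure. Fix an outcome $\omega$ and let $k = k(\omega) = \sum_{i \leq r} \mathbf{1}[\omega \in \cF_i]$ count how many of the events contain $\omega$. Each of the three quantities in the statement has a clean expression as a function of $k$: summing single indicators gives $\sum_{i \leq r}\mathbf{1}[\omega \in \cF_i] = k$; summing pairwise-intersection indicators gives $\sum_{i<j\leq r}\mathbf{1}[\omega \in \cF_i \cap \cF_j] = \binom{k}{2}$, since $\omega \in \cF_i \cap \cF_j$ precisely when both indices lie among the $k$ events covering $\omega$; and the union indicator is $\mathbf{1}[\omega \in \bigcup_{i\le r}\cF_i] = \mathbf{1}[k \geq 1]$. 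Taking expectations turns each of these into the corresponding term of the theorem, so it suffices to establish the two scalar inequalities $k - \binom{k}{2} \leq \mathbf{1}[k\ge 1] \leq k$ for every nonnegative integer $k$.

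The upper bound $\mathbf{1}[k\ge 1] \le k$ is immediate: both sides vanish when $k=0$, and when $k \ge 1$ the left side equals $1 \le k$. Integrating recovers the union bound. The lower bound $k - \binom{k}{2} \le \mathbf{1}[k \ge 1]$ is the substantive direction, which I would dispatch by cases on $k$. For $k=0$ and $k=1$ both sides agree (equal to $0$ and $1$ respectively), and for $k=2$ the left side is $2-1=1$, again matching the right side. The only regime in which the truncated inclusion--exclusion might fail is $k \ge 3$, where $\binom{k}{2}$ grows quadratically while $k$ grows linearly; concretely $k - \binom{k}{2} = k(3-k)/2 \le 0 \le \mathbf{1}[k\ge 1]$ for all $k \ge 3$. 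Integrating this pointwise inequality yields exactly the lower Bonferroni bound.

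The only point demanding any care---and the closest thing to an obstacle---is this lower bound for large $k$: an alternating partial sum of inclusion--exclusion can \emph{undershoot} the true probability, so one must check that the quadratic $\binom{k}{2}$ term is already large enough to force $k - \binom{k}{2}$ nonpositive once $k \ge 3$, ensuring no outcome covered by three or more events can violate the bound. This verification is precisely what makes the reduction to scalar inequalities the cleanest route: it sidesteps an induction on $r$ whose inductive step for the lower bound would otherwise require reintroducing and bounding the very intersection terms one is trying to control. Since the entire argument is pointwise and uses neither independence nor any structure of the $\cF_i$, it applies verbatim to the events arising in our application.
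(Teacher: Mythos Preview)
Your proof is correct; the pointwise reduction to the scalar inequalities $k - \binom{k}{2} \le \mathbf{1}[k\ge 1] \le k$ followed by taking expectations is a clean and standard route to both Bonferroni bounds, and your case analysis on $k$ is accurate. The paper, however, does not prove this theorem at all: it merely quotes the statement and cites an external reference (Prop.~C.2 of Motwani--Raghavan), so there is no ``paper's own proof'' to compare against. Your argument would serve perfectly well as a self-contained replacement for that citation.
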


We have a general claim about probabilities of events in conjunction with $\cW_{v,w,w'}$.

\begin{claim} \label{clm:wed-upper} Fix an arbitrary event $\cC$. Suppose for all $(i,j,k,\ell) \in \bI$, $\Pr[\cC | \cU_{v_i,w_k} \cap \cU_{v_j, w'_\ell}] \in [B_L, B_U]$.
Then,
\begin{eqnarray*}
\Pr[\cC \cap \cW_{v,w,w'}] \leq B_U d_v(d_v-1) d_w d_{w'}/(2m-1)(2m-2) \\
\Pr[\cC \cap \cW_{v,w,w'}] \geq (B_L - 1/\log n)d_v(d_v-1) d_w d_{w'}/(2m-1)(2m-2)
\end{eqnarray*}
\end{claim}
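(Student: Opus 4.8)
The plan is to expand $\cW_{v,w,w'}$ into the union $\bigcup_{(i,j,k,\ell)\in\bI}(\cU_{v_i,w_k}\cap\cU_{v_j,w'_\ell})$ and apply the Boole–Bonferroni inequalities of Theorem~\ref{thm:bb} to the intersection of this union with $\cC$. First I would write $\Pr[\cC\cap\cW_{v,w,w'}] = \Pr[\bigcup_{(i,j,k,\ell)\in\bI}(\cC\cap\cU_{v_i,w_k}\cap\cU_{v_j,w'_\ell})]$ and bound each term as $\Pr[\cC\cap\cU_{v_i,w_k}\cap\cU_{v_j,w'_\ell}] = \Pr[\cC\mid\cU_{v_i,w_k}\cap\cU_{v_j,w'_\ell}]\cdot\Pr[\cU_{v_i,w_k}\cap\cU_{v_j,w'_\ell}]$. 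As in the proof of Claim~\ref{clm:pair}, ordering the stubs $v_i,v_j$ first in the ECM wiring gives $\Pr[\cU_{v_i,w_k}\cap\cU_{v_j,w'_\ell}] = 1/[(2m-1)(2m-3)]$; using $(2m-1)(2m-3)\geq(2m-1)(2m-2)$ is too crude in the wrong direction, so I would instead keep the exact value and note that it lies between $1/[(2m-1)(2m-2)]$-type quantities — more carefully, I would just carry the factor $1/[(2m-1)(2m-3)]$ through and observe it matches what Claim~\ref{clm:pair} used, then translate to $(2m-1)(2m-2)$ in the final bound absorbing the discrepancy into the $1/\log n$ slack (this is the one fiddly point; see below). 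Since $|\bI| = d_v(d_v-1)d_wd_{w'}$, the upper half of Theorem~\ref{thm:bb} immediately yields $\Pr[\cC\cap\cW_{v,w,w'}] \leq B_U\, d_v(d_v-1)d_wd_{w'}/[(2m-1)(2m-2)]$ once the per-pair probability is bounded above by $1/[(2m-1)(2m-2)]$.

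For the lower bound I would use the other half of Theorem~\ref{thm:bb}: $\Pr[\cC\cap\cW_{v,w,w'}] \geq \sum_{(i,j,k,\ell)}\Pr[\cC\cap\cU_{v_i,w_k}\cap\cU_{v_j,w'_\ell}] - \sum_{\text{pairs}}\Pr[(\cC\cap\cU_{v_i,w_k}\cap\cU_{v_j,w'_\ell})\cap(\cC\cap\cU_{v_{i'},w_{k'}}\cap\cU_{v_{j'},w'_{\ell'}})]$. The first sum is at least $B_L\,d_v(d_v-1)d_wd_{w'}/[(2m-1)(2m-2)]$ (again after the per-pair translation). The second sum is the error term: each pairwise intersection forces at least three, and generically four, specific stub-matchings simultaneously, so each such probability is $O(1/m^2)$; the number of unordered pairs of index tuples is $O((d_v d_w d_{w'})^2)$, but crucially the pairs that actually contribute a nonzero intersection are constrained — two tuples sharing no stub have joint probability $\Theta(1/[(2m)(2m)(2m)(2m)])$ while the count of compatible pairs is $\Theta((d_v(d_v-1)d_wd_{w'})^2)$, giving a second-sum bound of order $(d_v(d_v-1)d_wd_{w'})^2/(2m)^4$. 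Since $d_v\leq\log n$ (which is the regime this claim is used in, via the hypotheses of Lemma~\ref{lem:xv-sharp}) and $d_w,d_{w'}$ contribute one power each of $2m$ worth of total mass, this error is at most a $1/\log n$ fraction of the main term $d_v(d_v-1)d_wd_{w'}/[(2m-1)(2m-2)]$, yielding the stated $(B_L - 1/\log n)$ form.

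The main obstacle I anticipate is the bookkeeping in the second-moment (Bonferroni correction) term: one must split into cases according to how many stubs two index tuples share (they can share the $w$-stub $k=k'$, the $w'$-stub $\ell=\ell'$, or one of the $v$-stubs $\{i,j\}\cap\{i',j'\}\neq\emptyset$), bound the joint matching probability in each case by the appropriate power of $1/(2m-O(1))$, multiply by the number of such pairs, and check that the dominant case still leaves a quantity that is $o(1)$ relative to the main term — this is where the $d_v\leq\log n$ and maximum-degree hypotheses get used, and where the somewhat arbitrary-looking $1/\log n$ slack in the claim statement comes from. A secondary nuisance is reconciling the exact normalizing constant $1/[(2m-1)(2m-3)]$ from the stub-ordering argument with the $1/[(2m-1)(2m-2)]$ appearing in the statement; the cleanest fix is to absorb the $O(1/m)$ relative gap into the same $1/\log n$ error already present, since $1/m \ll 1/\log n$. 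Everything else is a routine application of Theorem~\ref{thm:bb} together with the per-pair probability computation already carried out in Claim~\ref{clm:pair}.
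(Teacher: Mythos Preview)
Your approach is essentially the paper's: apply Boole--Bonferroni (\Thm{bb}) to the events $\cC\cap\cU_{v_i,w_k}\cap\cU_{v_j,w'_\ell}$, use the hypothesis on $\Pr[\cC\mid\cdot]$ together with $\Pr[\cU_{v_i,w_k}\cap\cU_{v_j,w'_\ell}]$ for the first-order sum, and bound the pairwise-intersection correction by a case split on how many stub-pairs the two index tuples share. The paper isolates that correction as a separate claim and gets exactly the two cases you anticipate (four distinct pairs contributing $|\bI|^2/m^4$, one shared pair contributing $|\bI|\cdot d_v(d_w+d_{w'})/m^3$).

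One correction worth noting: you do not need, and the claim does not assume, $d_v\leq\log n$. The $1/\log n$ slack comes solely from the global maximum-degree cap $d_u\leq\sqrt{n}/\log^2 n$ built into $\cD_n$ together with $m\geq n/2$; with these, the multiplier $d_v(d_v-1)d_wd_{w'}/m^2 + d_v(d_w+d_{w'})/m$ is $O(1/\log^4 n)\leq 1/\log n$ regardless of $d_v$. Relying on $d_v\leq\log n$ would prove only a restricted version of the claim as stated. Your observation about the $(2m-1)(2m-3)$ versus $(2m-1)(2m-2)$ discrepancy is accurate; the paper silently writes $(2m-1)(2m-2)$, and as you say the $O(1/m)$ relative gap is swallowed by the $1/\log n$ slack anyway.
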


\begin{proof} Applying \Thm{bb} to the events $\cC \cap \cU_{v_i,w_k} \cap \cU_{v_j, w'_\ell}$,
\begin{eqnarray} & & \sum_{(i,j,k,\ell) \in \bI} \Pr[\cC \cap \cU_{v_i,w_k} \cap \cU_{v_j, w'_\ell}]
- \sum_{(i,j,k,\ell) \neq (\hat{i},\hat{j},\hat{k},\hat{\ell}) \in \bI} \Pr[\cC \cap \cU_{v_i,w_k} \cap \cU_{v_j, w'_\ell} \cap \cU_{v_{\hat{i}},w_{\hat{k}}} \cap \cU_{v_{\hat{j}}, w'_{\hat{\ell}}}] \nonumber \\
& \leq & \Pr[\cC \cap \cW_{v,w,w'}] \leq \sum_{(i,j,k,\ell) \in \bI} \Pr[\cC \cap \cU_{v_i,w_k} \cap \cU_{v_j, w'_\ell}] \label{eq:bb}
\end{eqnarray}
We bound each sum separately.
\begin{eqnarray*}
\sum_{(i,j,k,\ell) \in \bI} \Pr[\cC \cap \cU_{v_i,w_k} \cap \cU_{v_j, w'_\ell}] & = & \sum_{(i,j,k,\ell) \in \bI} \Pr[\cC | \cU_{v_i,w_k} \cap \cU_{v_j, w'_\ell}] \Pr[\cU_{v_i,w_k} \cap \cU_{v_j, w'_\ell}] \\
& \leq & B_U \sum_{(i,j,k,\ell) \in \bI} \Pr[\cU_{v_i,w_k} \cap \cU_{v_j, w'_\ell}] \\
& = & B_U |\bI|/(2m-1)(2m-2) = B_U d_v(d_v-1) d_w d_{w'}/(2m-1)(2m-2)
\end{eqnarray*}
This completes the upper bound proof. By an identical argument, we get
$\sum_{(i,j,k,\ell) \in \bI} \Pr[\cC \cap \cU_{v_i,w_k} \cap \cU_{v_j, w'_\ell}] \geq B_L d_v(d_v-1) d_w d_{w'}/(2m-1)(2m-2)$.
We deal with the double summation in the next claim. Applying this bound to \Eqn{bb} completes the proof.
\end{proof}

\begin{claim} $\sum_{(i,j,k,\ell) \neq (\hat{i},\hat{j},\hat{k},\hat{\ell}) \in \bI} \Pr[\cC \cap \cU_{v_i,w_k} \cap \cU_{v_j, w'_\ell} \cap \cU_{v_{\hat{i}},w_{\hat{k}}} \cap \cU_{v_{\hat{j}}, w'_{\hat{\ell}}}] \leq (1/\log n) (d_v(d_v-1) d_w d_{w'}/m^2)$.
\end{claim}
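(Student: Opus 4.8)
The plan is to first discard the event $\cC$: since $\Pr[\cC\cap A]\le\Pr[A]$ for every event $A$, it suffices to bound the purely combinatorial sum $\sum\Pr[\cU_{v_i,w_k}\cap\cU_{v_j,w'_\ell}\cap\cU_{v_{\hat i},w_{\hat k}}\cap\cU_{v_{\hat j},w'_{\hat\ell}}]$ over all ordered pairs of distinct tuples $(i,j,k,\ell),(\hat i,\hat j,\hat k,\hat\ell)\in\bI$. Each summand is the probability that four prescribed pairs of stubs are simultaneously matched in the uniform random perfect matching on the $2m$ stubs. This probability is $0$ unless the four pairs are pairwise disjoint, and when they reduce to $p$ distinct disjoint edges it equals $\prod_{s=0}^{p-1}(2m-1-2s)^{-1}\le m^{-p}$ for $m$ large, since $\prod_{s=0}^{p-1}(2m-1-2s)\ge(2m-2p+1)^p\ge m^p$ for $p\le 4$.

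Next I would classify the tuple-pairs by $p$. The structural point is that, because $v,w,w'$ are distinct vertices, a $v$-$w$ edge can coincide only with another $v$-$w$ edge and a $v$-$w'$ edge only with another $v$-$w'$ edge; hence among the four edges $v_iw_k$, $v_jw'_\ell$, $v_{\hat i}w_{\hat k}$, $v_{\hat j}w'_{\hat\ell}$ the only possible coincidences are ``$v_iw_k=v_{\hat i}w_{\hat k}$'' (forcing $i=\hat i$, $k=\hat k$) and ``$v_jw'_\ell=v_{\hat j}w'_{\hat\ell}$'' (forcing $j=\hat j$, $\ell=\hat\ell$). Since the two tuples are distinct they cannot satisfy both, so $p\in\{3,4\}$. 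The number of tuple-pairs with $p=4$ is at most $|\bI|^2=(d_v(d_v-1)d_wd_{w'})^2$. The number with $p=3$ is at most $2|\bI|\,d_v\max(d_w,d_{w'})$: fix the first tuple (at most $|\bI|$ choices), note that exactly one of the two coincidences holds, which pins two of the four coordinates of the second tuple, and let the remaining two range freely.

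Putting these together, the sum is at most $|\bI|^2m^{-4}+2|\bI|\,d_v\max(d_w,d_{w'})\,m^{-3}$. Writing $N:=|\bI|/m^2=d_v(d_v-1)d_wd_{w'}/m^2$, this equals $N^2+N\cdot\big(2d_v\max(d_w,d_{w'})/m\big)$, so it remains to verify $N\le 1/(2\log n)$ and $2d_v\max(d_w,d_{w'})/m\le 1/(2\log n)$. Both are immediate from the standing bounds of this section, $d_v\le\log n$, $d_w,d_{w'}\le\sqrt n/\log^2 n$, and $m\ge n/2$: the first gives $N=O(1/(n\log^2 n))$ and the second gives $2d_v\max(d_w,d_{w'})/m=O(1/(\sqrt n\,\log n))$, both $o(1/\log n)$. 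Hence for $n$ large the sum is at most $N/\log n=(1/\log n)\big(d_v(d_v-1)d_wd_{w'}/m^2\big)$, which is the claim. (The degenerate cases $d_v\le 1$ or $d_wd_{w'}=0$ are trivial, since then $\bI=\emptyset$.)

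The only non-routine step is the coincidence analysis in the middle paragraph: one must carefully check which of the eight stub labels can collide in order to conclude $p\ge 3$ and to obtain an honest count of the $p=3$ contribution; everything else is substituting the degree and edge-count bounds already available in this section. As elsewhere in this section, the inequality is understood to hold for $n$ sufficiently large.
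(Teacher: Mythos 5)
Your proof is correct and follows essentially the same route as the paper's: drop the event $\cC$, observe that two distinct index tuples can share at most one stub pair (so the union involves $3$ or $4$ distinct matched pairs), count each case by $|\bI|^2$ and $|\bI|\,d_v(d_w+d_{w'})$ respectively, and bound the resulting multiplier by $1/\log n$ using $d_v,d_w,d_{w'}\le\sqrt{n}/\log^2 n$ and $m\ge n/2$. Your explicit treatment of pairs that are distinct but share a stub (probability zero) is a slightly cleaner rendering of the same argument.
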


\begin{proof} We can simply upper bound $\Pr[\cC \cap \cU_{v_i,w_k} \cap \cU_{v_j, w'_\ell} \cap \cU_{v_{\hat{i}},w_{\hat{k}}} \cap \cU_{v_{\hat{j}}, w'_{\hat{\ell}}}] \leq
\Pr[\cU_{v_i,w_k} \cap \cU_{v_j, w'_\ell} \cap \cU_{v_{\hat{i}},w_{\hat{k}}} \cap \cU_{v_{\hat{j}}, w'_{\hat{\ell}}}]$.

Consider $(i,j,k,\ell) \neq (\hat{i},\hat{j},\hat{k},\hat{\ell})$. The corresponding event involves stub pairs $(v_i,w_k)$, $(v_j, w'_\ell)$, $(v_{\hat{i}},w_{\hat{k}})$,
and $(v_{\hat{j}}, w'_{\hat{\ell}})$. If all these pairs are distinct (even if the stubs are common), then $\Pr[\cU_{v_i,w_k} \cap \cU_{v_j, w'_\ell} \cap \cU_{v_{\hat{i}},w_{\hat{k}}} \cap \cU_{v_{\hat{j}}, w'_{\hat{\ell}}}] \leq 1/m^4$. The number of pairs of tuples $(i,j,k,\ell) \neq (\hat{i},\hat{j},\hat{k},\hat{\ell})$ is at most $|\bI|^2$.

Since $(i,j,k,\ell) \neq (\hat{i},\hat{j},\hat{k},\hat{\ell})$, at most one of
the pairs can be the same. If (say) $(v_i,w_k) = (v_{\hat{i}},w_{\hat{k}})$, then $\Pr[\cU_{v_i,w_k} \cap \cU_{v_j, w'_\ell} \cap \cU_{v_{\hat{i}},w_{\hat{k}}} \cap \cU_{v_{\hat{j}}, w'_{\hat{\ell}}}] \leq 1/m^3$. The number of pairs of such $\bI$-tuples is at most $|\bI| (d_v d_w + d_v d_{w'})$.

\begin{eqnarray*} & & \sum_{(i,j,k,\ell), (\hat{i},\hat{j},\hat{k},\hat{\ell}) \in \bI} \Pr[\cU_{v_i,w_k} \cap \cU_{v_j, w'_\ell} \cap \cU_{v_{\hat{i}},w_{\hat{k}}} \cap \cU_{v_{\hat{j}}, w'_{\hat{\ell}}}]\\
& \leq & |\bI|^2/m^4 + |\bI| (d_v d_w + d_v d_{w'})/m^3
= (d_v(d_v-1) d_w d_{w'}/m^2) (d_v(d_v-1) d_w d_{w'}/m^2 + d_v(d_w + d_{w'})/m)
\end{eqnarray*}
Since $d_v, d_w, d_{w'} \leq \sqrt{n}/\log^2n$ and $m \geq n/2$, the final multiplier is at most $1/\log n$.
\end{proof}

As a corollary of \Clm{wed-upper}, we can set $\cC = \cW_{v,w,w'}$ and $B_L = B_U = 1$ to get the following.

\begin{claim} \label{clm:pr-wed} $\Pr[\cW_{v,w,w'}] = (1 \pm 1/\log n) d_v(d_v-1) d_w d_{w'}/(2m-1)(2m-2)$
\end{claim}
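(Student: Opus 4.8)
The plan is to derive \Clm{pr-wed} as an immediate corollary of \Clm{wed-upper}, applied with the arbitrary event $\cC$ taken to be $\cW_{v,w,w'}$ itself. The only point to verify is that the hypothesis of \Clm{wed-upper} holds for this choice with the degenerate constants $B_L = B_U = 1$. This is clear: for every index tuple $(i,j,k,\ell) \in \bI$, the event $\cU_{v_i,w_k} \cap \cU_{v_j,w'_\ell}$ matches stub $v_i$ to stub $w_k$ and stub $v_j$ to stub $w'_\ell$, which in particular forces edges $(v,w)$ and $(v,w')$ into the multigraph, so the wedge $\{(v,w),(v,w')\}$ is necessarily formed. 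Hence $\Pr[\cW_{v,w,w'} \mid \cU_{v_i,w_k} \cap \cU_{v_j,w'_\ell}] = 1$, which trivially lies in the interval $[B_L,B_U] = [1,1]$ for all of $\bI$.

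I would then substitute $\cC = \cW_{v,w,w'}$ and $B_L = B_U = 1$ into the two displayed inequalities of \Clm{wed-upper}. Using $\cW_{v,w,w'} \cap \cW_{v,w,w'} = \cW_{v,w,w'}$, the upper inequality becomes
\[
\Pr[\cW_{v,w,w'}] \leq \frac{d_v(d_v-1)\, d_w d_{w'}}{(2m-1)(2m-2)},
\]
and the lower inequality becomes
\[
\Pr[\cW_{v,w,w'}] \geq \Big(1 - \tfrac{1}{\log n}\Big)\frac{d_v(d_v-1)\, d_w d_{w'}}{(2m-1)(2m-2)}.
\]
Combining the two yields exactly $\Pr[\cW_{v,w,w'}] = (1 \pm 1/\log n)\, d_v(d_v-1) d_w d_{w'}/(2m-1)(2m-2)$, as claimed.

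I do not anticipate any obstacle: the argument is a one-line specialization of the previous claim. The only degenerate situation worth a remark is $d_v \leq 1$, where $\bI = \emptyset$ and $\cW_{v,w,w'}$ is an empty union, so $\Pr[\cW_{v,w,w'}] = 0$; but the right-hand side also vanishes through the factor $d_v(d_v-1)$, so the identity holds trivially in that case and \Clm{wed-upper} need not be invoked.
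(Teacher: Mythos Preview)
Your proposal is correct and is exactly the approach the paper takes: the paper states \Clm{pr-wed} as a corollary of \Clm{wed-upper} obtained by setting $\cC = \cW_{v,w,w'}$ and $B_L = B_U = 1$. Your added remarks verifying the conditional probability equals $1$ and handling the degenerate case $d_v \leq 1$ are fine elaborations of that one-line derivation.
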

Now for an important lemma that bounds the probabilities of $\cZ_{v,w,w'}$.
\begin{lemma} \label{lem:cond-bound} Fix distinct vertices $v, w, w'$, such that $d_v < c\log n$.
\begin{asparaitem}
\item If $\min\{d_w, d_{w'}\} \geq d_v$, $\Pr[\cZ_{v,w,w'}] = (1 \pm o(1)) d_v(d_v-1) d_w d_{w'}/(2m)^2$.
\item If $\min\{d_w, d_{w'}\} < d_v$, $\Pr[\cZ_{v,w,w'}] = O((1/m^{1/4})\cdot d_v(d_v-1) d_w d_{w'}/m^2)$.
\end{asparaitem}
\end{lemma}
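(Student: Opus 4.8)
The plan is to reduce the statement to three things already in hand: the wedge‑probability estimate \Clm{pr-wed}, the conditional degree bounds \Clm{cond-deg}, and the transfer inequality \Clm{wed-upper}. Under the stated convention that a degree tie places an edge in both incident buckets, the wedge $\{(v,w),(v,w')\}$ lies in $v$'s bin exactly when it is formed and $v$ is a weakly‑lowest‑degree endpoint on each side, i.e.
\[
\cZ_{v,w,w'} \;=\; \cW_{v,w,w'} \cap \{D_v \le D_w\} \cap \{D_v \le D_{w'}\}.
\]
If $d_v \le 1$ no wedge at $v$ can exist and both parts are trivially $0$, so assume $d_v \ge 2$.

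I would handle the second bullet first. Say $d_w < d_v$ (the roles of $w,w'$ are symmetric). Since the erased configuration model only deletes edges, $D_w \le d_w < d_v$ always, so $\{D_v \le D_w\}$ forces $D_v < d_v$, i.e. the event $\cE_v$; hence $\cZ_{v,w,w'} \subseteq \cE_v \cap \cW_{v,w,w'}$. Because $d_v < c\log n$, the first bullet of \Clm{cond-deg} gives $\Pr[\cE_v \mid \cU_{v_i,w_k}\cap\cU_{v_j,w'_\ell}] < \log^3 n/\sqrt m$ for every $(i,j,k,\ell)\in\bI$, so \Clm{wed-upper} with $\cC=\cE_v$ and $B_U=\log^3 n/\sqrt m$ gives $\Pr[\cZ_{v,w,w'}] \le (\log^3 n/\sqrt m)\, d_v(d_v-1)d_w d_{w'}/((2m-1)(2m-2))$. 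The standing bounds $m \ge n/2$ and $d_i \le \sqrt n/\log^2 n$ for all $i$ (so $m \le n^{3/2}$ and $\log^3 n/\sqrt m = o(m^{-1/4})$), together with $(2m-1)(2m-2) \ge m^2$, turn this into the claimed $O(m^{-1/4}\cdot d_v(d_v-1)d_w d_{w'}/m^2)$.

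For the first bullet, $\min\{d_w,d_{w'}\}\ge d_v$, the upper bound is immediate: $\cZ_{v,w,w'}\subseteq\cW_{v,w,w'}$ and \Clm{pr-wed}, plus $(2m-1)(2m-2)=(2m)^2(1-o(1))$, give $\Pr[\cZ_{v,w,w'}] \le (1+o(1))\,d_v(d_v-1)d_w d_{w'}/(2m)^2$. For the matching lower bound I would write $\cZ_{v,w,w'}=\cW_{v,w,w'}\setminus(\{D_v>D_w\}\cup\{D_v>D_{w'}\})$, so that
\[
\Pr[\cZ_{v,w,w'}] \ge \Pr[\cW_{v,w,w'}] - \Pr[\cW_{v,w,w'}\cap\{D_v>D_w\}] - \Pr[\cW_{v,w,w'}\cap\{D_v>D_{w'}\}],
\]
and bound each subtracted term by \Clm{wed-upper} with $\cC=\{D_v>D_w\}$ (resp. $\{D_v>D_{w'}\}$). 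The input needed is $\Pr[D_v>D_w \mid \cU_{v_i,w_k}\cap\cU_{v_j,w'_\ell}] = o(1)$ uniformly over $\bI$, which follows from \Clm{cond-deg}: conditioned on the two matched stub‑pairs, $D_v=d_v$ except with probability $\log^3 n/\sqrt m$ (first bullet, since $d_v<c\log n$), and $D_w\ge d_v$ except with probability $\log^3 n/\sqrt m$ — by the first bullet applied to $w$ when $d_w<c\log n$ (then $D_w=d_w\ge d_v$), and by the second bullet with a threshold $b$ satisfying $d_v\le b<d_w$ and $b>\log n$ when $d_w\ge c\log n$. On the complement of these two bad events $D_w\ge D_v$, so the subtracted events are impossible. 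Hence each subtracted term is $o(1)\cdot d_v(d_v-1)d_w d_{w'}/((2m-1)(2m-2))$, and combining with \Clm{pr-wed} and $(2m-1)(2m-2)=(1-o(1))(2m)^2$ yields $\Pr[\cZ_{v,w,w'}] = (1\pm o(1))\, d_v(d_v-1)d_w d_{w'}/(2m)^2$.

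The step I expect to be the main obstacle is precisely this last uniform conditional bound on $\Pr[D_v>D_w\mid\cdot]$ when $d_w\ge c\log n$: one must produce an integer threshold $b$ with $d_v\le b<d_w$ and $b>\log n$ so that the second bullet of \Clm{cond-deg} applies, and verifying such a $b$ exists needs $c$ chosen large enough plus a small case split — take $b=\lfloor\log n\rfloor+1$ when $d_v\le\log n$, and $b=d_v$ when $d_v>\log n$ (using that $d_w\ne d_v$ in this regime, since $d_w\ge c\log n>d_v$). Everything else — passing between $(2m-1)(2m-2)$ and $(2m)^2$, checking $\log^3 n/\sqrt m=o(m^{-1/4})$ from the degree cap, and the $d_v\le1$ case — is routine bookkeeping with the standing bounds $m\ge n/2$ and $d_i\le\sqrt n/\log^2 n$.
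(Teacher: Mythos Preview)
Your proposal is correct and follows essentially the same approach as the paper: containment $\cZ_{v,w,w'}\subseteq\cW_{v,w,w'}$ for the upper bound, then inclusion--exclusion against the degree ``bad events'' bounded via \Clm{cond-deg} and transferred through \Clm{wed-upper} for the lower bound, and in the second case containment of $\cZ_{v,w,w'}$ in a rare degree event intersected with $\cW_{v,w,w'}$. Your second-case argument is in fact slightly cleaner than the paper's --- observing $D_w\le d_w<d_v$ always holds lets you use $\cE_v$ alone, whereas the paper uses $\cE_v\cup\cE_w$ --- and your care about choosing the threshold $b$ for the second bullet of \Clm{cond-deg} addresses a point the paper glosses over; but these are cosmetic differences, not a different route.
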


\begin{proof} Start with the first case of $\min\{d_w, d_{w'}\} \geq d_v$.
Since $\cZ_{v,w,w'} \subset \cW_{v,w,w'}$, $\Pr[\cZ_{v,w,w'}] \leq \Pr[\cW_{v,w,w'}]$ and \Clm{pr-wed}
completes the upper bound.

Consider the event $\cW_{v,w,w'} \cap \overline{\cE_v} \cap \overline{\cF_{w,d_v}} \cap \overline{\cF_{w',d_v}}$.
In this case, $D_v = d_v$ and $D_w, D_{w'} \geq d_v$, so $\cZ_{v,w,w'}$ contains this event.
\begin{eqnarray*}
\Pr[\cZ_{v,w,w'}] & \geq &
\Pr[\cW_{v,w,w'} \cap \overline{\cE_v} \cap \overline{\cF_{w,d_v}} \cap \overline{\cF_{w',d_v}}] \\
& = & \Pr[\cW_{v,w,w'} \cap \overline{\cE_v \cup \cF_{w,d_v} \cup \cF_{w',d_v}}] \\
& = & \Pr[\cW_{v,w,w'}] - \Pr[\cW_{v,w,w'} \cap (\cE_v \cup \cF_{w,d_v} \cup \cF_{w',d_v})] \\
& \geq & \Pr[\cW_{v,w,w'}] - \Pr[\cW_{v,w,w'} \cap \cE_v] - \Pr[\cW_{v,w,w'} \cap \cF_{w,d_v}] - \Pr[\cW_{v,w,w'} \cap \cF_{w',d_v}]
\end{eqnarray*}
(The last inequality follows by a union bound.)
We can bound $\Pr[\cW_{v,w,w'} \cap \cE_v]$ by applying the upper bound of \Clm{wed-upper}
with $B_U = (\log^3n)/\sqrt{m}$ (as obtained from \Clm{cond-deg}). This gives an upper bound of $((\log^3n)/\sqrt{m})d_v(d_v-1) d_w d_{w'}/(2m)^2$
$= o(1) \cdot d_v(d_v-1) d_w d_{w'}/(2m)^2$.
Identical arguments hold for $\Pr[\cW_{v,w,w'} \cap \cF_{w,d_v}]$ and $\Pr[\cW_{v,w,w'} \cap \cF_{w',d_v}]$.
Using the bound from \Clm{pr-wed} for $\Pr[\cW_{v,w,w'}]$, $\Pr[\cZ_{v,w,w'}] \geq (1-o(1))d_v(d_v-1) d_w d_{w'}/(2m)^2$.
This completes the first case.

Now, suppose $\min\{d_w, d_{w'}\} < d_v$. Note that $\cZ_{v,w,w'}$ is contained
in $\cW_{v,w,w'} \cap (\cE_v \cup \cE_w)$. This is because when $\overline{\cE_v \cup \cE_w}$
occurs, $D_v = d_v > d_w = D_w$, so the wedge cannot be present in $v$'s bin.
By the union bound, $\Pr[\cZ_{v,w,w'}] \leq \Pr[\cW_{v,w,w'} \cap \cE_v] + \Pr[\cW_{v,w,w'} \cap \cE_w]$.
Using the argument above, this is at most $((2\log^3n)/\sqrt{m})d_v(d_v-1) d_w d_{w'}/(2m)^2 = O((1/m^{1/4})\cdot d_v(d_v-1) d_w d_{w'}/m^2)$.
\end{proof}
We are ready to prove the main lemma.
\begin{proof} (of \Lem{xv-sharp}) Note that $\EX[X_v(X_v-1)] = \EX[\sum_{w \neq v} \sum_{w' \notin \{v,w\}} \mathbb{I}(\cZ_{v,w,w'})]$.\
By linearity of expectation, this sum is $\sum_{w \neq v} \sum_{w' \notin \{v,w\}} \Pr[\cZ_{v,w,w'}]$, which
can be split as follows.
\begin{eqnarray*}
\sum_{w \neq v} \sum_{w' \notin \{v,w\}} \Pr[\cZ_{v,w,w'}] & = &
\sum_{w \neq v: d_w \geq d_v} \sum_{w' \notin \{v,w\}: d_{w'} \geq d_v} \Pr[\cZ_{v,w,w'}]
+ \sum_{w \neq v: d_w \geq d_v} \sum_{w' \notin \{v,w\}: d_{w'} < d_v} \Pr[\cZ_{v,w,w'}]\\
& & + \sum_{w \neq v: d_w < d_v} \sum_{w' \notin \{v,w\}} \Pr[\cZ_{v,w,w'}]
\end{eqnarray*}
We deal with each of these summations using \Lem{cond-bound}. In the first summation, $\min\{d_w,d_{w'}\} \geq d_v$.
$$ \sum_{w \neq v: d_w \geq d_v} \sum_{w' \notin \{v,w\}: d_{w'} \geq d_v} \Pr[\cZ_{v,w,w'}]
= (1 \pm o(1)) d_v(d_v-1)\sum_{w \neq v: d_w \geq d_v} \sum_{w' \notin \{v,w\}: d_{w'} \geq d_v}  d_w d_{w'}/(2m)^2$$
In the second summation, $\min\{d_w,d_{w'}\} < d_v$.
$$ \sum_{w \neq v: d_w \geq d_v} \sum_{w' \notin \{v,w\}: d_{w'} < d_v} \Pr[\cZ_{v,w,w'}]
\lessdot d_v(d_v-1)/m^{2+1/4} \sum_w \sum_{w'} d_w d_{w'} \lessdot (\log n)^2/m^{1/4} = o(1)
$$
(We use the fact that $\sum_w d_w = 2m$, and the bound of $d_v < c\log n$.
The third summation can be handled similarly, completing the proof.
\end{proof}

\subsection{Expectation over $\cD_n$} \label{sec:exp}

Our aim is to take the expectation of \Lem{xv-sharp} over the degrees.
We will distinguish over the sources of randomness, by using $\EX_{\bd}[\ldots]$
to denote expectations over $\bd \sim \cD^n_n$. We use $\EX_G[\ldots]$ for the
expectation over the graph chosen from $\cm(\bd)$.
Because all vertices are basically identical, we just focus on the first vertex.
The main lemma is a fairly precise expression for the expectation of \Lem{xv-sharp}.
We denote the degree threshold $\sqrt{n}/\log^2n$ for $\cD_n$ by $M(n)$.

\begin{lemma} \label{lem:wedge-limit}
\begin{eqnarray*}
\EX_{\bd} \EX_G[X_1(X_1-1)] & = & (1 \pm o(1))/\EX[d_2]^2) \Big( \sum_{t_1 \leq \log n} \sum_{t_2 = t_1}^{M(n)}
\sum_{t_3 = t_1}^{M(n)} t_1 (t_1-1)t_2 t_3 f(t_1) f(t_2) f(t_3)\Big) \pm o(1) \\
& & + O\Big(\sum_{t_1 > \log n}^{M(n)} \sum_{t_2 = \delta t_1}^{M(n)} \sum_{t_3 = \delta t_1}^{M(n)} t^2_1 t_2 t_3 f(t_1) f(t_2) f(t_3)\Big)
\end{eqnarray*}
\end{lemma}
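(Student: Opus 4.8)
The plan is to take the expectation of the tight per-bucket bound in \Lem{xv-sharp} over the random degree sequence $\bd \sim \cD_n^n$, splitting into the ``low-degree'' regime $d_1 \le \log n$ (where \Lem{xv-sharp} is valid) and the ``high-degree'' regime $d_1 > \log n$ (where we fall back on the cruder \Lem{xv}-style estimate, which produces the $O(\cdots)$ error term). First I would focus on the main term. Fix $d_1 = t_1 \le \log n$. By \Lem{xv-sharp}, $\EX_G[X_1(X_1-1)]$ equals $(1\pm o(1)) t_1(t_1-1)\big(\sum_{w\neq 1: d_w \ge t_1}\sum_{w'\notin\{1,w\}: d_{w'}\ge t_1} d_w d_{w'}\big)/(2m)^2 \pm o(1)$. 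Now take $\EX_{\bd}$ of this, conditioned on $d_1 = t_1$. The double sum over $w, w'$ has $(n-1)(n-2)$ terms, each of which (when $w, w'$ range over the $\le \log n$ threshold) is an i.i.d.\ pair of draws from $\cD_n$; by linearity, $\EX_{\bd}\big[\sum_{w\neq 1, w'\notin\{1,w\}} d_w d_{w'}\mathbb{I}(d_w, d_{w'} \ge t_1)\big] = (n-1)(n-2)\big(\sum_{t\ge t_1}^{M(n)} t f(t)(1-\gamma_n)\big)^2$, up to the diagonal $w = w'$ correction which is only $O(n)$ and hence negligible against the $(2m)^{-2} = \Theta(n^{-2})$ prefactor once we also divide through.

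Next I would handle the $(2m)^2$ denominator. Here the subtlety is that $m = \sum_v d_v/2$ is itself random and correlated with nothing problematic: by the law of large numbers / a concentration bound, $2m/n \to \EX[d]$ almost surely and in expectation, and more carefully one shows $\EX_{\bd}[(n/2m)^2 \cdot (\text{numerator})] = (1 \pm o(1))(1/\EX[d]^2)\cdot \EX_{\bd}[\text{numerator}]/n^2$; the heavy-tail finiteness hypothesis $\EX[d^{4/3}] < \infty$ (hence $\EX[d] < \infty$) is what keeps $m$ concentrated and the ratio well-behaved. One must also note $\EX[d_2]$ in the statement is just $\EX[d]$ for an independent draw — the paper writes $d_2$ to emphasize it is a fresh copy, independent of $d_1$. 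Assembling: $\EX_{\bd}\EX_G[X_1(X_1-1)] \mid d_1 = t_1$ becomes $(1\pm o(1)) \frac{t_1(t_1-1)}{\EX[d]^2}\big(\sum_{t=t_1}^{M(n)} t f(t)\big)^2 \pm o(1)$ after the $(1-\gamma_n)^2 = 1 - o(1)$ factors are absorbed. Finally, multiply by $f(t_1)(1-\gamma_n)$ and sum over $t_1 \le \log n$: expanding the square $\big(\sum_{t\ge t_1} t f(t)\big)^2 = \sum_{t_2 \ge t_1}\sum_{t_3 \ge t_1} t_2 t_3 f(t_2) f(t_3)$ yields exactly the claimed triple sum $\sum_{t_1 \le \log n}\sum_{t_2 = t_1}^{M(n)}\sum_{t_3 = t_1}^{M(n)} t_1(t_1-1) t_2 t_3 f(t_1) f(t_2) f(t_3)$ divided by $\EX[d]^2$, with the $\pm o(1)$ from the additive errors surviving the sum (there are at most $\log n$ values of $t_1$, each contributing $o(1)$, but one must check this more carefully — the cleanest route is that the additive $o(1)$ in \Lem{xv-sharp} can be taken uniformly $\le 1/\mathrm{poly}(n)$, so summing $\log n$ of them is still $o(1)$).

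For the high-degree regime $d_1 > \log n$, I would not aim for a sharp constant: apply the bound from \Lem{xv} (or re-derive it), $\EX_G[X_1(X_1-1)] = O\big(\exp(-\delta d_1) d_1^2 + m^{-2} d_1^2 (\sum_{w: d_w \ge \delta d_1} d_w)^2\big)$, then take $\EX_{\bd}$ conditioned on $d_1 = t_1$. The exponential term sums to $o(1)$ over all $t_1$. For the second term, $\EX_{\bd}[(\sum_{w: d_w \ge \delta t_1} d_w)^2 / (2m)^2 \mid d_1 = t_1] = O\big((\sum_{t \ge \delta t_1}^{M(n)} t f(t))^2 / \EX[d]^2\big)$ by the same concentration argument, so the contribution is $O\big(\sum_{t_1 > \log n}^{M(n)} t_1^2 f(t_1) (\sum_{t_2 \ge \delta t_1}^{M(n)} t_2 f(t_2))(\sum_{t_3 \ge \delta t_1}^{M(n)} t_3 f(t_3))\big)$, which is exactly the stated $O(\cdot)$ term (the $1/\EX[d]^2$ is absorbed into the big-Oh).

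The main obstacle I anticipate is bookkeeping the randomness of $m$ in the denominator while chasing the $(1 \pm o(1))$ factors: because $\EX_G[X_1(X_1-1)]$ from \Lem{xv-sharp} already carries a $1/(2m)^2$ and the numerator is a product of degrees, one is really asking for $\EX_{\bd}$ of a ratio of correlated sums, not a ratio of expectations, so a direct application of linearity is not quite legal. The fix is to use that $2m$ is tightly concentrated: write $2m = n\EX[d](1 + Z_n)$ with $Z_n \to 0$ (in probability and, with a moment bound from $\EX[d^{4/3}] < \infty$, with enough uniform integrability), split the expectation over the event $|Z_n| \le \epsilon_n$ and its complement (the complement has probability $o(1)$ and the integrand there is crudely bounded by $\mathrm{poly}(n)$ since all degrees are $\le M(n) \le \sqrt n$), and on the good event pull $1/(2m)^2 = (1 \pm o(1))/(n\EX[d])^2$ out as a deterministic factor. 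The rest is the routine linearity-of-expectation and square-expansion computation sketched above.
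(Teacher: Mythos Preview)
Your proposal is correct and follows essentially the same approach as the paper: condition on $d_1 = t_1$, split into the regimes $t_1 \le \log n$ (apply \Lem{xv-sharp}) and $t_1 > \log n$ (apply \Lem{xv}), and handle the random denominator $(2m)^2$ by a concentration-and-split argument on $m$ (the paper packages these steps as \Clm{cond-limit}, \Clm{cond-limit2}, \Clm{m}, and \Clm{A}). One small sharpening: on the bad event for $m$ you need more than ``probability $o(1)$'' against a $\mathrm{poly}(n)$ integrand---the paper uses Hoeffding on the truncated degrees ($d_v \le \sqrt{n}/\log^2 n$) to get $\Pr[\overline{\cC}] \le n^{-\log n}$, which is exactly the super-polynomial decay your sketch needs.
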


As a first step, we first condition on the choice of $d_1$
and choose the other degrees according to $\cD_n$. We denote the conditional expectation
over this distribution by $\EX_{\bd_{-1}}$. We will need the following
claim about the concentration of $m$.

\begin{claim} \label{clm:m} With probability $>1 - n^{-\log n}$, $|m - \EX_{\bd_{-1}}[m]| \leq n/\log n$.
\end{claim}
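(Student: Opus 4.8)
The plan is to establish concentration of $m = \sum_{i=1}^n d_i/2$ around its (conditional) mean by a standard bounded-differences / Hoeffding argument, exploiting the fact that, conditioned on $d_1$, the degrees $d_2, \ldots, d_n$ are i.i.d.\ draws from $\cD_n$ with bounded support. First I would write $2m - d_1 = \sum_{i=2}^n d_i$, so that, conditioned on $d_1$, the quantity $m$ is (up to the fixed additive shift $d_1/2$) a sum of $n-1$ independent random variables $d_i/2$, each taking values in $[1/2, M(n)/2]$ where $M(n) = \sqrt{n}/\log^2 n$. The conditional mean is $\EX_{\bd_{-1}}[m] = d_1/2 + (n-1)\EX[d_2^{\cD_n}]/2$, where $\EX[d_2^{\cD_n}]$ denotes the first moment under $\cD_n$; note this is finite and in fact bounded since $\cD$ has finite first moment and $\cD_n$ is a truncated, renormalized version of it.

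The key step is Hoeffding's inequality: for independent $Z_i \in [a_i, b_i]$, $\Pr[|\sum Z_i - \EX[\sum Z_i]| \geq t] \leq 2\exp(-2t^2/\sum(b_i - a_i)^2)$. Applying this with $Z_i = d_i/2$, $b_i - a_i \leq M(n)/2 = \sqrt{n}/(2\log^2 n)$, and $t = n/\log n$, the exponent becomes
\[
\frac{2 (n/\log n)^2}{(n-1)\cdot n/(4\log^4 n)} \;\geq\; \frac{8 n^2/\log^2 n}{n^2/\log^4 n} \;=\; 8\log^2 n,
\]
so the tail probability is at most $2\exp(-8\log^2 n) < n^{-\log n}$ for $n$ large. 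Since the additive shift $d_1/2$ is deterministic given $d_1$, it cancels in $|m - \EX_{\bd_{-1}}[m]|$, and the bound is exactly the deviation of $\sum_{i=2}^n d_i/2$ from its mean, so the claim follows.

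I do not expect a genuine obstacle here; the only points requiring a little care are (i) making sure the variance budget $\sum_i (b_i - a_i)^2 = O(n \cdot M(n)^2) = O(n^2/\log^4 n)$ is small enough that $t = n/\log n$ sits comfortably in the sub-Gaussian tail — which it does, with room to spare — and (ii) noting that the probability bound should hold uniformly over the conditioning value of $d_1$ (it does, since $d_1$ only contributes a deterministic shift and the support bound $M(n)$ is uniform). One could alternatively invoke \Lem{mart} or a Bernstein-type bound, but Hoeffding is cleanest and the support is bounded, so no second-moment refinement is needed.
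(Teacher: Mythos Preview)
Your proposal is correct and follows essentially the same approach as the paper: condition on $d_1$, write $m - \EX_{\bd_{-1}}[m]$ as the centered sum of the independent bounded variables $d_i/2$ for $i\geq 2$, and apply Hoeffding's inequality with range $O(\sqrt{n}/\log^2 n)$ to get a tail of $2\exp(-\Omega(\log^2 n)) < n^{-\log n}$. The only differences are cosmetic (you track the constant in the exponent a bit more carefully than the paper does).
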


\begin{proof} We have $m = \sum_v d_v/2$ and $\EX_{\bd_{-1}} = d_1/2 + \sum_{v \neq 1} \EX_\bd[d_v/2]$.
Note that each $d_v/2$ is in $[1,\sqrt{n}/\log^2 n]$ and they are all independent.
By Hoeffding's inequality~\cite{Ho63}, $\Pr[|\sum_{v \neq 1} d_v/2 - \EX_\bd[\sum_{v \neq 1} d_v/2]| \geq n/\log n]
< 2 \exp(-2 (n/\log n)^2/\sum_{v \neq 1} (\sqrt{n}/\log^2 n)^2) = 2\exp(-2\log^2n) < n^{-\log n}$.
\end{proof}

As a step towards \Lem{wedge-limit}, we condition on $d_1$. When $d_1$ is small,
we can use \Lem{xv-sharp} of the previous section.

\begin{claim} \label{clm:cond-limit} Suppose $d_1 \leq \log n$. Then
$$ \EX_{\bd_{-1}} \EX_G[X_1(X_1-1)] = (1 \pm o(1)) d_1(d_1-1)/\EX[d_2]^2) \Big( \sum_{t_2 = t_1}^{M(n)} \sum_{t_3 = t_1}^{M(n)} t_2 t_3 f(t_2) f(t_3)\Big) \pm o(1) $$,
where  the order bound $o(1)$ applies uniformly over   $d_1 \leq \log n$.
\end{claim}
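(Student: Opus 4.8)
The plan is to derive \Clm{cond-limit} from \Lem{xv-sharp} by averaging over the remaining degrees. Fix $d_1 \le \log n$ and apply \Lem{xv-sharp} with $v = 1$; writing $S = \sum_{w\neq 1:\,d_w\ge d_1}\sum_{w'\notin\{1,w\}:\,d_{w'}\ge d_1} d_w d_{w'}$, this reads $\EX_G[X_1(X_1-1)] = (1\pm o(1))\,d_1(d_1-1)\,S/(2m)^2 \pm o(1)$, where $S$ and $m$ are functions of $d_1$ and $\bd_{-1} = (d_2,\dots,d_n)$ alone. A first observation I would record is that, inspecting the proof of \Lem{xv-sharp}, every $o(1)$ there is a fixed function of $n$ (a power of $\log n$ over a power of $m \ge n/2$), hence uniform over all degree sequences with $d_1 \le \log n$ and maximum degree $\le M(n)$. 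So after taking $\EX_{\bd_{-1}}$ and pulling the uniform errors outside, the task reduces to evaluating $\EX_{\bd_{-1}}[S/(2m)^2]$.

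The key step --- and the one I expect to be the main obstacle --- is decoupling the numerator $S$ from the denominator $(2m)^2$, since both are built from the same random degrees $\bd_{-1}$. I would handle this with the concentration of $m$ from \Clm{m}: on the event $\cG = \{|m - \EX_{\bd_{-1}}[m]| \le n/\log n\}$, which has probability $\ge 1 - n^{-\log n}$, we have $(2m)^2 = (2\EX_{\bd_{-1}}[m])^2(1\pm o(1))$ because $\EX_{\bd_{-1}}[m]\ge n/2$. The complement is controlled by two deterministic bounds: $S \le (\sum_w d_w)^2 = (2m)^2$, so $S/(2m)^2 \le 1$ everywhere and its contribution on $\overline{\cG}$ is at most $n^{-\log n} = o(1)$; and $X_1 \le D_1 \le d_1$, so the contribution of $\EX_G[X_1(X_1-1)]$ itself on $\overline{\cG}$ is at most $(\log n)^2 n^{-\log n} = o(1)$. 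Since $S/(2m)^2 \le 1$ always, conditioning on $\cG$ cannot change $\EX_{\bd_{-1}}[S]$ by more than a negligible amount, and we obtain $\EX_{\bd_{-1}}[S/(2m)^2] = (1\pm o(1))\,\EX_{\bd_{-1}}[S]/(2\EX_{\bd_{-1}}[m])^2 \pm o(1)$.

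It remains to evaluate the two moments. Independence of $d_2,\dots,d_n \sim \cD_n$ with $d_1$ fixed gives $\EX_{\bd_{-1}}[S] = (n-1)(n-2)\big(\EX_{\cD_n}[d\,\mathbb{I}(d\ge d_1)]\big)^2$ and $\EX_{\bd_{-1}}[m] = d_1/2 + (n-1)\EX_{\cD_n}[d]/2$. Since $f$ has finite first moment, $\EX_{\cD_n}[d] = (1-\gamma_n)\sum_{t=1}^{M(n)} t f(t) \to \EX[d]\in(0,\infty)$ and $\EX_{\cD_n}[d\,\mathbb{I}(d\ge d_1)] = (1-\gamma_n)\sum_{t=d_1}^{M(n)} t f(t)$; together with $d_1 = o(n)$, $(n-2)/(n-1) = 1-o(1)$ and $(1-\gamma_n)^2 = 1-o(1)$ (all uniform in $d_1$), the ratio collapses to
\[
\EX_{\bd_{-1}}[S/(2m)^2] = \frac{1\pm o(1)}{\EX[d]^2}\Big(\sum_{t=d_1}^{M(n)} t f(t)\Big)^{2} \pm o(1) = \frac{1\pm o(1)}{\EX[d]^2}\sum_{t_2=d_1}^{M(n)}\sum_{t_3=d_1}^{M(n)} t_2 t_3 f(t_2) f(t_3) \pm o(1).
\]
Multiplying by $d_1(d_1-1)$ yields the statement (with $\EX[d_2]$ read as $\EX[d]$ and the summation lower limits read as $d_1$). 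The remaining work is pure bookkeeping: one must verify that each $o(1)$ introduced is either a function of $n$ only or dominated by $(\log n)/n$, so that the total error is uniform over $d_1 \le \log n$ as claimed.
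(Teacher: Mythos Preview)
Your proposal is correct and follows essentially the same route as the paper: invoke \Lem{xv-sharp}, use \Clm{m} to replace the random $(2m)^2$ by $(2\EX_{\bd_{-1}}[m])^2$ on a high-probability event, then compute $\EX_{\bd_{-1}}[S]$ by independence of the $d_w$'s and simplify. The only cosmetic difference is that on the bad event you invoke the clean deterministic bounds $S/(2m)^2\le 1$ and $X_1\le d_1$, while the paper uses the cruder $d_1(d_1-1)A/m^2\le n^4$; both are swallowed by $\Pr[\overline{\cG}]\le n^{-\log n}$, and the paper packages the computation of $\EX_{\bd_{-1}}[S\mid\cG]$ into a separate claim (\Clm{A}) via Bayes' rule where you do it inline.
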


\begin{proof} It is convenient to define random variable $T_w$ where $T_w = 0$, if $d_w < d_1$,
and $T_w = d_w$ if $d_w \geq d_1$. We can rewrite the bound of \Lem{xv-sharp} as
$$ \EX_G[X_1(X_1-1)] =
(1 \pm o(1)) d_1(d_1-1)\Big( \sum_{w > 1} \sum_{w' \notin \{1,w\}}  T_w T_{w'}\Big)/(2m)^2 \pm o(1) $$
It will be convenient to denote the double summation by $A$.
Now, we take expectations over the $\bd_{-1} = (d_2, \ldots,  d_n)$. There is a slight technical difficulty,
since $m$ is itself a random variable. We use some Bayes' rule manipulations to handle this.
Let $\cC$ denote the event that $m \in [(1-1/\log n)\EX_{\bd_{-1}}[m], (1+1/\log n)\EX_{\bd_{-1}}[m]]$.
$$\EX_{\bd_{-1}} [d_1(d_1-1)A/m^2] = \EX_{\bd_{-1}} [d_1(d_1-1)A/m^2 | \cC] \Pr[\cC] + \EX_{\bd_{-1}} [d_1(d_1-1)A/m^2 | \overline{\cC}] \Pr[\overline{\cC}]$$
Since $d_1(d_1-1)A/m^2 \leq n^{4}$ and by \Clm{m}, $\Pr[\overline{\cC}] \leq n^{-\log n}$, the latter term is $o(1)$.
By definition of $\cC$, $\EX_{\bd_{-1}}[1/m^2 | \cC] = (1 \pm o(1))/\EX_{\bd_{-1}}[m]^2$.
\begin{eqnarray*}
\EX_{\bd_{-1}} \EX_G[X_1(X_1-1)] = (1 \pm o(1))d_1(d_1-1)\EX_{\bd_{-1}} [A | \cC]/(2\EX_{\bd_{-1}}[m])^2 \pm o(1)
\end{eqnarray*}
We defer the bound of $\EX_{\bd_{-1}} [A | \cC]$ to \Clm{A}. Let us first apply \Clm{A} to prove the main lemma.
\begin{eqnarray*}
& & \EX_{\bd_{-1}} \EX_G[X_1(X_1-1)] \\
& = & (1 \pm o(1))d_1(d_1-1)(n/2\EX_{\bd_{-1}}[m])^2 \sum_{t_2 = d_1}^{M(n)} \sum_{t_3 = d_1}^{M(n)} t_2 t_3 f(t_2) f(t_3) \pm o(1)
\pm d_1(d_1 - 1)/(2\EX_{\bd_{-1}}[m])^2
\end{eqnarray*}
Since $d_1 < \sqrt{n}/\log n$ and $\EX_{\bd_{-1}}[m] \geq n/2$, the final term is $o(1)$. Note that $2\EX_{\bd_{-1}}[m] = d_1 + \sum_{v > 1} \EX_{\bd_{-1}}[d_v]
= d_1 + (n-1)\EX[d_2] = (1 \pm o(1))n\EX[d_2]$. Plugging this bound in, the proof is completed.
\end{proof}

\begin{claim} \label{clm:A} $\EX_{\bd_{-1}}[A | \cC] = (1 \pm o(1)) n^2 \sum_{t_2 = d_1}^{M(n)} \sum_{t_3 = d_1}^{M(n)} t_2 t_3 f(t_2) f(t_3) \pm o(1)$,
where  the order bound $o(1)$ applies uniformly over   $d_1 \leq \log n$.
\end{claim}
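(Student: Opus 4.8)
The plan is to compute the \emph{unconditional} expectation $\EX_{\bd_{-1}}[A]$ exactly, exploiting that the degrees are i.i.d., and then argue that conditioning on the high-probability event $\cC$ changes it by only $o(1)$.

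Recall $A = \sum_{w>1}\sum_{w'\notin\{1,w\}} T_w T_{w'}$ with $T_w = d_w\,\mathbb{I}(d_w\geq d_1)$, and that, after conditioning on $d_1$, the degrees $d_2,\dots,d_n$ are i.i.d.\ draws from $\cD_n$; in particular each $T_w$ (for $w\geq 2$) is a deterministic function of the corresponding $d_w$, so the $T_w$ are i.i.d. Thus $A$ is a sum over the $(n-1)(n-2)$ ordered pairs of distinct indices from $\{2,\dots,n\}$, and for each such pair $T_w$ and $T_{w'}$ are independent with common mean $\mu := \EX_{\bd_{-1}}[T_2] = (1-\gamma_n)\sum_{t=d_1}^{M(n)} t\,f(t)$. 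Hence $\EX_{\bd_{-1}}[A] = (n-1)(n-2)\,\mu^2$, exactly. Since $(n-1)(n-2)/n^2\to 1$, $\gamma_n\to 0$, and $\sum_{t_2=d_1}^{M(n)}\sum_{t_3=d_1}^{M(n)} t_2 t_3 f(t_2)f(t_3) = \big(\sum_{t=d_1}^{M(n)} t\,f(t)\big)^2$, this already yields $\EX_{\bd_{-1}}[A] = (1\pm o(1))\, n^2 \sum_{t_2=d_1}^{M(n)}\sum_{t_3=d_1}^{M(n)} t_2 t_3 f(t_2)f(t_3)$, with the $o(1)$ not depending on $d_1$.

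Next I would remove the conditioning on $\cC$. The crude input is the deterministic bound $A \leq \big((n-1)M(n)\big)^2 \leq n^3$, which holds because every $d_w \leq M(n) = \sqrt{n}/\log^2 n$, together with \Clm{m}, which gives $\Pr[\overline{\cC}] \leq n^{-\log n}$. Writing $\EX_{\bd_{-1}}[A] = \EX_{\bd_{-1}}[A\mid\cC]\Pr[\cC] + \EX_{\bd_{-1}}[A\,\mathbb{I}(\overline{\cC})]$ and solving for $\EX_{\bd_{-1}}[A\mid\cC]$, the correction term $\EX_{\bd_{-1}}[A\,\mathbb{I}(\overline{\cC})]$ is at most $n^3\cdot n^{-\log n} = o(1)$, and $1/\Pr[\cC] = 1 + O(n^{-\log n})$; combining these (and using $\EX_{\bd_{-1}}[A] \leq n^3$ once more to control the error introduced by dividing by $\Pr[\cC]$) gives $\EX_{\bd_{-1}}[A\mid\cC] = \EX_{\bd_{-1}}[A] \pm o(1)$, uniformly over $d_1 \leq \log n$. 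Combined with the previous paragraph, this is exactly the claimed estimate.

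I do not expect a genuine obstacle here. The one thing to get right is the bookkeeping in the conditioning step: one must check that the errors from replacing $(n-1)(n-2)$ by $n^2$, from $\gamma_n$, and from conditioning on $\cC$ are simultaneously $o(1)$ and uniform over $d_1 \leq \log n$. Since $n^{-\log n}$ beats the polynomial factor $n^3$ that bounds $A$, the $\cC$-conditioning error is comfortably negligible, so the whole argument is routine.
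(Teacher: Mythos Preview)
Your proposal is correct and follows essentially the same approach as the paper: compute $\EX_{\bd_{-1}}[A]$ exactly via independence of the $T_w$'s, then use Bayes' rule together with the crude polynomial bound on $A$ and $\Pr[\overline{\cC}]\leq n^{-\log n}$ to show the conditioning only shifts the answer by $o(1)$. The only cosmetic differences are the order of presentation and your slightly sharper deterministic bound $A\leq n^3$ (the paper uses $A\leq n^4$), neither of which affects the argument.
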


\begin{proof} By Bayes' rule,	$\EX_{\bd_{-1}}[A | \cC] = (\Pr[\cC]^{-1})(\EX_{\bd_{-1}}[A] - \EX_{\bd_{-1}}[A | \overline{\cC}] \Pr[\overline{\cC}])$.
Since $A \leq n^4$ and $\Pr[\overline{\cC}] < n^{-\log n}$, $\EX_{\bd_{-1}}[A | \overline{\cC}] \Pr[\overline{\cC}] = o(1)$.
Therefore, $\EX_{\bd_{-1}}[A | \cC] = (1 \pm o(1))\EX_{\bd_{-1}}[A] - o(1)$. Writing out $A$,
\begin{eqnarray*}
\EX_{\bd_{-1}}\Big[\sum_{w > 1} \sum_{w' \notin \{1,w\}}  T_w T_{w'}\Big] = \sum_{w > 1} \sum_{w' \notin \{1,w\}} 	 \EX_{\bd_{-1}}[T_w] 	\EX_{\bd_{-1}}[T_{w'}]
\end{eqnarray*}
Because degrees are drawn independently, $\EX_{\bd_{-1}}[T_w T_{w'}] = \EX_{\bd_{-1}}[T_w] \EX_{\bd_{-1}}[T_{w'}]$.
$$ \EX_{\bd_{-1}}[T_w] = \EX_{\bd_{-1}}[T_{w'}] = (1-\gamma_n)\sum_{t_2 = d_1}^{M(n)} t_2 f(t_2) $$
Plugging this bound into the previous equation, we get
\begin{eqnarray*}
\EX_{\bd_{-1}}[A] = (n-1)(n-2)(1-\gamma_n)^2 \sum_{t_2 = d_1}^{M(n)} \sum_{t_3 = d_1}^{M(n)} t_2 t_3 f(t_2) f(t_3)
\end{eqnarray*}
We use the fact that $(n-1)(n-2) = (1 \pm o(1))n^2$ and $\gamma_n = o(1)$ to get the final proof.
\end{proof}

We require a bound for large $d_1$. This can be directly obtained with the looser arguments of \Lem{xv}.

\begin{claim} \label{clm:cond-limit2} Suppose $d_1 > \log n$.
$$ \EX_{\bd_{-1}} \EX_G[X_1(X_1-1)] = O\Big(d^2_1\sum_{t_2 = \delta d_1}^{M(n)}
\sum_{t_3 = \delta d_1}^{M(n)} t_2 t_3 f(t_2) f(t_3)\Big) + o(1),$$
where  the order bound $o(1)$ applies uniformly over   $d_1 > \log n$.
\end{claim}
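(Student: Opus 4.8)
The plan is to reduce \Clm{cond-limit2} to the looser per-bucket bound already established in \Lem{xv}, then take expectation over the remaining degrees $\bd_{-1}$. Recall that \Lem{xv} gives, for a \emph{fixed} degree sequence,
\[
\EX_G[X_1(X_1-1)] = O\Big(\exp(-\delta d_1) d_1^2 + m^{-2} d_1^2 \sum_{w: d_w \geq \delta d_1}\sum_{w' \neq w: d_{w'} \geq \delta d_1} d_w d_{w'}\Big).
\]
When $d_1 > \log n$, the term $\exp(-\delta d_1) d_1^2$ is $\exp(-\delta \log n)\,\mathrm{poly}(n) = o(1)$ (in fact $n^{-\Omega(1)}$), uniformly in $d_1$, so it contributes the additive $o(1)$. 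For the main term, I would bound $\sum_{w:\,d_w\geq\delta d_1} d_w \leq 2m$ crudely in one factor but it is cleaner to just keep the double sum and take its expectation directly.

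First I would handle the random variable $1/m^2$ exactly as in \Clm{cond-limit}: let $\cC$ be the event that $m \in [(1-1/\log n)\EX_{\bd_{-1}}[m], (1+1/\log n)\EX_{\bd_{-1}}[m]]$; by \Clm{m}, $\Pr[\overline{\cC}] < n^{-\log n}$, and since $X_1(X_1-1) \leq n^4$ deterministically (degrees are at most $M(n) \leq \sqrt n$, so $X_1 \leq n$), the contribution of $\overline{\cC}$ to the expectation is $o(1)$. On $\cC$ we have $\EX_{\bd_{-1}}[1/m^2 \mid \cC] = (1\pm o(1))/\EX_{\bd_{-1}}[m]^2 = (1 \pm o(1))/(n\EX[d_2]/2)^2$, using $2\EX_{\bd_{-1}}[m] = d_1 + (n-1)\EX[d_2] = (1\pm o(1))n\EX[d_2]$ since $d_1 \leq M(n) = o(n)$. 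So it remains to bound $\EX_{\bd_{-1}}\big[\sum_{w>1:\,d_w\geq\delta d_1}\sum_{w'\notin\{1,w\}:\,d_{w'}\geq\delta d_1} d_w d_{w'} \mid \cC\big]$.

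Next I would drop the conditioning on $\cC$ at the cost of another $o(1)$ (same Bayes'-rule manipulation, the double sum being $\leq n^4$), and then use independence of the degrees: introduce $T_w = d_w \mathbb{I}(d_w \geq \delta d_1)$ so the sum is $\sum_{w>1}\sum_{w'\notin\{1,w\}} T_w T_{w'}$, whose expectation factors as $(n-1)(n-2)\,\EX_{\bd_{-1}}[T_2]^2$ (again because the diagonal terms $w=w'$ are excluded and degrees are i.i.d.). Since $\EX_{\bd_{-1}}[T_2] = (1-\gamma_n)\sum_{t_2 = \delta d_1}^{M(n)} t_2 f(t_2)$ and $(n-1)(n-2)(1-\gamma_n)^2 = (1\pm o(1))n^2$, the whole quantity is
\[
\frac{d_1^2}{m^2}\sum_{w}\sum_{w'} T_w T_{w'} \;\text{ in expectation } = (1\pm o(1))\,\frac{4\, d_1^2}{\EX[d_2]^2}\Big(\sum_{t_2=\delta d_1}^{M(n)} t_2 f(t_2)\Big)^2 + o(1),
\]
which is $O\big(d_1^2 \sum_{t_2=\delta d_1}^{M(n)}\sum_{t_3=\delta d_1}^{M(n)} t_2 t_3 f(t_2) f(t_3)\big) + o(1)$ as claimed, absorbing $\EX[d_2]$ (a finite constant by hypothesis) into the $O(\cdot)$. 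The ceiling $\lceil\delta d_1\rceil$ versus $\delta d_1$ in the summation index is a harmless notational point.

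The only real subtlety — and the step I would be most careful about — is the uniformity of the $o(1)$ terms over all $d_1 > \log n$, because \Clm{wedge-limit} later sums this bound against $f(d_1)$ over a range of $d_1$ growing with $n$. The dangerous places are (i) the $\exp(-\delta d_1) d_1^2$ term, which is uniformly $n^{-\Omega(1)}$ precisely because $d_1 > \log n$, so this is fine; (ii) the error from conditioning on $\cC$: here the crude deterministic bound $X_1(X_1-1) \leq n^4$ and $\Pr[\overline\cC] < n^{-\log n}$ give an error $n^{4-\log n} = o(1)$ that does not depend on $d_1$ at all; and (iii) the $(1\pm o(1))$ multiplying the main term — since it multiplies a nonnegative quantity and the relative errors ($1/m$ fluctuations, $\gamma_n$, $(n-1)(n-2)/n^2$) are all bounded by fixed functions of $n$ independent of $d_1$, the error is at most $o(1)$ times the main term, which is exactly the form stated. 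So no genuine obstacle arises; the proof is essentially a coarser rerun of \Clm{cond-limit}/\Clm{A} with \Lem{xv} in place of \Lem{xv-sharp} and with the degree threshold shifted from $d_1$ to $\delta d_1$.
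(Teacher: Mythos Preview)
Your proposal is correct and follows essentially the same route as the paper: invoke \Lem{xv}, note that $\exp(-\delta d_1)d_1^2 = o(1)$ uniformly for $d_1 > \log n$, rewrite the main term using the indicators $T_w = d_w\mathbb{I}(d_w \geq \delta d_1)$, and exploit independence of the $d_w$ to factor the expectation.

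The one place you work harder than necessary is the treatment of $m^{-2}$. You invoke the concentration event $\cC$ from \Clm{m} and condition/uncondition as in \Clm{cond-limit}. The paper skips all of this: since only an $O(\cdot)$ upper bound is required here, the deterministic inequality $m \geq n/2$ (which holds because every $d_v \geq 1$ under $\cD_n$) gives $m^{-2} \leq 4n^{-2}$ immediately, and the expectation then factors directly. Your argument is not wrong, just more machinery than the claim needs; the paper's shortcut also makes the uniformity in $d_1$ automatic for that step.
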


\begin{proof} The first term in \Lem{xv} is $\exp(-\delta d_1)d^2_1$, which is $o(1)$ for $d_1 > \log n$.
(The constant $\delta$ comes from \Lem{xv}.)
Redefine $T_w$ to be $d_w$ is $d_w \geq \delta d_1$ and $0$ otherwise. Much of the following calculations
are similar to those in the proof above.
\begin{eqnarray*} \EX_{\bd_{-1}} \EX_G[X_1(X_1-1)] & \lessdot &
\EX_{\bd_{-1}}[m^{-2}d_1(d_1-1)\sum_{w > 1} \sum_{w' \notin \{1,w\}}  T_w T_{w'}] \pm o(1) \\
& \lessdot & n^{-2}d_1(d_1-1) \sum_{w > 1} \sum_{w' \notin \{1,w\}} \EX_{\bd_{-1}}[T_w] \EX_{\bd_{-1}}[T_{w'}] \\
& \lessdot & d^2_1\sum_{t_2 = \delta d_1}^{M(n)} \sum_{t_3 = \delta d_1}^{M(n)} t_2 t_3 f(t_2) f(t_3)
\end{eqnarray*}
\end{proof}

\Lem{wedge-limit} follows directly by applications of \Clm{cond-limit} and \Clm{cond-limit2}.
We simply express $\EX_{\bd} \EX_G[X_1(X_1-1)]$ as $\sum^{M(n)}_{t_1 = 1} f(t_1) \EX_{\bd_{-1}} \EX_G [X_1(X_1- 1) | d_1 = t_1]$.
When $t_1 \leq \log n$, we apply \Clm{cond-limit}. Otherwise, we use \Clm{cond-limit2}.

\subsection{Taking the limit} \label{sec:limit}

We prove the main theorem, restated for convenience.

\begin{theorem} Fix any $n$ and a degree distribution $\cD$ such that $\EX[d]$ and $\EX[d^{4/3}]$ are bounded.
Then
\[
\lim_{n \rightarrow \infty} \frac{1}{n}\EX\left[\sum_{i=1}^n {X_{i,n} \choose 2}\right]   =   \frac{1}{2 (\EX[d])^2}   \sum_{t_1=1}^\infty \sum_{t_2=t_1}^\infty \sum_{t_3=t_1}^\infty t_1(t_1-1)t_2t_3 f(t_1)f(t_2)f(t_3) \in (0,\infty).
\]
\end{theorem}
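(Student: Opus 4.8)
The plan is to combine \Lem{wedge-limit} with a dominated-convergence argument over the three summation indices. By symmetry every vertex is identical, so $\frac{1}{n}\EX[\sum_{i=1}^n \binom{X_{i,n}}{2}] = \frac{1}{2}\EX_{\bd}\EX_G[X_1(X_1-1)]$, and it suffices to take the $n\to\infty$ limit of the right-hand side of \Lem{wedge-limit}. First I would dispatch the error term: the $O(\cdot)$ sum over $t_1 > \log n$ is a tail of a convergent series. Indeed, on the region $t_2, t_3 \geq \delta t_1$ we have $t_1^2 t_2 t_3 \leq \delta^{-2/3}(t_1 t_2 t_3)^{4/3}$ (exactly the inequality used in the proof of \Thm{sum-xv}), so the error term is bounded by $\delta^{-2/3}\big(\sum_{t > \log n} t^{4/3} f(t)\big)\big(\sum_t t^{4/3} f(t)\big)^2$, which tends to $0$ as $n\to\infty$ because $\EX[d^{4/3}] = \sum_t t^{4/3} f(t) < \infty$ forces the tail $\sum_{t > \log n} t^{4/3} f(t) \to 0$. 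The additive $\pm o(1)$ terms vanish trivially.

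Next I would handle the main term. Write $g_n = \sum_{t_1 \leq \log n}\sum_{t_2 = t_1}^{M(n)}\sum_{t_3 = t_1}^{M(n)} t_1(t_1-1) t_2 t_3 f(t_1)f(t_2)f(t_3)$. The claim is that $g_n \to \sum_{t_1=1}^\infty \sum_{t_2=t_1}^\infty \sum_{t_3=t_1}^\infty t_1(t_1-1)t_2 t_3 f(t_1)f(t_2)f(t_3) =: S$, and that $S$ is finite and strictly positive. Finiteness of $S$: on the summation region $t_2, t_3 \geq t_1$ we have $t_1(t_1-1)t_2 t_3 \leq t_1^2 t_2 t_3 \leq (t_1 t_2 t_3)^{4/3}$, so $S \leq \big(\sum_t t^{4/3} f(t)\big)^3 = \EX[d^{4/3}]^3 < \infty$ by hypothesis. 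Positivity: $f$ is a genuine probability mass function on the positive integers, so there is some $t_1 \geq 2$ with $f(t_1) > 0$ (if $f$ were supported only on $\{1\}$ then $\EX[d^{4/3}]$ would be finite but the configuration model would be a perfect matching with no wedges — one should note the theorem as stated implicitly assumes $\cD$ is not the point mass at $1$, or equivalently that the triple sum is nonempty; I would add a sentence making this non-degeneracy explicit), and then the $t_1 = t_2 = t_3$ term contributes $t_1(t_1-1)t_1 t_1 f(t_1)^3 > 0$. Convergence $g_n \to S$ is dominated convergence for series: the truncated summand is nonnegative, increases to the full summand pointwise as $n\to\infty$ (since $\log n \to \infty$ and $M(n) \to \infty$), and is dominated by the summable sequence $(t_1 t_2 t_3)^{4/3} f(t_1)f(t_2)f(t_3)$; hence $g_n \uparrow S$ by the monotone convergence theorem. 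Finally, since $\EX_{\bd_{-1}}[m]$-type normalizations were already folded into \Lem{wedge-limit} as $\EX[d_2]^2 = \EX[d]^2$, the $(1\pm o(1))/\EX[d]^2$ prefactor converges to $1/\EX[d]^2$, and multiplying the $\frac{1}{2}$ from $\binom{X_{1,n}}{2}$ through gives exactly $\frac{1}{2(\EX[d])^2} S$.

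Assembling: $\frac{1}{n}\EX[\sum_i \binom{X_{i,n}}{2}] = \frac{1}{2}\big((1\pm o(1))/\EX[d]^2 \cdot g_n \pm o(1) + o(1)\big) \to \frac{1}{2(\EX[d])^2} S$, which lies in $(0,\infty)$ by the finiteness and positivity just established.

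The main obstacle is not any single estimate — all the hard probabilistic work is already in \Lem{wedge-limit} — but rather making the interchange of limit and infinite summation fully rigorous and uniform. Specifically, the $o(1)$ in \Lem{wedge-limit} must be shown to be uniform in the sense needed (it is asserted in \Clm{cond-limit} and \Clm{cond-limit2} to hold uniformly over the respective ranges of $d_1$, which is exactly what lets us sum $f(t_1)$ against it and still get $o(1)$), and the domination bound $(t_1t_2t_3)^{4/3}f(t_1)f(t_2)f(t_3)$ must be checked to control both the truncated main term and the large-$t_1$ error term simultaneously — this is where the hypothesis $\EX[d^{4/3}] < \infty$ is essential and is the reason the $4/3$ moment, rather than the second moment, is the right finiteness assumption. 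I would also take care to confirm the degenerate case $\cD = \delta_1$ is either excluded or yields the (correct) limit $0$, reconciling it with the claimed membership in the open interval $(0,\infty)$.
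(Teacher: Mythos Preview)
Your proposal is correct and follows essentially the same route as the paper: reduce by symmetry to $\frac{1}{2}\EX[X_1(X_1-1)]$, kill the $t_1>\log n$ error term via the same $t_1^2 t_2 t_3 \leq \delta^{-2/3}(t_1 t_2 t_3)^{4/3}$ inequality and the tail of $\EX[d^{4/3}]$, and pass to the limit in the main triple sum by monotone convergence with the $\EX[d^{4/3}]^3$ dominator. You are in fact slightly more careful than the paper, which does not explicitly argue positivity of the limit or flag the degenerate point-mass-at-$1$ case.
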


\begin{proof} By linearity of expectation and the fact that all degrees are chosen identically,
$\frac{1}{n}\EX\left[\sum_{i=1}^n {X_{i,n} \choose 2}\right] = \EX[X_{1,n}(X_{1,n}-1)/2]$.
So we only need the limit of the expression in \Lem{wedge-limit}. We first show the second summation
is negligible.
\begin{eqnarray*}
& & \sum_{t_1 > \log n} \sum_{t_2 = \delta t_1}^{M(n)} \sum_{t_3 = \delta t_1}^{M(n)} d^2_1 t_2 t_3 f(t_1) f(t_2) f(t_3) \\
& \leq & \delta^{-2/3} \sum_{t_1 > \log n} \sum_{t_2 = \delta t_1}^{M(n)} \sum_{t_3 = \delta t_1}^{M(n)} d^{4/3}_1 d^{4/3}_2 d^{4/3}_3 f(t_1) f(t_2) f(t_3) \\
& \leq & \delta^{-2/3} \sum_{d > \log n} d^{4/3} f(d)
\end{eqnarray*}
Since $\EX[d^{4/3}] = \sum_{t = 1}^\infty t^{4/3} f(t)$ is finite, $\lim_{n \rightarrow \infty} \sum_{t > \log n} t^{4/3} f(t) = 0$.
For the first triple summation in \Lem{wedge-limit}, again, we can upper bound the term by $O(\EX[d^{4/3}]^3)$.
It is also nonnegative and monotonically increasing with $n$, so by the monotone convergence theorem,
it converges to limit given in the theorem statement.

\end{proof}

\vspace*{-.2in}
\section{Experimental Analysis}
\label{sec:experiments}

We experimentally show the theoretical analysis of~\Sec{careful} does a reasonable job of capturing the expected performance
of \algo{} on ECM graphs.  Though real-world graphs likely have additional structure, this partially validates the good practical performance of \algo{} in practice.

We generated ECM graphs of various sized based on a power-law degree distribution
with power exponent $\alpha=2.4$ (which guarantees a finite $\frac{4}{3}$ moment for the degree distribution).
Figure~\ref{fig:trunc-study}(a) shows the average value of
$\sum_{i=1}^n {X_{i,n} \choose 2}$, total work over all buckets,
computed over 10 Monte Carlo trials (i.e., taken as an approximation
of $\E[\sum_{i=1}^n {X_{i,n} \choose 2}] $) for ECM graphs of various
sizes up to $n = 80$ million. Degrees are truncated at $\sqrt{n}$.
The ECM use power law reference degree distributions for $\alpha = 2.3$, where \algo{} runs in superlinear time, and for
$\alpha = 2.4$, where it runs in linear time.
Figure~\ref{fig:trunc-study}(a) also shows the theoretical linear bound
on the overall expected work $\E[\sum_{i=1}^n {X_{i,n} \choose 2}] $ for a power-law degree distribution with $\alpha=2.4$.
The constant is at most:
\[
\lim_{n\to \infty} \frac{1}{n}\E\left[\sum_{i=1}^n {X_{i,n} \choose 2}\right] \equiv
C =  \frac{1}{2 (\E[D])^2}   \sum_{d_1=0}^\infty \sum_{d_2=d_1}^\infty \sum_{d_3=d_1}^\infty d_1(d_1-1)d_2d_3 f(d_1)f(d_2)f(d_3) \approx 0.687935.
\]
As $n\rightarrow \infty$, we would anticipate from \Thm{limit}
that the value of $\E[\sum_{i=1}^n \sum_{i=1}^n {X_{i,n} \choose 2}] $ approximated by Monte Carlo trials should approach  $n C$, also shown in
Figure~\ref{fig:trunc-study}(a).  Figure~\ref{fig:trunc-study}(b) shows the ratio of work to number of nodes $n$.
For power law distributions with $\alpha = 2$, this ratio is not a constant.  But by $\alpha = 2.4$, the factor is leveling off below $1$.

\noindent
\begin{figure}[h]
\begin{center}
\begin{tabular}{cc}
\includegraphics*[width=3.0in]{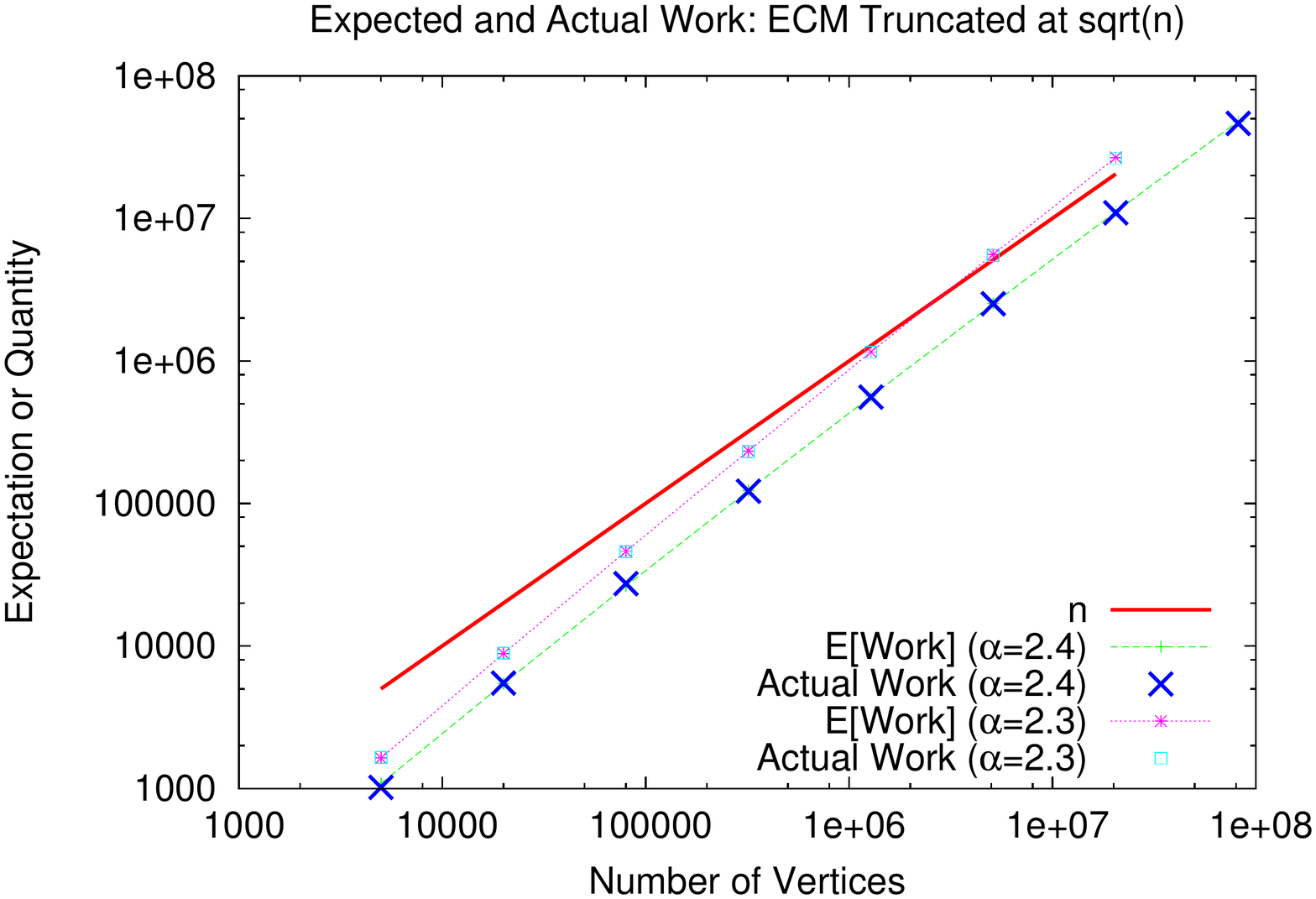} &
\includegraphics*[width=3.0in]{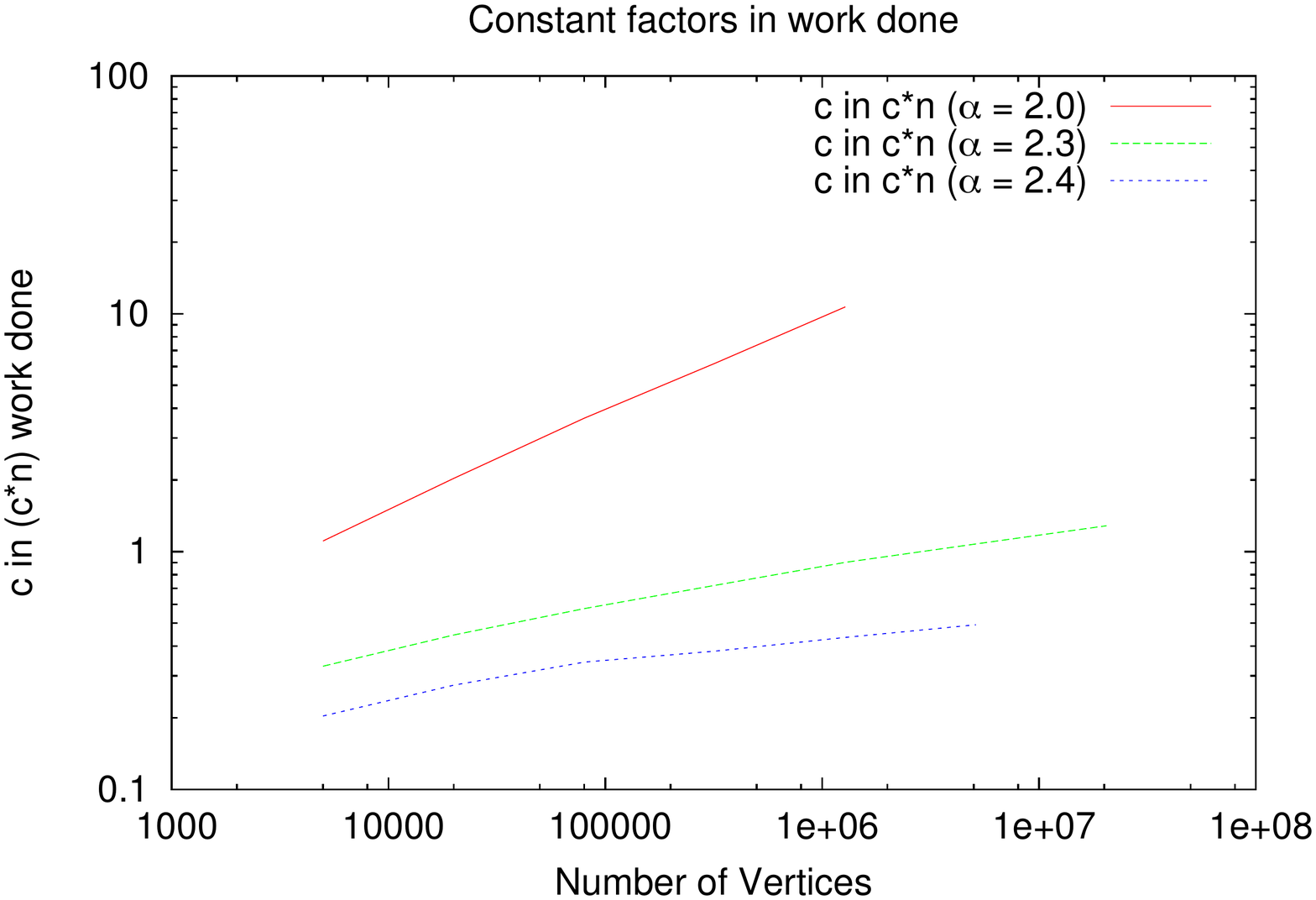}\vspace*{-.2in}\\
\vspace{-.25in}
(a) & (b)
\end{tabular}
\end{center}
\caption{Experimental results \label{fig:trunc-study} with $n$-node ECM graphs with degrees $\le n^{1/2}$
drawn from a power-law distribution with exponent $\alpha=2.3$ or $2.4$.  The red solid line shows $n$.
The green dashed line shows a theoretical  bound $0.687935 n$ on the expected number
of pairs in buckets for an ECM graph with $n$ vertices and exponent $2.4$.  The blue crosses shows the average value of pairs observed in 10 generations of ECM graphs.
The magenta line and blue boxes show the same for exponent $2.3$.
(b) Experimental values of the ratio of work to $n$ for power law exponents $2$, $2.3$, and $2.4$.}
\end{figure}

\newpage
\section*{Acknowledgments}
This work was funded under the Sandia National Laboratories Laboratory Directed Research and
Development (LDRD) program. We thank Ali Pinar for suggestions on improving the presentation.

\bibliographystyle{alpha}
\bibliography{triangles}

\appendix

\section{Proof of \Lem{mart}}

\begin{proof} Consider the sequence $X'_1, X'_2, \ldots, X'_k$
of i.i.d. Bernoulli random variables with $\EX[X'_i] = \alpha$. We will
shortly prove that for any $t>0$, $\Pr[\sum_{i=1}^k X_i < t] \leq \Pr[\sum_{i=1}^k X'_i < t]$.
Given this, we just apply a multiplicative Chernoff bound (Theorem 4.2 of \cite{MoRa95})
for $\sum_{i=1}^k X'_i$ with $\mu = \alpha k$.
Hence, $\Pr[\sum_{i=1}^k X_i < \alpha k \delta] < \exp(-\alpha (1-\delta)^2/2)$.

For convenience, we show the contrapositive $\Pr[\sum_{i=1}^k X_i \geq t] \geq \Pr[\sum_{i=1}^k X'_i \geq t]$.
This is proven by induction on $k$. First, the base case.
Since $X_1$ and $X'_1$ are Bernoulli random variables, it suffices to show that
$\Pr[X_1 = 1] \geq \Pr[X'_1 = 1] = \alpha$, which holds by assumption.

Now for the induction step. Assume for all $t>0$ and some index $j$, $\Pr[\sum_{i=1}^j X_i \geq t] \geq \Pr[\sum_{i=1}^j X'_i \geq t]$. We prove this for $j+1$.
Let $\cE$ denote the event $\sum_{i=1}^j X_i \geq t$,
and $\cE'$ be the (disjoint) event $\sum_{i=1}^j X_i \in [t-1,t)$.
Let $\mathbb{I}(A)$ denote the indicator function of event $A$.
Because $X_i$
is a $0$-$1$ random variable, we get
\begin{eqnarray*}
\Pr\left[\sum_{i=1}^{j+1} X_i \geq t\right] = \Pr[\cE] + \Pr[\cE' \wedge (X_{j+1} = 1)]
&=&  \Pr[\cE] + \EX[ \mathbb{I}(\cE')\mathbb{I}(X_{j+1} = 1)]\\
& &  \Pr[\cE] + \EX\{ \EX[\mathbb{I}(\cE')\mathbb{I}(X_{j+1} = 1)| Y_1,\ldots,Y_j ] \}
\end{eqnarray*}

Observe that $\sum_{i=1}^j X_i$ only depends on $Y_1, \ldots, Y_j$ so that $\mathbb{I}(\cE')$ is a constant
in the conditional expectation
\begin{eqnarray*}
\EX[\mathbb{I}(\cE')\mathbb{I}(X_{j+1} = 1)| Y_1,\ldots,Y_j ] &=& \mathbb{I}(\cE')\EX[\mathbb{I}(X_{j+1} = 1)| Y_1,\ldots,Y_j ]\\&=& \mathbb{I}(\cE') \Pr[X_{j+1} = 1| Y_1,\ldots,Y_j ] \\&\geq& \mathbb{I}(\cE') \alpha,
\end{eqnarray*}
where $\Pr[X_{j+1} = 1| Y_1,\ldots,Y_j ] \geq \alpha$ by the lemma assumption.

Let us denote (for any $s>0$) $\Pr[\sum_{i=1}^j X_i \geq s]$ by $p_s$
and $\Pr[\sum_{i=1}^j X'_i \geq s]$ by $p'_s$. The above gives
\begin{eqnarray*}
\Pr\left[\sum_{i=1}^{j+1} X_i \geq t\right] & \geq & p_t +  \alpha\EX[\mathbb{I}(\cE')] \\
& = & p_t + (p_{t-1} - p_t)\alpha = p_{t-1} \alpha + p_t (1-\alpha)\\
& \geq & p'_{t-1} \alpha + p'_t (1-\alpha) \ \ \ \textrm{(using induction hypothesis and $\alpha \in [0,1]$)}\\
& = & p'_t + (p'_{t-1} - p'_t)\alpha \\
&=&  \Pr\left[\sum_{i=1}^{j} X'_i \geq t\right] + \Pr\left[\Big(\sum_{i=1}^{j} X'_i \in [t-1,t)\Big) \wedge (X'_{j+1}=1)\right]\\&=&    \Pr\left[\sum_{i=1}^{j+1} X'_i \geq t\right]
\end{eqnarray*} 
\end{proof}

\section{Proofs of Tightness} \label{app:tight}

We need a technical claim give a lower bound for probabilities of edges
falling in a bucket.

\begin{claim} \label{clm:Y} Let $d_v > 3$. Consider vertices $v,w,w'$ ($w \neq w'$) and let
$c$ be a sufficiently large constant. If $\min(d_w,d_{w'}) > c d_v$,
then $\EX[Y_{v,w}Y_{v,w'}] = \Omega(d^2_vd_wd_{w'}/m^2)$.
\end{claim}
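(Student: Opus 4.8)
The plan is to bound $\EX[Y_{v,w}Y_{v,w'}]$ from below, noting that this quantity is exactly the probability that both edges $(v,w),(v,w')$ are present and both are placed in $v$'s bucket. Because the erasure step only deletes edges, the final degree of $v$ satisfies $D_v\le d_v$ always; hence on the event ``both edges present, $D_w\ge d_v$, and $D_{w'}\ge d_v$'' we have $D_v\le d_v\le\min(D_w,D_{w'})$, so by the bucketing rule (using the tie convention) both edges do land in $v$'s bucket. Writing $\cW$ for the event that both edges $(v,w)$ and $(v,w')$ are present, I would therefore use $\EX[Y_{v,w}Y_{v,w'}]\ge\Pr[\cW]-\Pr[\cW\cap\{D_w<d_v\}]-\Pr[\cW\cap\{D_{w'}<d_v\}]$ and show that, for $c$ large enough, the first term dominates the other two.

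For the main term I would write $\cW=\bigcup_{(i,j,k,\ell)\in\bI}(\cU_{v_i,w_k}\cap\cU_{v_j,w'_\ell})$ with $\bI=\{(i,j,k,\ell):i\neq j\le d_v,\ k\le d_w,\ \ell\le d_{w'}\}$, exactly as in \Clm{pair}, and apply the Bonferroni lower bound of \Thm{bb}. The first-order sum is $|\bI|/[(2m-1)(2m-3)]=d_v(d_v-1)d_wd_{w'}/[(2m-1)(2m-3)]\ge d_v(d_v-1)d_wd_{w'}/(2m)^2$, and the pairwise-intersection sum is bounded exactly as in the (unlabeled) claim immediately following \Clm{wed-upper}. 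Under the standing assumption of this appendix that $d_n<\sqrt{m}/4$, that pairwise sum is only a small constant fraction of the first-order sum, so $\Pr[\cW]=\Omega(d_v^2d_wd_{w'}/m^2)$, where $d_v>3$ is used so that $d_v(d_v-1)=\Theta(d_v^2)$. This is the one point where the sharper truncation $\sqrt{m}/4$ (as opposed to $\sqrt{m}/2$) matters: with only $d_n<\sqrt{m}/2$ the Bonferroni lower bound would be vacuous.

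For a correction term, say $\Pr[\cW\cap\{D_w<d_v\}]$: since $d_w>c\,d_v$ and $c$ may be chosen large, $\{D_w<d_v\}\subseteq\{D_w<\beta'd_w\}$ with $\beta'$ the constant of \Lem{lower-tail}; bounding $\cW$ by the union over $\bI$ and factoring each term as $\Pr[\cU_{v_i,w_k}\cap\cU_{v_j,w'_\ell}]\cdot\Pr[D_w<\beta'd_w\mid\cU_{v_i,w_k}\cap\cU_{v_j,w'_\ell}]$, I would reuse the device from the second part of \Clm{arms}: conditioned on those two stub-pairs being matched, the remaining stubs form an ECM on a reduced degree sequence in which $w$ has degree $d_w-1$ and $2\tilde m=2m-4$ (still satisfying the truncation), so \Lem{lower-tail} bounds the conditional probability by $2\exp(-\beta(d_w-1))$. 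Summing over $\bI$ yields $\Pr[\cW\cap\{D_w<d_v\}]=O\!\big(\exp(-\beta(d_w-1))\,d_v^2d_wd_{w'}/m^2\big)$, and symmetrically for $w'$. Since $d_w,d_{w'}>c\,d_v\ge 4c$, choosing $c$ large forces both correction terms below, say, half of the main term, which finishes the proof. The main obstacle is precisely that conditional degree tail bound: the ECM's dependencies forbid a direct appeal to \Lem{lower-tail}, but the ``conditioning on matched stub-pairs leaves a smaller ECM'' observation (already exploited for \Clm{arms}) resolves it, and everything else is bookkeeping with absolute constants.
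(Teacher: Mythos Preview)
Your proof is correct and takes a genuinely different route from the paper's. The paper argues directly through the sequential stub-matching process: it partitions the $d_v$ $v$-stubs into two blocks $B_w$ (of size $\lfloor d_v/3\rfloor$) and $B_{w'}$, matches the $B_w$ stubs first and lower-bounds the chance that at least one hits a $w$-stub by $1-(1-d_w/2m)^{\lfloor d_v/3\rfloor}=\Omega(d_vd_w/m)$, then conditions on that outcome and repeats for $B_{w'}$ and $w'$; finally, conditioning on all $v$-stubs being matched, it applies \Lem{lower-tail} inside the residual configuration model (on $2m-2d_v$ stubs) to get $D_w,D_{w'}\ge d_v$ with $\Omega(1)$ probability, and multiplies. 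Your approach instead writes $\Pr[Y_{v,w}Y_{v,w'}=1]\ge\Pr[\cW]-\Pr[\cW,\,D_w<d_v]-\Pr[\cW,\,D_{w'}<d_v]$, handles $\Pr[\cW]$ by the Bonferroni lower bound (the machinery of \Sec{careful}), and kills the two correction terms with the conditioned-reduced-ECM device already used in \Clm{arms}. Both arguments are valid; yours is more modular and reuses existing lemmas, and it makes transparent exactly why the tightness section needs $d_n<\sqrt m/4$ rather than $\sqrt m/2$ (so the second-order Bonferroni term stays a strict fraction of the first). The paper's argument is more self-contained, not depending on the sharper bookkeeping of \Sec{careful}, but is correspondingly more ad hoc.
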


\begin{proof} The random variable $Y_{v,w}Y_{v,w'}$ is $1$ if $(v,w)$, $(v,w')$
are edges and the degrees of $w$ and $w'$ are less than that of $v$.
As before, we will start
the matching process by matching stubs of $v$. We partition the stubs into two groups
denoted by $B_w$ and $B_{w'}$, and start by matching stubs in $B_w$.
We set $|B_w| = \lfloor d_v/3 \rfloor$. What is the probability that a stub in $B_w$
connects with a $w$-stub? This is at least $1 - (1-d_w/2m)^{\lfloor d_v/3 \rfloor}
= \Omega(d_v d_w/m)$.

Condition on any matching of the stubs in $B_w$. What is the probability that a stub in $B_{w'}$
matches with a $w'$-stub? Since $\min(|B_{w'}|,d_{w'}) \geq 2|B_w|$, this probability
is at least $1 - (1-d_{w'}/4m)^{\lfloor d_v/3 \rfloor} = \Omega(d_v d_{w'}/m)$.

Now condition on any matching of the $v$-stubs. The number of unmatched stubs connected
to $w$ is at least $d_{w}/2$ (similarly for $w'$). The remaining stubs connect
according to a standard configuration model. For the remaining degree sequence,
the total number of stubs is $2\tilde{m} = 2m - 2d_v$. For sufficiently large $m$,
$d_n \leq \sqrt{m}/4 \leq \sqrt{\tilde{m}}/2$. Hence, we can use \Lem{lower-tail}
(and a union bound) to argue that the probability that the final degrees
of $w$ and $w'$ are at least $d_v$ is $\Omega(1)$. Multiplying all the bounds
together, the probability $Y_{v,w}Y_{v,w'} = 1$ is $\Omega(d^2_vd_wd_{w'}/m^2)$.
\end{proof}

We prove \Clm{tight}.

\begin{proof} Note that when $\alpha > 2$, then $m = O(n)$.
We start with the arguments in the proof of \Lem{xv}. Applying \Clm{Y} for vertex $v$ such that $d_v > 3$,
\begin{eqnarray*} \EX[X_v(X_v - 1)] = \sum_w\sum_{w' \neq w} \EX[Y_{v,w} Y_{v,w'}] & \geq &
\sum_{\substack{w: \\ d_w \geq c d_v}} \sum_{\substack{w \neq w':\\ d_{w'}
\geq c d_v}} \EX[Y_{v,w} Y_{v,w'}] \\
& \gg & m^{-2} d^2_v
\sum_{\substack{w: \\ d_w \geq c d_v}} \sum_{\substack{w \neq w':\\ d_{w'}
\geq c d_v}} d_w d_{w'} \\
& \geq & m^{-2} d^2_v (\sum_{\substack{w: \\ d_w \geq c d_v}} d_w)^2 - m^{-2} d^2_v \sum_w d^2_w
\end{eqnarray*}
The latter part, summed over all $v$ is at most
$$m^{-2}(\sum_v d^2_v)^2 \leq m^{-2}(\max_v d_v \sum_v d_v)^2 \lessdot m$$
Now we focus on the former part. Choose $v$ so that $cd_v \leq d_n/2$, and let $2^r$ be the largest
power of $2$ greater than $cd_v$. (Note that $r \leq \log_2 d_n - 1$.)
We bound $\sum_{w: d_w \geq c d_v} d_w \geq \sum_{w: d_w \geq 2^r} d_w
\gg \sum_{k = r}^{\log_2 d_n-1} 2^k n/2^{k(\alpha-1)}$.
This is $\sum_{k = r}^{\log_2 d_n-1} n/2^{k(\alpha-2)}$, which is convergent
when $\alpha > 2$. Hence, it is at least $\Omega(n2^{-r(\alpha-2)}) = \Omega(nd^{-(\alpha-2)}_v)$.

We sum over all (appropriate $v$).
\begin{eqnarray*} \sum_{v: 3 < d_v \leq d_n/2c} m^{-2} d^2_v (\sum_{\substack{w: \\ d_w \geq c d_v}} d_w)^2
& \gg & (n/m)^2  \sum_{v: 3 < d_v \leq d_n/2c} d^2_v d^{-(2\alpha-4)}_v\\
& = & (n/m)^2  \sum_{v: 3 < d_v \leq d_n/2c} d^{6-2\alpha}_v
\gg (n/m)^2 \sum_{k = 2}^{\lfloor \log_2n - \log_2(2c)\rfloor} n 2^{k(7-3\alpha)}
\end{eqnarray*}
When $\alpha < 7/3$, the sum is divergent. Noting that $m = \Theta(n)$, we bound by $\Omega(nd^{7-3\alpha}_n)$. Overall, we lower bound the running time \algo{} by
$\sum_{v: 3 < 	d_v \leq d_n/2c} \EX[X_v(X_v - 1)] $, which is $\Omega(nd^{7-3\alpha}_n - m)$.
For $\alpha < 7/3$, this is $\Omega(nd^{7-3\alpha}_n)$, matching the upper bound in \Cor{power}.

\end{proof}

\section{The running time of \algo{} for Chung-Lu graphs} \label{app:cl}

\begin{theorem} \label{thm:Acohen} Consider a Chung-Lu graph distribution
with $n$ vertices over a degree distribution $f_1, f_2, \ldots, f_n$.
The expected running time of \algo{} is given by
$O(m + n(\sum_v {d_v}^{4/3})^3)$.
\end{theorem}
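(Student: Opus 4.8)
The plan is to replay the arguments of \Sec{prob} and \Sec{cohen} almost verbatim, the point being that the Chung--Lu model is \emph{easier} than the ECM: the edge indicators $C_{v,w}$ over distinct pairs $\{v,w\}$ are genuinely independent Bernoulli variables with parameter $\min(d_vd_w/2m,1)$, so the ``niggling'' weak dependencies that complicated the ECM analysis simply disappear. I would keep the truncation hypothesis $d_n<\sqrt{m}/2$ from the main theorem, so that $\min(d_vd_w/2m,1)=d_vd_w/2m$ for every pair and nothing is clamped.

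First I would set up exactly the skeleton of \Sec{cohen}: let $X_v$ be the number of edges in $v$'s bucket, write $X_v=\sum_w Y_{v,w}$ with $Y_{v,w}\le C_{v,w}$, and bound the expected running time by $\EX[\sum_v X_v(X_v-1)]=\sum_v\sum_{w\ne w'}\EX[Y_{v,w}Y_{v,w'}]$. Everything then reduces to re-deriving the two estimates of \Clm{arms}. The trivial bound is immediate from independence: $\EX[Y_{v,w}Y_{v,w'}]\le \EX[C_{v,w}C_{v,w'}]=\Pr[C_{v,w}=1]\,\Pr[C_{v,w'}=1]\le d_v^2d_wd_{w'}/(2m)^2$, which replaces the stub-counting of \Clm{pair}. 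For the ``interesting'' bound I would first record the Chung--Lu analogue of \Lem{lower-tail}: $D_v=\sum_{w\ne v}C_{v,w}$ is a sum of independent indicators with $\EX[D_v]=d_v(1-d_v/2m)\ge(7/8)d_v$, so a standard multiplicative Chernoff bound gives $\Pr[D_v<d_v/2]\le\exp(-\Omega(d_v))$ directly, with no martingale machinery. Then, for $d_w<\delta d_v$, the event $Y_{v,w}Y_{v,w'}=1$ forces both edges present and $D_v\le D_w$; conditioning on $C_{v,w}=1$ and $C_{v,w'}=1$ pins down only two summands of $D_v$ and one of $D_w$ and leaves the rest independent, so the conditional means are still $\Theta(d_v)$ and $<\delta d_v$ respectively, and a Chernoff upper tail on $D_w$ together with the lower tail on $D_v$ gives $\Pr[D_v\le D_w\mid C_{v,w}=1,\, C_{v,w'}=1]\le\exp(-\Omega(d_v))$ once $\delta$ is a small enough constant and $d_v$ exceeds a constant (for bounded $d_v$ the factor $\exp(-\delta d_v)$ is already $\Theta(1)$, so nothing is needed). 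Multiplying by the trivial bound reproduces the second estimate of \Clm{arms}.

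With both parts of \Clm{arms} in hand, \Lem{xv} holds word for word: $\EX[X_v(X_v-1)]=O(\exp(-\delta d_v)d_v^2+m^{-2}d_v^2(\sum_{w:d_w\ge\delta d_v}d_w)^2)$. Summing over $v$ and applying the H\"older-type rearrangement from the proof of \Thm{sum-xv} (using $d_v^{2/3}\le\delta^{-2/3}d_w^{1/3}d_{w'}^{1/3}$ whenever $d_w,d_{w'}\ge\delta d_v$), the first terms contribute $O(n)$ and the second $O(m^{-2}(\sum_vd_v^{4/3})^3)$, so $\EX[\text{running time}]=O(n+m^{-2}(\sum_vd_v^{4/3})^3)$. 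Since every $d_v\ge1$ we have $n\le 2m$ and $m^{-2}=O(n)$, so this is at most $O(m+n(\sum_vd_v^{4/3})^3)$, which is the claimed bound.

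I do not expect a real obstacle here. The only mildly delicate point is the conditioning in the exponential bound, and this is precisely where Chung--Lu is cleaner than the ECM: conditioning on one or two independent edge indicators leaves all other indicators independent, and the degree concentration is a one-line Chernoff bound rather than \Lem{mart}/\Lem{lower-tail}. If one prefers to drop the truncation hypothesis entirely, one instead carries the $\min(\cdot,1)$ through the calculation: this only strengthens the lower tail of $D_v$ and weakens (in a harmless direction) the upper tail of $D_w$ used when $d_w<\delta d_v$, so the same estimates go through.
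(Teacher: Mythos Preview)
Your proposal is correct and essentially identical to the paper's own argument in Appendix~\ref{app:cl}: prove the Chung--Lu analogue of \Clm{arms} via independence and Chernoff-type concentration on $D_v$ and $D_w$ conditioned on the two fixed edges, then replay \Sec{cohen} verbatim. The only cosmetic differences are that the paper invokes Bernstein's inequality explicitly for the upper tail on $D_w$ and handles the small-$d_v$ regime by a separate $\EX[X_v^2]=O(1)$ claim, whereas you (correctly) absorb that regime into the trivial bound since $\exp(-\delta d_v)=\Theta(1)$ there.
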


We remind the reader that the Chung-Lu (CL) model involves
inserting edge $(i,j)$ with probability $d_i d_j/2m$ for all unordered
pairs $(i,j)$.
We need to prove \Clm{arms} for the Chung-Lu model. \Thm{Acohen} will
then follow directly using the arguments in \Sec{cohen}.

We first state Bernstein's inequality.

\begin{theorem} \label{thm:Abernstein} [Bernstein's inequality] Let $X_1, X_2, \ldots, X_k$
be zero-mean independent random variables. Suppose $|X_i| \leq M$ almost surely.
Then for all positive $t$,
$$ \Pr[\sum_{i=1}^k X_i > t] \leq \exp\Big(-\frac{t^2/2}{\sum_i \EX[X^2_i] + Mt/3}\Big) $$
\end{theorem}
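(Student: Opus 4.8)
The plan is to prove Bernstein's inequality by the standard exponential moment (Chernoff) method, controlling the moment generating function of each $X_i$ through the boundedness hypothesis $|X_i| \le M$. First I would apply Markov's inequality to the exponentiated sum: for any $\lambda > 0$,
$$\Pr\Big[\sum_{i=1}^k X_i > t\Big] \le e^{-\lambda t}\, \EX\Big[e^{\lambda \sum_i X_i}\Big] = e^{-\lambda t}\prod_{i=1}^k \EX[e^{\lambda X_i}],$$
where the product factorization uses independence of the $X_i$. Everything then reduces to bounding a single factor $\EX[e^{\lambda X_i}]$.

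To bound one factor, I would expand the exponential as a Taylor series and take expectations. Since $\EX[X_i] = 0$, the zeroth and first order terms contribute exactly $1$, leaving
$$\EX[e^{\lambda X_i}] = 1 + \sum_{j \ge 2} \frac{\lambda^j \EX[X_i^j]}{j!}.$$
The almost-sure bound $|X_i| \le M$ gives the pointwise inequality $|X_i|^j \le M^{j-2} X_i^2$, hence $|\EX[X_i^j]| \le M^{j-2}\EX[X_i^2]$ for every $j \ge 2$. Using the elementary factorial estimate $j! \ge 2\cdot 3^{j-2}$ ($j \ge 2$), each term of the series is dominated by $\tfrac{\lambda^2}{2}\EX[X_i^2](\lambda M/3)^{j-2}$, so for $\lambda < 3/M$ the tail sums to a convergent geometric series:
$$\sum_{j \ge 2} \frac{\lambda^j \EX[X_i^j]}{j!} \le \frac{\EX[X_i^2]\,\lambda^2/2}{1 - \lambda M/3}.$$
Applying $1 + x \le e^x$ then yields $\EX[e^{\lambda X_i}] \le \exp\!\big(\tfrac{\EX[X_i^2]\lambda^2/2}{1-\lambda M/3}\big)$.

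Combining these factor bounds and writing $\sigma^2 = \sum_i \EX[X_i^2]$, I would obtain
$$\Pr\Big[\sum_i X_i > t\Big] \le \exp\Big(-\lambda t + \frac{\sigma^2 \lambda^2/2}{1 - \lambda M/3}\Big)$$
for every $\lambda \in (0, 3/M)$. The final step is to optimize the exponent over $\lambda$. Rather than differentiate, I would substitute the (near-optimal) choice $\lambda = t/(\sigma^2 + Mt/3)$, check that it lies in $(0,3/M)$, and verify by direct algebra that $1 - \lambda M/3 = \sigma^2/(\sigma^2 + Mt/3)$, so that the exponent collapses to exactly $-\tfrac{t^2/2}{\sigma^2 + Mt/3}$, which is the claimed bound.

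The main point requiring care — the step I expect to be the crux — is the geometric-series estimate for the higher moments. One must combine the moment bound $|\EX[X_i^j]| \le M^{j-2}\EX[X_i^2]$ with a factorial lower bound sharp enough to produce the precise denominator constant $1/3$; a cruder bound gives a valid but weaker inequality with the wrong constant. The equality cases $j=2,3$ of $j! \ge 2\cdot 3^{j-2}$ are exactly what pins the constant, and everything downstream — in particular the clean closed form $\sigma^2 + Mt/3$ — depends on this choice. The remaining manipulations (the Chernoff step, the series resummation, and the final substitution for $\lambda$) are routine.
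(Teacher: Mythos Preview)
Your proof is correct and is the standard exponential-moment derivation of Bernstein's inequality. Note, however, that the paper does not supply its own proof of this statement: it is quoted as a classical inequality and then invoked as a black box in the tail bounds for the Chung--Lu model. So there is nothing to compare against; your argument simply fills in what the paper takes for granted.
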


We now prove some tail bounds about degrees of vertices. The basic form of these statements
is the probability that degree of vertex $v$ deviates by a constant factor of $d_v$
is $\exp(-\Omega(d_v))$. We state in terms of conditional events for easier
application later. We use $\beta$ to denote a sufficiently small constant.

\begin{claim} \label{clm:Aupper-tail} Let $d \geq 2$. Suppose $v$ is a vertex such that $d_v \leq d$ and $e,e'$ be two
pairs. Let $\cE$ be the event that $e,e'$ are present, and $D_v$ be the random variable
denoting the degree of $v$.  For sufficiently small constant $\beta$,
$$ \Pr[D_v > 3d | \cE] < \exp(-\beta d) $$
\end{claim}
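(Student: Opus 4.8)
The plan is to exploit the feature that distinguishes the Chung--Lu model from the ECM: the edge indicators across all pairs are \emph{mutually independent}. Write $D_v = \sum_{u \neq v} B_{v,u}$, where $B_{v,u}$ is the Bernoulli indicator that edge $(v,u)$ is inserted, with $\Pr[B_{v,u}=1] = d_v d_u/2m$; by the truncation $d_n < \sqrt{m}/2$ this probability is genuinely at most $1$, and the $B_{v,u}$ are independent. The mean is controlled by $\EX[D_v] = \tfrac{d_v}{2m}\sum_{u\neq v} d_u \leq d_v \leq d$.

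The first step is to dispose of the conditioning on $\cE$. Since the full family of edge indicators is independent, conditioning on the two indicators corresponding to $e$ and $e'$ being $1$ leaves the joint law of all remaining indicators unchanged. Hence, conditionally on $\cE$, $D_v$ has the same distribution as $c + \sum_{u:\,(v,u)\notin\{e,e'\}} B_{v,u}$, where $c\in\{0,1,2\}$ counts how many of $e,e'$ are incident to $v$. Because $d\geq 2$ we have $c\leq 2\leq d$ and also $d_v \le d$, so the event $\{D_v > 3d\}$ (under $\cE$) is contained in the unconditional event $\{\sum_{u\neq v} B_{v,u} > 2d\}$. It therefore suffices to bound $\Pr[\sum_{u\neq v} B_{v,u} > 2d]$ for independent Bernoullis with mean $\mu = \EX[\sum_u B_{v,u}] \leq d$.

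The last step is a routine application of Bernstein's inequality (\Thm{Abernstein}) to the centered variables $X_u = B_{v,u} - \EX[B_{v,u}]$: independent, zero-mean, bounded by $M=1$, with $\sum_u \EX[X_u^2] \leq \sum_u \EX[B_{v,u}] = \mu \leq d$. The required deviation is $t = 2d - \mu \geq d$. The Bernstein exponent $t^2/2 \big/ (\sigma^2 + t/3)$ is increasing in $t$ and decreasing in $\sigma^2$, so substituting $t\geq d$ and $\sigma^2 \leq d$ bounds it below by $\tfrac{d^2/2}{d + d/3} = \tfrac{3d}{8}$. This gives $\Pr[D_v > 3d\mid \cE] \leq \exp(-3d/8)$, so any $\beta \leq 3/8$ works.

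I do not expect a genuine obstacle here: the only point requiring care is the conditioning on $\cE$, and in the Chung--Lu model it is painless precisely because the edges are fully independent --- in sharp contrast with the ECM analysis of \Sec{prob}, where the analogous step needed the martingale-type tail bound of \Lem{mart}. The same short argument, with the roles of the two tails reversed, also yields the lower-tail statements needed to transfer \Clm{arms}, and hence \Thm{Acohen}, to the Chung--Lu setting.
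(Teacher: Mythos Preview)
Your proof is correct and follows essentially the same route as the paper: isolate the (at most two) edge indicators fixed by $\cE$, use full independence of the Chung--Lu edges to drop the conditioning on the remaining sum, and apply Bernstein's inequality to the centered Bernoullis with $\sigma^2\le d$ and deviation $t\ge d$, arriving at the same exponent $3d/8$. The only cosmetic difference is that the paper bounds the partial sum $\hat D_v=\sum_{h\in\delta(v)\setminus\{e,e'\}}C_h$ directly, whereas you pass back to the full sum $\sum_{u\neq v}B_{v,u}$ via the monotone containment $\{\text{partial sum}>2d\}\subseteq\{\text{full sum}>2d\}$; this is harmless and yields the identical bound.
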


\begin{proof} All edges are inserted independently. So the occurrence of edge $e'' \neq e,e'$ is completely
independent of $\cE$. Let $\delta(v)$ be the set of all pairs involving $v$
and $\hat{\delta}(v) = \delta(v) \setminus \{e,e'\}$.
We express $D_v = \sum_{h \in \delta(v)} C_h$, where $C_{h}$ is the indicator
random variable for edge $h$ being present. Let $\hat{D}_v = \sum_{h \in \hat{\delta{v}}} C_h$.
Note that $\EX[\hat{D}_v] \leq \EX[D_v] = d_v \leq d$. Set $C'_{h} = C_{h} - \EX[C_{h}]$,
so
$$ \Pr[\hat{D}_v - \EX[\hat{D}_v] > d] = \Pr[\sum_{h \in \hat{\delta}(v)} (C_h - \EX[C_h]) > d] =
\Pr[\sum_{h \in \hat{\delta}(v)} C'_h > d] $$
We wish to apply Bernstein's inequality to the $C'_h$ random variables.
Observe that $\EX[C'_{h}] = 0$, and $|C'_{h}| \leq 1$. Setting $\EX[C_{h}] = \mu$, note that
$$ \EX[(C'_{h})^2] = \EX[(C_{h} - \mu)^2] = \EX[C^2_{h}] - \mu\EX[C_{h}] + \mu^2 = \EX[C_{h}].$$
So $\sum_{h \in \hat{\delta}(v)} \EX[(C'_{h})^2] =$  $\sum_{h \in \hat{\delta}(v)} \EX[C_{h}] = \EX[\hat{D}_v]$
$\leq d$.
By Bernstein's inequality (\Thm{Abernstein}),
\begin{eqnarray*} \Pr[\hat{D}_v - \EX[\hat{D}_v] > d]  = \Pr[\sum_{h \in \hat{\delta}(v)} C'_h > d] & \leq & \exp\Big(-\frac{d^2/2}{\sum_{h \in \hat{\delta}(v)} \EX[(C'_{h})^2] + d/3}\Big)  \\
& \leq & \exp\Big(-\frac{d^2/2}{d + d/3}\Big) = \exp(-3d/8)
\end{eqnarray*}
None of these random variables depend on the event $\cE$, so we get that
$\Pr[\hat{D}_v - \EX[\hat{D}_v] > d \ | \ \cE] \leq \exp(-3d/8)$.
Suppose $\hat{D}_v \leq \EX[\hat{D}_v] + d \leq 2d$.
We always have $D_v \leq \hat{D}_v + 2$ and hence $D_v \leq 3d$
(using the bound that $d \geq 2$). Hence, $\Pr[D_v > 3d | \cE] < \exp(-3d/8)$.
We only require $\beta < 3/8$.
\end{proof}

\begin{claim} \label{clm:Alower-tail} Suppose $v$ is a vertex such that $d_v \geq 4$ and $e,e'$ be two
pairs. Let $\cE$ be the event that $e,e'$ are present, and $D_v$ be the random variable
denoting the degree of $v$. For sufficiently small constant $\beta$,
$$ \Pr[D_v < d_v/3 | \cE] < \exp(-\beta d_v) $$
\end{claim}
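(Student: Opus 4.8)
The plan is to prove \Clm{Alower-tail} as a lower-tail mirror of \Clm{Aupper-tail}, using the same decomposition of $D_v$ and invoking Bernstein's inequality (\Thm{Abernstein}) in the other direction. Write $D_v = \sum_{h \in \delta(v)} C_h$, where $\delta(v)$ is the set of all pairs containing $v$ and $C_h$ is the independent indicator that pair $h$ becomes an edge; set $\hat{\delta}(v) = \delta(v) \setminus \{e,e'\}$, $\hat{D}_v = \sum_{h \in \hat{\delta}(v)} C_h$, and $C'_h = C_h - \EX[C_h]$. Since all edges of the Chung-Lu model are inserted independently, $\hat{D}_v$ is independent of the event $\cE$ that $e,e'$ are present. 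Conditioned on $\cE$, each pair of $\{e,e'\}$ lying in $\delta(v)$ contributes deterministically to $D_v$, so $D_v \geq \hat{D}_v$; hence $\Pr[D_v < d_v/3 \mid \cE] \leq \Pr[\hat{D}_v < d_v/3 \mid \cE] = \Pr[\hat{D}_v < d_v/3]$, and it suffices to bound the unconditional lower tail of $\hat{D}_v$.

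The second step is to pin down $\EX[\hat{D}_v]$. We have $\EX[D_v] = \sum_{u \neq v} d_v d_u/(2m) = d_v(1 - d_v/2m)$, and passing to $\hat{D}_v$ removes at most two terms, each of size $\EX[C_h] = d_v d_u/(2m) \leq d_v d_n/(2m)$. Under the standing truncation $d_n < \sqrt{m}/2$, the hypothesis $d_v \geq 4$ forces $m > 64$, which makes $d_v/2m$ and $d_v d_n/m$ small fractions of $1$; combining these gives $\EX[\hat{D}_v] \geq (7/8)d_v$ (any constant fraction strictly larger than $1/3$ would do). We also have $|C'_h| \leq 1$ and $\sum_{h \in \hat{\delta}(v)} \EX[(C'_h)^2] = \sum_{h \in \hat{\delta}(v)} \EX[C_h](1-\EX[C_h]) \leq \EX[\hat{D}_v] \leq d_v$.

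The final step is to apply \Thm{Abernstein} to the zero-mean independent variables $\{-C'_h : h \in \hat{\delta}(v)\}$ with $t := \EX[\hat{D}_v] - d_v/3 = \Omega(d_v)$, obtaining
\[
\Pr[\hat{D}_v < d_v/3] = \Pr\Big[\sum_{h \in \hat{\delta}(v)} (-C'_h) > t\Big] \leq \exp\Big(-\frac{t^2/2}{\EX[\hat{D}_v] + t/3}\Big) \leq \exp(-\beta d_v),
\]
valid for a sufficiently small absolute constant $\beta$ because both numerator and denominator of the exponent are $\Theta(d_v)$. Together with the conditioning step this gives $\Pr[D_v < d_v/3 \mid \cE] < \exp(-\beta d_v)$.

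I expect the only genuine (though still routine) point to be the second step: verifying that conditioning on two prescribed edges cannot drag $\EX[\hat{D}_v]$ down to within a $(1/3)$-factor of $d_v$, so that the gap $t$ is $\Omega(d_v)$ rather than sub-constant. This is exactly where the degree truncation $d_n < \sqrt{m}/2$ is needed — it simultaneously keeps $\EX[D_v]$ close to $d_v$ and makes the two removed edge-probabilities negligible — and it is the only place the assumption $d_v \geq 4$ enters. Everything else is a verbatim lower-tail analogue of the computation in \Clm{Aupper-tail}.
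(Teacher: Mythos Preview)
The proposal is correct and follows essentially the same route as the paper: isolate $\hat{D}_v$ by dropping the two conditioned pairs, use its independence from $\cE$ to reduce to an unconditional lower-tail bound, and then apply a concentration inequality. The only cosmetic differences are that the paper uses the multiplicative Chernoff bound in place of Bernstein and obtains the cruder estimate $\EX[\hat{D}_v] \geq d_v/2$ directly from $\EX[D_v] \approx d_v$ together with $d_v \geq 4$ (never invoking the truncation $d_n < \sqrt{m}/2$); the two arguments are otherwise interchangeable.
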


\begin{proof} This proof is almost identical to the previous one. Again, we express $D_v = \sum_{h \in \delta(v)} C_h$, where $C_{h}$ is the indicator
random variable for edge $h$ being present. Let $\hat{D}_v = \sum_{h \in \hat{\delta{v}}} C_h$.
We have $\hat{D}_v \geq D_v - 2$, so $\EX[\hat{D}_v] \geq \EX[D_v] - d_v/2$ $= d_v/2$ (using the bound $d_v \geq 4$).
Applying a multiplicative Chernoff bound to $\hat{D}_v$,
$$ \Pr[\hat{D}_v < 2\EX[\hat{D}_v]/3] < \exp(-d_v/36) $$
Since $\hat{D}_v$ is completely independent of $\cE$, we can condition
on $\cE$ to get the same bound. Suppose $D_v < d_v/3$. Since $D_v \geq \hat{D}_v$
and $d_v \leq 2\EX[\hat{D}_v]$, we get $\hat{D}_v < 2\EX[\hat{D}_v]/3$.
So the even $D_v < d_v/3 | \cE$ is contained in $\hat{D}_v < 2\EX[\hat{D}_v]/3 | \cE$,
completing the proof. We require $\beta < 1/36$.
\end{proof}

Finally, we need a simple claim about the second moment of sums of independent
random variables.
\begin{claim}\label{clm:Aindicator-square}
Let $X = \sum_i X_i$ be a sum of independent positive random variables with
$X_i = O(1)$ for all $i$ and $\EX[X] = O(1)$. Then $\EX[X^2] = O(1)$.
\end{claim}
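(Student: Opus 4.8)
The plan is a completely routine second-moment expansion, exploiting independence and the uniform boundedness. Let $M$ be an absolute constant with $X_i \le M$ almost surely for every $i$ (this is what $X_i = O(1)$ means, uniformly in $i$), and recall $X_i \ge 0$. Write
\[
\EX[X^2] = \EX\Big[\Big(\sum_i X_i\Big)^2\Big] = \sum_i \EX[X_i^2] + \sum_{i \ne j} \EX[X_i X_j].
\]
By independence, $\EX[X_i X_j] = \EX[X_i]\EX[X_j]$ for $i \ne j$, so the cross-term sum is $\sum_{i\ne j}\EX[X_i]\EX[X_j] \le \big(\sum_i \EX[X_i]\big)^2 = \EX[X]^2 = O(1)$.

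For the diagonal, I would use $X_i^2 \le M X_i$ (valid since $0 \le X_i \le M$), hence $\EX[X_i^2] \le M\,\EX[X_i]$, and therefore $\sum_i \EX[X_i^2] \le M \sum_i \EX[X_i] = M\,\EX[X] = O(1)$. Adding the two bounds gives $\EX[X^2] = O(1)$, with the hidden constant depending only on $M$ and the constant hidden in $\EX[X] = O(1)$.

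There is essentially no obstacle here; the only point that needs a moment's care is that both $O(1)$ hypotheses are genuinely uniform (the bound $M$ on $X_i$ does not depend on $i$, and $\EX[X]$ is bounded uniformly in whatever asymptotic parameter — here the number of nodes $n$ — is implicit), so that the final constant is legitimately absolute. Positivity of the $X_i$ is used both to get $X_i^2 \le M X_i$ and to ensure the cross terms are nonnegative so that bounding them by $\EX[X]^2$ loses nothing qualitatively.
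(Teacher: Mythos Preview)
Your proof is correct and is essentially identical to the paper's: both expand $\EX[X^2]$ into diagonal and off-diagonal sums, use independence to factor the cross terms and bound them by $(\sum_i \EX[X_i])^2 = \EX[X]^2$, and bound the diagonal via $X_i^2 \le M X_i$ (the paper writes this step as $\EX[X_i^2] = O(\EX[X_i])$). Your added remarks about uniformity of the $O(1)$ constants and the role of positivity only make explicit what the paper leaves implicit.
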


\begin{proof}
By linearity of expectation,
\begin{multline*}
\EX\left[X^2\right]  = \EX\Big[\big(\sum_i X_i\big)^2\Big] =
\sum_i \EX\left[X_i^2\right] + 2\sum_{i < j} \EX[X_i] \EX[X_j] \\
\leq  \sum_i O\left(\EX[{X_i}]\right) + \Big(\sum_i \EX[X_i]\Big)^2 = O(1).
\end{multline*}
\end{proof}

We prove the analogue of \Clm{arms}.

\begin{claim} \label{clm:Aarms} Consider vertices $v,w,w'$ ($w \neq w'$).
\begin{itemize}
\item If $d_v \leq 4$, then $\EX[X^2_v] = O(1)$.
\item $\EX[Y_{v,w}Y_{v,w'}] \leq d^2_vd_wd_{w'}/4m^2$.
\item If $d_w \leq d_v/10$ and $d_v \geq 4$, then $\EX[Y_{v,w}Y_{v,w'}] \leq 2\exp(-\beta d_v) d^2_vd_wd_{w'}/4m^2$.
\end{itemize}
\end{claim}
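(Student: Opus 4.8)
The plan is to reduce all three bullets to the full independence of edge indicators in the Chung-Lu model, combined with the conditional degree tail bounds \Clm{Aupper-tail} and \Clm{Alower-tail} and the second-moment estimate \Clm{Aindicator-square}; the skeleton mirrors the proof of \Clm{arms}, but the erased-configuration-model dependencies disappear. Throughout I would write $C_{v,w}$ for the indicator that the pair $(v,w)$ is inserted, so $\Pr[C_{v,w}=1]=d_vd_w/2m$ and distinct pairs are independent, and I would use $Y_{v,w}\le C_{v,w}$ together with $X_v=\sum_{w}Y_{v,w}\le\sum_{w}C_{v,w}=D_v$.

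For the first bullet, with $d_v\le 4$, bound $\EX[X_v^2]\le\EX[D_v^2]$; here $D_v=\sum_{w\ne v}C_{v,w}$ is a sum of independent $\{0,1\}$ variables with $\EX[D_v]=d_v(2m-d_v)/2m\le d_v\le 4$, so \Clm{Aindicator-square} gives $\EX[D_v^2]=O(1)$. For the second bullet, $Y_{v,w}Y_{v,w'}\le C_{v,w}C_{v,w'}$ and the two indicators are independent because $(v,w)\ne(v,w')$, hence $\EX[Y_{v,w}Y_{v,w'}]\le(d_vd_w/2m)(d_vd_{w'}/2m)=d_v^2d_wd_{w'}/4m^2$.

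The third bullet is the real content. If $Y_{v,w}Y_{v,w'}=1$, then both pairs are present --- call this event $\cE$, so $\Pr[\cE]=d_v^2d_wd_{w'}/4m^2$ by independence --- and moreover the edge $(v,w)$ lies in $v$'s bin, which (the bin of an edge depends only on the actual endpoint degrees) forces $D_v\le D_w$. Assume first $d_v\ge 20$, so $d_v/10\ge 2$. Using $d_w\le d_v/10$ I get the slack $3d_w\le 3d_v/10<d_v/3$, whence $\{D_v\le D_w\}\subseteq\{D_v<d_v/3\}\cup\{D_w>3d_v/10\}$. Conditioning on $\cE$ and applying \Clm{Alower-tail} to $v$ (valid since $d_v\ge 4$) and \Clm{Aupper-tail} to $w$ with threshold $d=d_v/10\ge d_w$, each of the two conditional probabilities is at most $\exp(-\beta_0 d_v/10)$ for the constant $\beta_0$ of those lemmas; a union bound, after multiplying through by $\Pr[\cE]$, gives $\EX[Y_{v,w}Y_{v,w'}]\le 2\exp(-\beta_0 d_v/10)\cdot d_v^2d_wd_{w'}/4m^2$, which is the claim after setting $\beta=\beta_0/10$. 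The leftover range $4\le d_v<20$ needs nothing new: there $2\exp(-\beta d_v)\ge 1$ once $\beta$ is small enough (e.g.\ $\beta\le(\ln 2)/20$, which is automatic since $\beta_0<1/36$), so the second bullet already dominates.

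The step I expect to require the most care is purely bookkeeping: \Clm{Aupper-tail} and \Clm{Alower-tail} are stated conditionally on exactly ``$e,e'$ present,'' so I must keep the bin-placement event $\{D_v\le D_w\}$ expressed solely through the degrees of $v$ and $w$, and I must check that the numeric constants ($1/10$, $1/3$, $3/10$) genuinely make the two ``bad'' events cover $\{D_v\le D_w\}$ and that a single sufficiently small $\beta$ satisfies the hypotheses of both tail lemmas and the $4\le d_v<20$ case at once. With \Clm{Aarms} in hand, \Thm{Acohen} then follows by rerunning the argument of \Sec{cohen} (\Lem{xv} through \Thm{sum-xv}) with \Clm{Aarms} substituted for \Clm{arms}.
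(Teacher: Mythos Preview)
Your proposal is correct and follows essentially the same route as the paper: bound $X_v$ by $\sum_w C_{v,w}$ and invoke \Clm{Aindicator-square} for the first bullet, use independence of $C_{v,w}$ and $C_{v,w'}$ for the second, and for the third bullet cover $\{D_v\le D_w\}$ by $\{D_v<d_v/3\}\cup\{D_w>3d_v/10\}$ and apply \Clm{Alower-tail} and \Clm{Aupper-tail} conditionally on $\cE$. In fact you are more careful than the paper's own write-up: the paper silently writes the resulting bound as $2\exp(-\beta d_v)$ without flagging that \Clm{Aupper-tail} only yields $\exp(-\beta_0 d_v/10)$ (so one must rename $\beta$), and it does not explicitly dispose of the range $4\le d_v<20$ where the hypothesis $d\ge 2$ of \Clm{Aupper-tail} is not available --- both points you handle.
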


\begin{proof}  Defining $\hat{X}_v = \sum_w C_{v,w}$, we have $X_v \leq \hat{X}_v$.
Since these are all positive random variables, $X^2_v \leq \hat{X}^2_v$.
Applying \Clm{Aindicator-square}, $\EX[\hat{X}^2_v] = O(1)$. That completes
the first part.

For the second part, we use the trivial bound of $Y_{v,w}Y_{v,w'} \leq C_{v,w}C_{v,w'}$.
Taking expectations and using independence, $\EX[Y_{v,w}Y_{v,w'}] \leq C_{v,w}C_{v,w'} = d^2_vd_wd_{w'}/4m^2$.

The third case is really the interesting one. The quantity $\EX[Y_{v,w}Y_{v,w'}]$
is the probability that both $Y_{v,w}$ and $Y_{v,w'}$ are $1$. For this to
happen, we definitely required both $(v,w)$ and $(v,w')$ to be present as edges.
Call this event $\cE$. We also require (at the very least) the degree of $v$
to be at most the degree of $w$ (otherwise the edge $(v,w)$ will not
be put in $v$'s bin.) Call this event $\cF$. The event $Y_{v,w}Y_{v,w'} = 1$
is contained in $\cE \cap \cF$. Using conditional probabilities,
$\Pr(\cE \cap \cF) = \Pr(\cF | \cE) \Pr(\cE)$. Note that $\Pr(\cE) = d^2_vd_wd_{w'}/4m^2$.

Let $D_v, D_w$ denote
the degrees of $v$ and $w$. Let $\cF_v$ denote the event $D_v < d_v/3$
and $\cF_w$ denote event $D_w > 3d_v/10$. If neither of these events
happens, then $D_w \leq 3d_v/10 < d_v/3 \leq D_v$. So $\cF$ cannot happen.
Hence, $(\cF | \cE)$ is contained in $(\cF_v \cup \cF_w | \cE)$.
By the union bound, $\Pr(\cF_v \cup \cF_w | \cE) \leq \Pr(\cF_v | \cE) + \Pr(\cF_w | \cE)$.
Applying \Clm{Alower-tail} to the latter and \Clm{Aupper-tail} to the former,
we bound $\Pr(\cF | \cE) \leq 2\exp(-\beta d_v)$.
\end{proof}

As mentioned earlier, we can now execute the arguments in \Sec{cohen} to prove \Thm{Acohen}.

\end{document}